\newcommand\cD{\mathcal{D}}
\newcommand\cC{\mathcal{C}}
\newcommand\cA{\mathcal{A}}
\renewcommand{\paragraph}[1]{{\medskip\par\noindent\bf #1.}}
\newcommand\ptsubsection[1]{\subsection{{#1}}}
\newcommand{\figurewidthA}{0.475\columnwidth}
\let\myPushQED=\pushQED
\let\myPopQED=\popQED
\newcommand{\myignore}[1]{}
\newenvironment{proof*}
  {\let\pushQED=\myignore\begin{proof}\let\pushQED=\myPushQED}
  {\def\popQED{}\end{proof}\let\popQED=\myPopQED}
\newenvironment{description*}%
  {\vspace{-1ex}\begin{description}%
    \setlength{\itemsep}{-0.5ex}%
    \setlength{\parsep}{0pt}}%
  {\end{description}}
\newenvironment{itemize*}%
  {\vspace{-1ex}\begin{itemize}%
    \setlength{\itemsep}{-0.5ex}%
    \setlength{\parsep}{0pt}}%
  {\end{itemize}}
\newenvironment{enumerate*}%
  {\vspace{-1ex}\begin{enumerate}%
    \setlength{\itemsep}{-0.5ex}%
    \setlength{\parsep}{0pt}}%
  {\end{enumerate}}
 \gdef\xxxmark{%
   \expandafter\ifx\csname @mpargs\endcsname\relax 
     \expandafter\ifx\csname @captype\endcsname\relax 
       \marginpar{xxx}
     \else
       xxx 
     \fi
   \else
     xxx 
   \fi}
 \gdef\xxx{\@ifnextchar[\xxx@lab\xxx@nolab}
 \long\gdef\xxx@lab[#1]#2{{\bf [\xxxmark #2 ---{\sc #1}]}}
 \long\gdef\xxx@nolab#1{{\bf [\xxxmark #1]}}
\newtheorem{theorem}{Theorem}
\newtheorem{lemma}[theorem]{Lemma}
\newtheorem{corollary}[theorem]{Corollary}
\newtheorem{claim}[theorem]{Claim}
\newtheorem{fact}[theorem]{Fact}
\newtheorem{observation}[theorem]{Observation}
\newtheorem{proposition}[theorem]{Proposition}
\newcommand{\eps}{\varepsilon}
\newcommand{\twodots}{\mathinner{\ldotp\ldotp}}
\newcommand{\proc}[1]{\textnormal{\scshape#1}}
\newcommand{\E}{\mathbf{E}}
\newcommand{\evt}{\mathcal{E}}
\newcommand{\calH}{\mathcal{H}}
\let\phi=\varphi
\newcommand{\Id}{\proc{Id}}
\newcommand{\Hashes}{\proc{Hashes}}
\newcommand{\Coll}{\proc{Coll}}
\renewcommand{\th}{\ifmmode{^{\textrm{th}}}\else{\textsuperscript{th}\ }\fi}
\newcommand\drop[1]{}
\newcommand\req[1]{(\ref{#1})}
\author{
     Mihai P\v{a}tra\c{s}cu \\ AT\&T Labs
\and Mikkel Thorup \\ AT\&T Labs
}
\begin{document}

\title{The Power of Simple Tabulation Hashing} 

\maketitle

\begin{abstract}
Randomized algorithms are often enjoyed for their simplicity, but the
hash functions used to yield the desired theoretical guarantees are
often neither simple nor practical. Here we show that the simplest
possible tabulation hashing provides unexpectedly strong guarantees.

The scheme itself dates back to Carter and Wegman (STOC'77).  Keys are
viewed as consisting of $c$ characters. We initialize $c$ tables $T_1,
\dots, T_c$ mapping characters to random hash codes. A key
$x=(x_1, \dots, x_c)$ is hashed to $T_1[x_1] \oplus \cdots \oplus
T_c[x_c]$, where $\oplus$ denotes xor.

While this scheme is not even 4-independent, we show that it provides
many of the guarantees that are normally obtained via higher
independence, e.g., Chernoff-type concentration, min-wise hashing for
estimating set intersection, and cuckoo hashing.
\end{abstract}

\setcounter{tocdepth}{1}
\tableofcontents

\newpage

\section{Introduction}

An important target of the analysis of algorithms is to
determine whether there exist \emph{practical} schemes, which enjoy
mathematical \emph{guarantees} on performance. 

Hashing and hash tables are one of the most common inner loops in
real-world computation, and are even built-in ``unit cost'' operations
in high level programming languages that offer associative
arrays. Often, these inner loops dominate the overall computation
time.  Knuth gave birth to the analysis of algorithms in 1963
\cite{knuth63linprobe} when he analyzed linear probing, the most
popular practical implementation of hash tables. Assuming a perfectly
random hash function, he bounded the expected number of
probes. However, we do not have perfectly random hash functions. The
approach of algorithms analysis is to understand when simple and
practical hash functions work well. The most popular
multiplication-based hashing schemes maintain the $O(1)$ running times
when the sequence of operations has sufficient randomness
\cite{mitzenmacher08hash}. However, they fail badly even for very
simple input structures like an interval of consecutive
keys~\cite{pagh07linprobe, patrascu10kwise-lb,thorup10kwise}, giving linear probing
an undeserved reputation of being non-robust.

On the other hand, the approach of algorithm design (which may still
have a strong element of analysis) is to construct
(more complicated) hash functions providing the desired mathematical
properties. This is usually done in the influential $k$-independence
paradigm of Wegman and Carter~\cite{wegman81kwise}. It is known that
5-independence is sufficient \cite{pagh07linprobe} and necessary
\cite{patrascu10kwise-lb} for linear probing. Then one can use the
best available implementation of 5-independent hash functions, the
tabulation-based method of \cite{thorup04kwise,thorup10kwise}. 

Here we analyze simple tabulation hashing. This scheme views a key
$x$ as a vector of $c$ characters $x_1, \dots, x_c$. For each
character position, we initialize a totally random table $T_i$, and
then use the hash function 
\vspace{-.2cm}
\[h(x) = T_1[x_1] \oplus \dots \oplus T_c[x_c].\] 
This is a well-known
scheme dating back at least to Wegman and
Carter~\cite{wegman81kwise}. From a practical view-point, tables $T_i$
can be small enough to fit in fast cache, and the function is probably
the easiest to implement beyond the bare multiplication. However, the
scheme is only 3-independent, and was therefore assumed to have weak
mathematical properties.  We note that if the keys are drawn from a
universe of size $u$, and hash values are machine words, the space
required is $O(c u^{1/c})$ words.  The idea is to make
this fit in fast cache. We also note that the hash values are bit strings,
so when we hash into bins, the number of bins is generally understood
to be a power of two.

The challenge in analyzing simple tabulation is the significant
dependence between keys. Nevertheless, we show that the scheme works
in some of the most important randomized algorithms, including linear
probing and several instances when $\Omega(\lg n)$-independence was
previously needed. We confirm our findings by experiments: simple
tabulation is competitive with just one 64-bit multiplication, and the
hidden constants in the analysis appear to be very acceptable in practice.

In many cases, our analysis gives the first provably good 
\emph{implementation} of
an algorithm which matches the algorithm's conceptual simplicity if
one ignores hashing.

\paragraph{Desirable properties}
We will focus on the following popular properties of truly random hash
functions.
\begin{itemize*}
\item The worst-case query time of chaining is $O(\lg n / \lg\lg n)$
  with high probability (w.h.p.). More generally, when distributing
  balls into bins, the bin load obeys Chernoff bounds.

\item Linear probing runs in expected $O(1)$ time per operation. Variance
and all constant moments are also $O(1)$.


\item Cuckoo hashing: Given two tables of size $m \ge (1+\eps) n$, it
  is possible to place a ball in one of two randomly chosen locations
  without \emph{any} collision, with probability $1- O(\frac{1}{n})$.

\item Given two sets $A,B$, we have $\Pr_h[\min h(A) = \min h(B)] =
  \frac{|A\cap B|}{|A\cup B|}$. This can be used to quickly estimate the
  intersection of two sets, and follows from a property called
  \emph{minwise independence}: for any $x\notin S$, $\Pr_h[x < \min
    h(S)] = \frac{1}{|S|+1}$.
\end{itemize*}

As defined by Wegman and Carter~\cite{wegman81kwise} in 1977, a family
$\calH = \{ h : [u] \to [m] \}$ of hash functions is $k$-independent
if for any distinct $x_1, \dots, x_k \in [u]$, the hash codes $h(x_1),
\dots, h(x_k)$ are independent random variables, and the hash code of
any fixed $x$ is uniformly distributed in $[m]$.

Chernoff bounds continue to work with high enough
independence~\cite{schmidt95chernoff}; for instance, independence
$\Theta(\frac{\lg n}{\lg\lg n})$ suffices for the bound on the maximum
bin load. For linear probing, 5-independence is sufficient
\cite{pagh07linprobe} and necessary \cite{patrascu10kwise-lb}. For
cuckoo hashing, $O(\lg n)$-independence suffices and at least
6-independence is needed~\cite{cohen-cuckoo}. While minwise
independence cannot be achieved, one can achieve $\eps$-minwise
independence with the guarantee $(\forall) x\notin S, \Pr_h[x < \min
  h(S)] = \frac{1\pm \eps}{|S|+1}$. For this, $\Theta(\lg
\frac{1}{\eps})$ independence is sufficient \cite{indyk01minwise} and
necessary \cite{patrascu10kwise-lb}. (Note that the $\eps$ is a bias so
it is a lower
bound on how well set intersection can be approximated, with any
number of independent experiments.)

The canonical construction of $k$-independent hash functions is a
random degree $k-1$ polynomial in a prime field, which has small
representation but $\Theta(k)$ evaluation time. 
Competitive implementations of polynomial hashing simulate arithmetic
modulo Mersenne primes via bitwise operations. 
Even so, tabulation-based hashing with $O(u^{1/c})$ space and $O(ck)$
evaluation time is significantly faster~\cite{thorup04kwise}. The linear
dependence on  $k$ is problematic, e.g., when $k\approx \lg n$. 

Siegel~\cite{siegel04hash} shows that a family with superconstant
independence but $O(1)$ evaluation time requires $\Omega(u^\eps)$
space, i.e.~it requires tabulation. He also gives a solution that uses
$O(u^{1/c})$ space, $c^{O(c)}$ evaluation time, and achieves
$u^{\Omega(1/c^2)}$ independence (which is superlogarithmic, at least
asymptotically). The construction is non-uniform, assuming a certain
small expander which gets used in a graph product. Dietzfelbinger and
Rink \cite{dietzfel09splitting} use universe splitting to obtain
similar high independence with some quite different costs. Instead of being
highly independent on the whole universe, their goal is to be highly
independent on an unknown but fixed set $S$ of size $n$. 
For some constant parameter $\gamma$, they tolerate an error probability of
$n^{-\gamma}$. Assuming no error, their hash function is highly independent
on $S$. The evaluation time is constant and the space is sublinear. 
For error probability $n^{-\gamma}$, each hash computation calls
$O(\gamma)$ subroutines, each of which evaluates its 
own degree $O(\gamma)$ 
polynomial. The price for a lower error tolerance is therefore a slower hash 
function (even if we only count it as constant time in theory).

While polynomial hashing may perform better than its independence suggests,
we have no positive example yet. On the tabulation front, we have one
example of a good hash function that is not formally $k$-independent:
cuckoo hashing works with an ad hoc hash function that combines space
$O(n^{1/c})$ and polynomials of degree $O(c)$
\cite{dietzfel03tabhash}.

\subsection{Our results}

Here we provide an analysis of simple tabulation showing that it has
many of the desirable properties above. For most of our applications,
we want to rule out certain obstructions with high probability. This
follows immediately if certain events are independent, and the
algorithms design approach is to pick a hash function guaranteeing
this independence, usually in terms of a highly independent hash
function.

Instead we here stick with simple tabulation with all its
dependencies.  This means that we have to struggle in each individual
application to show that the dependencies are not fatal. However, from
an implementation perspective, this is very attractive, leaving us
with one simple and fast scheme for (almost) all our needs.

In all our results, we assume the number of characters is
$c=O(1)$. The constants in our bounds will depend on $c$. Our results
use a rather diverse set of techniques analyzing the table
dependencies in different types of problems. For chaining and linear
probing, we rely on some concentration results, which will also be
used as a starting point for the analysis of min-wise hashing.
Theoretically, the most interesting part is the analysis for cuckoo
hashing, with a very intricate study of the random graph constructed by the
two hash functions.

\paragraph{Chernoff bounds}
We first show that simple tabulation preserves Chernoff-type
concentration:
\begin{theorem}   \label{thm:chernoff} Consider hashing $n$ balls into $m\ge n^{1-1/(2c)}$ bins by simple tabulation.
Let $q$ be an additional
\emph{query ball}, and define $X_q$ as the number of regular balls
that hash into a bin chosen as a function of $h(q)$. Let $\mu =
\E[X_q] = \frac{n}{m}$. The following probability bounds hold
for any constant $\gamma$:
\begin{align}
(\forall) \delta\leq 1: & 
\Pr[|X_q-\mu|>\delta \mu]< 2e^{-\Omega(\delta^2\mu)} + m^{-\gamma}
\label{eq:normal} \\
(\forall) \delta=\Omega(1): &
  \Pr[X_q>(1+\delta)\mu]<(1+\delta)^{-\Omega((1+\delta)\mu)} + m^{-\gamma}
\label{eq:Poisson}
\end{align}
With $m\leq n$ bins, every bin gets
\begin{equation}\label{eq:few-bins}
n/m\pm O\left(\sqrt{n/m}\log^c n\right).
\end{equation}
keys with probability $1-n^{-\gamma}$.
\end{theorem}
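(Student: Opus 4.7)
The plan is to prove Chernoff concentration by induction on the number of characters $c$, at each step reducing a $c$-character tabulation to a sum of independent $(c-1)$-character bin loads.

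By a symmetry argument, it suffices to bound $X_b = |\{x \in S : h(x) = b\}|$ for a fixed bin $b$: conditioning on $h(q) = b$ is equivalent to XOR-shifting each table $T_i$ by a fixed constant, which preserves the joint distribution of the remaining hashes. The statement with a bin chosen as a function of $h(q)$ then follows by averaging over $b$, and \req{eq:few-bins} follows by a union bound over all $m$ bins, the $m$ factor being absorbed into $n^{-\gamma}$ by taking a slightly larger $\gamma$ in the single-bin bound.

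The core step is the decomposition $h(x) = h'(x_1,\ldots,x_{c-1}) \oplus T_c[x_c]$, where $h'$ is a $(c-1)$-character simple tabulation hash. Partition the keys by their last character, $K_a = \{x \in S : x_c = a\}$, and let $Y_a$ be the number of keys of $K_a$ hashing to $b$. Once $T_1,\ldots,T_{c-1}$ are fixed (so $h'$ is determined), each $Y_a$ depends only on the single uniform value $T_c[a]$, so the $\{Y_a\}$ are mutually \emph{independent}. Concretely, $Y_a$ equals the load of the random bin $b \oplus T_c[a]$ under $h'$ restricted to $K_a$. Applying the inductive hypothesis to $h'$ on $K_a$ yields a high-probability uniform bound on every $Y_a$, after which $X_b = \sum_a Y_a$ is a sum of independent bounded variables with the correct mean $n/m$: a standard Chernoff bound then delivers \req{eq:normal}, while its heavy-tail variant for bounded summands delivers \req{eq:Poisson}.

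The main obstacle I anticipate is making the induction close cleanly. First, errors accumulate across the $c$ recursive levels: each level contributes an $m^{-\gamma}$ failure probability and, in the $m \le n$ regime, a concentration slack of order $\sqrt{\cdot}\,\log n$; these stack to produce the $\log^c n$ factor in \req{eq:few-bins}, and I would verify by induction on $c$ that a geometric loss in $\gamma$ per level can still be absorbed into the final $n^{-\gamma}$ by starting the induction with a large enough constant. Second, the hypothesis $m \ge n^{1-1/(2c)}$ must propagate to the sub-problems on the $K_a$, whose sizes are themselves random; I would handle characters $a$ with atypically large $|K_a|$ separately via the trivial deterministic bound $Y_a \le |K_a|$ and argue that such exceptional characters contribute negligibly to the tail. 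The base case $c=1$ is classical, since $h$ is then fully uniform and independent.
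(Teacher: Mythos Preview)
Your symmetry reduction from the query-dependent bin to a fixed bin is incorrect. Conditioning on $h(q)=b$ is \emph{not} the same as looking at a fixed bin unconditionally, because keys that share position-characters with $q$ have hashes that are correlated with $h(q)$. Concretely, take $c=2$, $m=2$, $q=(0,0)$, and $S=\{(0,1),(1,0),(1,1)\}$. Since $h(1,1)=h(0,0)\oplus h(0,1)\oplus h(1,0)$, one computes $\Pr[X_0=3]=1/8$ but $\Pr[X_q=3\mid h(q)=0]=1/4$. Your ``XOR-shift'' sends $T_i\mapsto T_i\oplus T_i[q_i]$, which forces $T'_i[q_i]=0$; the shifted tables are no longer uniform, so the joint law of the remaining hashes is \emph{not} preserved. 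The paper handles the query by a different device: it orders the position-characters so that $q$'s characters come first (Lemma~\ref{lem:group-size-q}), after which the later groups get independent random shifts that are genuinely independent of $h(q)$.

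Your induction on $c$ via partitioning on the last character also does not close for \eqref{eq:normal} and \eqref{eq:Poisson}. The sets $K_a=\{x\in S:x_c=a\}$ are deterministic (not random as you write), and nothing prevents some $|K_a|$ from being as large as $n$; then the hypothesis $m\ge |K_a|^{1-1/(2(c-1))}$ needed for the $(c-1)$-character inductive step can fail, and the ``trivial bound $Y_a\le|K_a|$'' is useless for Chernoff. The paper avoids this by \emph{not} splitting on a single coordinate: it constructs an ordering $\prec$ on \emph{all} position-characters (Lemma~\ref{lem:group-size}) so that every group $G_\alpha=\{x:\max_\prec x=\alpha\}$ has size at most $n^{1-1/c}$. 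This guarantees each group satisfies the ``many bins'' condition of Lemma~\ref{lem:manybins}, hence is $d$-bounded for a constant $d$ with probability $1-m^{-\gamma}$; the contributions $X_\alpha$ then have fixed conditional means $|G_\alpha|/m$, and a martingale Chernoff bound (Proposition~\ref{prop:chernoff-fix-mean}) finishes \eqref{eq:normal}--\eqref{eq:Poisson} in one shot, with no recursion on $c$. For \eqref{eq:few-bins} the paper \emph{does} use an induction, but again on the groups from this ordering rather than on a single coordinate, and it is this that makes the $\log^c n$ accumulation work out.
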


Contrasting standard Chernoff bounds (see, e.g.,
\cite{motwani95book}), Theorem \ref{thm:chernoff} can only provide
polynomially small probability, i.e.~at least $n^{-\gamma}$ for any
desired constant $\gamma$. In addition, the exponential dependence on
$\mu$ in \req{eq:normal} and \req{eq:Poisson} is reduced by a constant
which depends (exponentially) on the constants $\gamma$ and $c$. It is
possible to get some super polynomially small bounds with super
constant $\gamma$ but they are not as clean.  An alternative way to
understand the bound is that our tail bound depends exponentially on
$\eps \mu$, where $\eps$ decays to subconstant as we move more than
inversely polynomial out in the tail. Thus, our bounds are sufficient
for any polynomially high probability guarantee. However, compared to
the standard Chernoff bound, we would have to tolerate a constant
factor more balls in a bin to get the same failure probability.

By the union bound (\ref{eq:normal}) implies that with $m=\Theta(n)$ bins,
no bin receives more
than $O(\lg n / \lg\lg n)$ balls w.h.p. This is the first realistic
hash function to achieve this fundamental property. Similarly, for
linear probing with fill bounded below $1$, (\ref{eq:Poisson}) shows
that the longest filled interval is of length $O(\log n)$ w.h.p.

\paragraph{Linear probing}
Building on the above concentration bounds, we show that if the table
size is $m=(1+\eps) n$, then the expected time per operation is
$O(1/\eps^2)$, which asymptotically matches the bound of Knuth
\cite{knuth63linprobe} for a truly random function. In particular,
this compares positively with the $O(1/\eps^{13/6})$ bound of
\cite{pagh07linprobe} for 5-independent hashing.  

Our proof is a combinatorial reduction that relates the
performance of linear probing to concentration bounds. 
The results hold for any hash function with concentration similar to
Theorem \ref{thm:chernoff}. To illustrate the generality of 
the approach, we also improve the $O(1/\eps^{13/6})$ bound from
\cite{pagh07linprobe} for 5-independent hashing to the optimal
$O(1/\eps^2)$. This was raised as an open problem in \cite{pagh07linprobe}. 

For simple tabulation, we get quite strong concentration results
for the time per operation, e.g,, constant variance
for constant $\eps$. For contrast, with 5-independent hashing,
the variance is only known to be $O(\log n)$
\cite{pagh07linprobe,thorup10kwise}.

\paragraph{Cuckoo hashing}
In general, the cuckoo hashing algorithm fails iff the random
bipartite graph induced by two hash functions contains a component
with more vertices than edges. With truly random hashing, this happens
with probability $\Theta(\frac{1}{n})$. Here we study the random
graphs induced by simple tabulation, and obtain a rather unintuitive
result: the optimal failure probability is inversely proportional to
the \emph{cube root} of the set size.

\begin{theorem}\label{thm:cuckoo}
Any set of $n$ keys can be placed in two table of size $m=(1+\eps)$ by
cuckoo hashing and simple tabulation with probability $1 -
O(n^{-1/3})$. There exist sets on which the failure probability is
$\Omega(n^{-1/3})$.
\end{theorem}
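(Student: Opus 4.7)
The plan is to prove the matching upper and lower bounds separately. Recall the classical Pagh--Rodler characterization: cuckoo hashing on $S$ with hash functions $h_1,h_2$ succeeds iff the bipartite multigraph $G$ with vertex set $[m]\sqcup[m]$ and edge set $\{(h_1(x),h_2(x)):x\in S\}$ has every connected component $C$ satisfying $|E(C)|\le|V(C)|$. Failure is therefore witnessed by a minimal connected subgraph with strictly more edges than vertices --- topologically either two cycles joined by a path, two cycles sharing a vertex or path, or a theta graph. I will bound the probability of such an \emph{obstruction} existing in the random graph produced by simple tabulation.

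For the $O(n^{-1/3})$ upper bound, I would enumerate the constantly many topological types of minimal obstructions and, for each type with $k$ edges on $v\le k$ vertices, bound the expected number realized by the key set. The key tool for simple tabulation is a peeling lemma: given any set $T$ of keys, one can often order them $x^{(1)},x^{(2)},\dots$ so that each $x^{(i)}$ contains a character position whose value never appeared earlier; its table entry is then a fresh uniform value, so $h_j(x^{(i)})$ is independent of everything before. This gives the same per-collision factor of $1/m$ as truly random hashing --- \emph{unless} the key set resists peeling, which happens precisely for combinatorial grid-like subsets (keys arranged in an $\ell_1\times\cdots\times\ell_c$ box along some subset of positions). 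Summing over peelable obstructions reproduces the truly-random $O(1/n)$ bound; summing over the grid-structured obstructions, where peeling loses several collisions, yields the slower $O(n^{-1/3})$ rate that dominates.

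For the matching lower bound I would specialize to $c=2$ and take $S$ to be a combinatorial grid $A\times B$ with $|A|=|B|=\sqrt{n}$. On any $2\times 2$ sub-rectangle $\{(a,b),(a,b'),(a',b),(a',b')\}$, simple tabulation forces the identity
\[
h_k(a,b)\oplus h_k(a,b')\oplus h_k(a',b)\oplus h_k(a',b')=0 \qquad (k=1,2),
\]
so a single ``accidental'' collision among the four hashes automatically produces a second one on the opposite diagonal. Every such lucky sub-rectangle plants a 4-cycle in $G$ essentially for free. I would count the expected number of sub-rectangles whose forced 4-cycles combine with one extra incident edge into an excess-one obstruction, show that after optimizing the aspect ratio of the grid the first moment is $\Theta(n^{-1/3})$, and convert this to an $\Omega(n^{-1/3})$ failure probability by a Paley--Zygmund second-moment argument.

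The main obstacle is matching the exponent $1/3$ tightly. On the upper side I must verify that, after enumerating all obstruction topologies, no grid-structured configuration beats $n^{-1/3}$ --- a careful case analysis of how unpeelable key subsets can be laced into small subgraphs of excess one, balancing their combinatorial count against the number of collision equalities they impose. On the lower side I must engineer a single concrete set and bad event whose first moment is $\Theta(n^{-1/3})$ and whose variance is controlled enough for Paley--Zygmund; this is delicate precisely because the XOR identities that make the bad event possible also positively correlate its occurrences, and that correlation must be bounded rather than eliminated.
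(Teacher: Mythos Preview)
Your proposal diverges from the paper on both halves, and in each case the gap is substantive.

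\textbf{Lower bound.} The paper does \emph{not} use $c=2$ or a second-moment argument. It takes $c=3$ and $S=[n^{1/3}]^3$, then observes that cuckoo hashing fails whenever (i) three ``prefixes'' $a,b,c\in[n^{1/3}]^2$ satisfy $h_0(a)=h_0(b)=h_0(c)$ and (ii) two ``suffixes'' $x,y\in[n^{1/3}]$ satisfy $h_1(x)=h_1(y)$: the six keys $\{ax,ay,bx,by,cx,cy\}$ then form a $K_{2,3}$ in the cuckoo graph. Event (i) has probability $\Theta(1)$ by $3$-independence on two characters, and event (ii) has probability $\Theta(n^{-1/3})$ by a birthday calculation. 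No Paley--Zygmund is needed; the whole argument is five lines. Your $c=2$ grid route is harder and, more worryingly, it is not clear it reaches $\Omega(n^{-1/3})$ at all: a single $2\times2$ XOR-forced rectangle only produces a $4$-cycle (four edges on four vertices), which is \emph{not} an obstruction, so you still need further coincidences whose cost you have not computed. The ``optimize the aspect ratio'' step is where the exponent would have to appear, and you have not shown it does.

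\textbf{Upper bound.} The paper does \emph{not} do a first-moment enumeration over obstruction shapes. It uses an entropy/encoding argument: assuming cuckoo hashing failed, it walks along the obstruction $x_1,x_2,\dots$ and shows that the two random tables $T_1,\dots,T_c$ (for $h_0$ and $h_1$) can be encoded with a saving of $\tfrac{1}{3}\lg n - O(1)$ bits below their entropy, which immediately gives $\Pr[\text{fail}]=O(n^{-1/3})$. The savings come from a delicate mixture of mechanisms (``safe-strong'', ``safe-weak'', ``piggybacking'', ``odd-side saving'') combined across eight cases depending on when the three edges at the degree-$3$ vertex become determined along the walk. Your plan---peelable obstructions contribute $O(1/n)$, unpeelable ``grid'' obstructions contribute $O(n^{-1/3})$---is intuitively in the right direction but skips exactly the hard part. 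Minimal obstructions have unbounded size (only the \emph{topological} type is bounded, not $k$), and the unpeelable pieces are not simply ``grids'' sitting inside the obstruction: they interact with the walk structure in ways that govern how many hash equalities are genuinely independent. Getting the exponent to come out as $1/3$, rather than some weaker polynomial, is precisely what the paper's five pages of case analysis accomplish, and your sketch gives no indication of how a direct union bound would reproduce it.
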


Thus, cuckoo hashing and simple tabulation are an excellent
construction for a static dictionary. The dictionary can be built (in
linear time) after trying $O(1)$ independent hash functions w.h.p.,
and later every query runs in constant worst-case time with two
probes. We note that even though cuckoo hashing requires two
independent hash functions, these essentially come for the cost of one
in simple tabulation: the pair of hash codes can be stored
consecutively, in the same cache line, making the running time
comparable with evaluating just one hash function.

In the dynamic case, Theorem \ref{thm:cuckoo} implies that 
we expect $\Omega(n^{4/3})$ updates between failures
requiring a complete rehash with new hash functions.

Our proof involves a complex understanding of the intricate, yet not
fatal dependencies in simple tabulation. The proof is a (complicated)
algorithm that assumes that cuckoo hashing has failed, and uses this
knowledge to compress the random tables $T_1, \dots, T_c$ below the
entropy lower bound. 

%

Using our techniques, it is also possible to show that if $n$ balls are
placed in $O(n)$ bins in an online fashion, choosing the least loaded
bin at each time, the maximum load is $O(\lg\lg n)$ in expectation.

\paragraph{Minwise independence}
In the full version, we show that simple tabulation is $\eps$-minwise
independent, for a vanishingly small $\eps$ (inversely polynomial in
the set size). This would require $\Theta(\log n)$ independence by
standard techniques.

\begin{theorem}
Consider a set $S$ of $n=|S|$ keys and $q\notin S$. Then with $h$
implemented by simple tabulation:
\[ \Pr[ h(q) < \min h(S)] = \frac{1\pm \eps}{n}, \qquad
\textrm{where } \eps = O\left( \frac{\lg^2 n}{n^{1/c}} \right). \]
\end{theorem}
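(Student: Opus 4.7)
The plan is to write
\[\Pr[h(q) < \min h(S)] \;=\; \frac{1}{m}\sum_{t=0}^{m-1} p_t, \qquad p_t := \Pr[\min h(S) > t \mid h(q) = t],\]
and compare the sum term-by-term with the truly-random benchmark $\sum_t \frac{1}{m}(1-(t+1)/m)^n \approx 1/n$. Since simple tabulation is $3$-independent, conditioning on $h(q)=t$ keeps $\{h(x)\}_{x\in S}$ marginally uniform and pairwise independent in $[m]$, so the count $Y_t := |\{x\in S : h(x)\le t\}|$ has conditional mean $\lambda_t := n(t+1)/m$ and conditional variance at most $\lambda_t$.

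First I would truncate at $t^\star$ where $\lambda_{t^\star} = \Theta(\log n)$. Applying Theorem~\ref{thm:chernoff} to the partition of $[m]$ whose first cell is $\{0,\ldots,t^\star\}$ shows $\Pr[Y_{t^\star}=0] \le n^{-\gamma}$, so the $t>t^\star$ tail of the sum contributes at most $n^{-\gamma}$, which is negligible. In the opposite extreme, when $\lambda_t$ is tiny (say $\lambda_t \le 1/\log n$), Bonferroni with $3$-independence pinches
\[1 - \lambda_t \;\le\; p_t \;\le\; 1 - \lambda_t + \tfrac{1}{2}\lambda_t^2,\]
which matches the truly-random $(1-(t+1)/m)^n = 1 - \lambda_t + \lambda_t^2/2 - O(\lambda_t^3)$ to within $O(\lambda_t^3)$, negligible after summing.

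The main obstacle is the intermediate window $\lambda_t \in [1/\log n,\;\Theta(\log n)]$: $3$-independence Bonferroni no longer pinches $p_t$ (its error is of the same order as $p_t$ itself), while the plain left-tail Chernoff from Theorem~\ref{thm:chernoff} only gives $\Pr[Y_t=0]\le e^{-c\lambda_t}$ with a constant $c<1$, losing a constant in the exponent relative to the truly-random $e^{-\lambda_t}$. To overcome this I would reopen the martingale-over-position-characters analysis underlying Theorem~\ref{thm:chernoff} and rerun it under the extra linear constraint $\bigoplus_i T_i[q_i]=t$, which pins down one entry per character table and preserves essentially all randomness. This yields a sharpened conditional left-tail of the form $\Pr[Y_t\le(1-\delta)\lambda_t \mid h(q)=t]\le e^{-\Omega(\delta^2\lambda_t)}+m^{-\gamma}$; combined with a dyadic partition of the intermediate $t$-range and a second application of Bonferroni to refine constants, this recovers $p_t = (1-(t+1)/m)^n(1\pm\eps)$ up to polynomially small slack.

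Summing the three contributions yields $\Pr[h(q)<\min h(S)] = (1\pm\eps)/n$. One factor of $\log n$ in $\eps$ arises from a union bound over the $O(\log n)$ dyadic scales of $t$, the second factor of $\log n$ from the width $\lambda_{t^\star}=\Theta(\log n)$ of the effective minwise window, and the $n^{-1/c}$ factor reflects the polynomial slack $m^{-\gamma}$ inherent in the $m \ge n^{1-1/(2c)}$ hypothesis of Theorem~\ref{thm:chernoff}.
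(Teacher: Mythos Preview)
Your decomposition into the sum over $t$ and the truncation at $\lambda_{t^\star}=\Theta(\log n)$ are fine and match the paper's setup. The gap is in the intermediate window. You correctly diagnose the problem: a Chernoff-type bound $\Pr[Y_t=0]\le e^{-\Omega(\lambda_t)}$ with a hidden constant strictly below $1$ in the exponent is useless here, because you need $p_t=e^{-\lambda_t(1\pm\eps)}$, not $e^{-\Theta(\lambda_t)}$. But your proposed fix does not close this gap. ``Reopening the martingale analysis'' under the constraint $h(q)=t$ just reproduces the query-dependent version of Theorem~\ref{thm:chernoff}, which is the very bound you said was insufficient; and neither a dyadic partition of the $t$-range nor a further Bonferroni pass can repair a lost constant in an exponent when $\lambda_t=\Theta(1)$. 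At $\lambda_t\approx 1$, third-moment Bonferroni has error comparable to $p_t$ itself, and no amount of partitioning changes that.

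The paper's argument for this window is qualitatively different and does not go through tail bounds on $Y_t$ at all. For the lower bound it writes $f(p)\ge\prod_\alpha(1-p|G_\alpha|)$ as a product over the groups of Lemma~\ref{lem:group-size-q}, then uses a Gerber-type inequality $1-pk\ge(1-p)^{(1+pk)k}$ term-by-term; since $|G_\alpha|\le 2n^{1-1/c}$, the exponent inflation is $1+O(\ell/n^{1/c})$, which is where the $n^{-1/c}$ in $\eps$ actually comes from (not from the $m^{-\gamma}$ slack, as you suggest). For the upper bound the paper cannot simply reverse this, because keys within a group may collide and the product overcounts; instead it tracks a dynamically shrinking set of \emph{representatives} $R_\beta(\alpha)$ (at most one per bin per group), bounds the expected number of representatives lost at each step by $C(\alpha)\ell/n$, and uses a convexity/Markov argument (Lemma~\ref{lem:mess}) to interpolate between the ideal $(1-p)^r$ and a $d$-bounded worst case. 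Neither of these ideas is a Chernoff refinement, and your sketch does not contain a substitute for them.
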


This can be used to estimate the size of set intersection by estimating:
\begin{align*}
\Pr[ \min h(A) =& \min h(B) ] \\
&=~ \sum_{x\in A\cap B} \Pr[ x < \min h(A\cup B \setminus \{x\})] 
\\
&=~ \frac{|A\cap B|}{|A\cup B|} \cdot \left( 1 \pm \widetilde{O}\left(
\frac{1}{|A\cup B|^{1/c}} \right) \right).
\end{align*}
For good bounds on the probabilities, we would make multiple
experiments with independent hash functions. An alternative based on
a single hash function is that we for each set consider the $k$ elements 
with the smallest hash values. We will also present concentration bounds 
for this alternative.

\paragraph{Fourth moment bounds}
An alternative to Chernoff bounds in proving good concentration is to
use bounded moments. In the full version of the paper, we analyze the
4\th moment of a bin's size when balls are placed into bins by simple
tabulation. For a fixed bin, we show that the 4\th moment comes
extremely close to that achieved by truly random hashing: it deviates
by a factor of $1 + O(4^c / m)$, which is tiny except for a very large
number of characters $c$. This would require 4-independence by
standard arguments. This limited 4\th moment for a given bin was
discovered independently by \cite{braverman10kwise}.

If we have a designated query ball $q$, and we are interested in the
size of a bin chosen as a function of $h(q)$, the 4\th moment of
simple tabulation is within a constant factor of that achieved by
truly random hashing (on close inspection of the proof, that constant
is at most 2). This would require 5-independence by standard
techniques. (See \cite{patrascu10kwise-lb} for a proof that
4-independence can fail quite badly when we want to bound the size of
the bin in which $q$ lands.) Our proof exploits an intriguing
phenomenon that we identify in simple tabulation: in any fixed set of
5 keys, one of them has a hash code that is independent of the other
four's hash codes.

Unlike our Chernoff-type bounds, the constants in the 4\th moment
bounds can be analyzed quite easily, and are rather tame. Compelling
applications of 4\th moment bounds were given by \cite{karloff93prg}
and \cite{thorup09linprobe}. In \cite{karloff93prg}, it was shown that
any hash function with a good 4\th moment bound suffices for a
nonrecursive version of quicksort, routing on the hypercube, etc. In
\cite{thorup09linprobe}, linear probing is shown to have constant
expected performance if the hash function is a composition of
universal hashing down to a domain of size $O(n)$, with a strong
enough hash function on this small domain (i.e.~any hash function with
a good 4\th moment bound).

We will also use 4\th moment bounds to attain certain bounds of
linear probing not covered by our Chernoff-type bounds.  In the case
of small fill $\alpha=\frac nm=o(1)$, we use the 4\th moment bounds to
show that the probability of a full hash location is $O(\alpha)$.

\paragraph{Pseudorandom numbers}
The tables used in simple tabulation should be small to fit in the
first level of cache. Thus, filling them with truly random numbers
would not be difficult (e.g.~in our experiments we use atmospheric
noise from \url{random.org}). If the amount of randomness needs to be
reduced further, we remark that all proofs continue to hold if the
tables are filled by a $\Theta(\lg n)$-independent hash function
(e.g.~a polynomial with random coefficients).

With this modification, simple tabulation naturally lends itself to an
implementation of a very efficient pseudorandom number generator. We
can think of a pseudorandom generator as a hash function on range
$[n]$, with the promise that each $h(i)$ is evaluated once, in the
order of increasing $i$. To use simple tabulation, we break the
universe into two, very lopsided characters: $[\frac{n}{R}] \times [R]$, for $R$ chosen to be $\Theta(\lg
n)$. Here the second coordinate is least significant, that
is, $(x,y)$ represents $xR+y$. 
During initialization, we fill $T_2[1\twodots R]$ with $R$ truly
random numbers. The values of $T_1[1 \twodots n/R]$ are generated on
the fly, by a polynomial of degree $\Theta(\lg n)$, whose coefficients
were chosen randomly during initialization. Whenever we start a new row
of the matrix, we can spend a relatively large amount of time to
evaluate a polynomial to generate the next value $r_1$ which we store in a 
register. For the next $R$ calls, we run sequentially through $T_2$, xoring
each value with $r_1$ to provide a new pseudorandom number. With
$T_2$ fitting in fast memory and scanned sequentially, this will be
much faster than a single multiplication, and with $R$ large, the
amortized cost of generating $r_1$ is insignificant. The
pseudorandom generator has all the interesting properties discussed
above, including Chernoff-type concentration, minwise independence,
and random graph properties.

\paragraph{Experimental evaluation}
We performed an experimental
evaluation of simple tabulation. Our implementation uses tables of
$256$ entries (i.e.~using $c=4$ characters for 32-bit data and $c=8$
characters with 64-bit data). The time to evaluate the hash function
turns out to be competitive with multiplication-based 2-independent
functions, and significantly better than for hash functions
with higher independence. We also evaluated simple tabulation in applications, in an effort to
verify that the constants hidden in our analysis are not too large. Simple
tabulation proved very robust and fast, both for linear probing and
for cuckoo hashing.

\paragraph{Notation}
We now introduce some notation that will be used throughout the proofs.
We want to construct hash functions $h : [u] \to [m]$. We use simple
tabulation with an alphabet of $\Sigma$ and $c = O(1)$ characters.
Thus, $u = \Sigma^c$ and $h(x_1, \dots, x_c) = \bigoplus_{i=1}^c
T_i[x_i]$. It is convenient to think of each hash code $T_i[x_i]$ as a
fraction in $[0,1)$ with large enough precision. We always assume $m$
  is a power of two, so an $m$-bit hash code is obtained by keeping
  only the most significant $\log_2 m$ bits in such a fraction. We
  always assume the table stores long enough hash codes, i.e.~at least
  $\log_2 m$ bits.

Let $S \subset \Sigma^c$ be a set of $|S|=n$ keys, and let $q$ be a
query. We typically assume $q \notin S$, since the case $q\in S$ only
involves trivial adjustments (for instance, when looking at the load
of the bin $h(q)$, we have to add one when $q\in S$). Let $\pi(S, i)$
be the projection of $S$ on the $i$-th coordinate, $\pi(S,i) = \{ x_i
\mid (\forall) x \in S\}$.

We define a \emph{position-character} to be an element of $[c] \times
\Sigma$. Then, the alphabets on each coordinate can be assumed to be
disjoint: the first coordinate has alphabet $\{1\} \times \Sigma$, the
second has alphabet $\{2\} \times \Sigma$, etc. Under this view, we
can treat a key $x$ as a \emph{set} of $q$ position-characters (on
distinct positions). Furthermore, we can assume $h$ is defined on
position characters: $h((i,\alpha)) = T_i[\alpha]$. This definition is
extended to keys (sets of position-characters) in the natural way
$h(x) = \bigoplus_{\alpha \in x} h(\alpha)$.

When we say with high probability in $r$, we mean $1-r^a$ for any
desired constant $a$. Since $c=O(1)$, high probability in $|\Sigma|$
is also high probability in $u$. If we just say high probability, it
is understood to be in $n$.

\section{Concentration Bounds}\label{sec:chernoff}

This section proves Theorem~\ref{thm:chernoff}, except branch
\req{eq:few-bins} which is shown in the full version of the paper.

If $n$ elements are hashed into $n^{1+\eps}$ bins by a truly random
hash function, the maximum load of any bin is $O(1)$ with high
probability. First we show that simple
tabulation preserves this guarantee. Building on this,
we shows that the load of any fixed bin obeys
Chernoff bounds. Finally we show that the Chernoff bound holds
even for a bin chosen as a function of the query hash code, $h(q)$.

As stated in the introduction, the number of bins is always understood to be a
power of two. This is because our hash values are xor'ed bit
strings. If we want different numbers of bins we could view the hash
values as fractions in the unit interval and divide the unit interval
into subintervals.  Translating our results to this setting is
standard.

\ptsubsection{Hashing into Many Bins}
The notion of peeling lies at the heart of most work in tabulation
hashing. If a key from a set of keys contains one position-character
that doesn't appear in the rest of the set, its hash code will be
independent of the rest. Then, it can be ``peeled'' from the set, as
its behavior matches that with truly random hashing. More formally, we
say a set $T$ of keys is peelable if we can arrange the keys of $T$ in
some order, such that each key contains a position-character that
doesn't appear among the previous keys in the order.

\begin{lemma} \label{lem:manybins}
Suppose we hash $n \le m^{1-\eps}$ keys into $m$ bins, for some
constant $\eps>0$. For any constant $\gamma$, all bins get less than 
$d = \min \left\{ \left((1 + \gamma)/\eps\right)^c, 2^{(1 +
  \gamma) /\eps}\right\}$
keys with probability $\ge 1-m^{-\gamma}$.
\end{lemma}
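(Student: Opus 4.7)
The proof hinges on the peelability concept defined just before the lemma. Observe first that if $P = \{p_1,\ldots,p_k\}$ is peelable, ordered so that each $p_i$ has a position-character $\alpha_i \notin p_1 \cup \cdots \cup p_{i-1}$, then conditioning on all hash codes except $h(\alpha_2),\ldots,h(\alpha_k)$ makes $h(p_2),\ldots,h(p_k)$ uniform and mutually independent. Hence $\Pr[h(p_1) = \cdots = h(p_k)] \le m^{-(k-1)}$, and union-bounding over the $\le \binom{n}{k} \le n^k$ peelable $k$-subsets of $S$ gives
\[
\Pr[\text{some peelable $k$-subset collides in one bin}] \;\le\; \frac{n^k}{m^{k-1}} \;=\; m \cdot (n/m)^k \;\le\; m^{1-\eps k}.
\]
Choosing $k \ge (1+\gamma)/\eps$ makes this at most $m^{-\gamma}$.

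The task therefore reduces to showing that whenever a single bin receives $d$ keys, those $d$ keys contain a peelable subset of size at least $\max\bigl\{\lceil d^{1/c}\rceil,\;1+\lfloor \log_2 d \rfloor\bigr\}$. Both bounds on $k$ in the lemma follow: $d^{1/c} \ge (1+\gamma)/\eps$ gives $d \ge ((1+\gamma)/\eps)^c$, and $1+\log_2 d \ge (1+\gamma)/\eps$ gives $d \ge 2^{(1+\gamma)/\eps}$, so either value of $d$ suffices.

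For the $d^{1/c}$ bound, since the $d$ keys of $T$ lie in $\pi_1(T) \times \cdots \times \pi_c(T)$, we have $d \le \prod_i |\pi_i(T)|$, so some position $i^*$ has $|\pi_{i^*}(T)| \ge d^{1/c}$. Picking one key from $T$ for each value in $\pi_{i^*}(T)$ yields a set of $\ge d^{1/c}$ keys with pairwise distinct $i^*$-th characters; in any order, each key has a unique $i^*$-th character relative to the preceding ones, so the set is peelable. For the $1 + \log_2 d$ bound, I induct on $d$: since the keys of $T$ are distinct, some position $i$ has $|\pi_i(T)| \ge 2$; choose the character $a \in \pi_i(T)$ minimizing $|T_{i,a}|$, so $|T_{i,a}| \le d/2$ and $|T \setminus T_{i,a}| \ge d/2$. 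The induction hypothesis applied to $T\setminus T_{i,a}$ produces a peelable subset $P$ of size $\ge 1 + \lfloor\log_2(d/2)\rfloor = \lfloor\log_2 d\rfloor$; prepending any $x \in T_{i,a}$, whose $i$-th character $a$ does not occur anywhere in $P$, extends $P$ to a peelable subset of size $\ge 1 + \lfloor\log_2 d\rfloor$.

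\textbf{Expected main obstacle.} The probability bound itself is a routine union bound once the independence-via-peeling observation is in place; the only nontrivial part is the combinatorial claim. The pigeonhole argument giving $d^{1/c}$ is immediate, but the inductive halving argument for $1 + \log_2 d$ has to be set up so that the key prepended from the minority class $T_{i,a}$ genuinely introduces a position-character missing from the previously extracted peelable subset. Choosing $a$ as the rarest character (rather than, say, splitting arbitrarily) is what makes this step work cleanly.
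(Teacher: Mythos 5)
Your proof is correct and follows essentially the same route as the paper: peelable subsets hash independently, a union bound over the at most $n^k/m^{k-1}$ monochromatic peelable $k$-sets, and the combinatorial claim that any $d$ keys contain a peelable subset of size $\max\{d^{1/c},\log_2 d\}$, with the identical pigeonhole argument for the $d^{1/c}$ branch. The only variation is in extracting the $\log_2 d$-sized peelable subset --- the paper keeps one representative of every non-majority character class and recurses into the most popular class, whereas you recurse into the complement of the rarest class and add one representative of it --- and both halving arguments are valid (just note that, under the paper's convention that each key has a character new relative to the \emph{previous} keys, your fresh key should be appended to, not prepended to, the peeling order).
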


\begin{proof}

We will show that among any $d$ elements, one can find a
\emph{peelable} subset of size $t \ge \max \{ d^{1/c}, \lg d
\}$. Then, a necessary condition for the maximum load of a bin to be
at least $d$ is that some bin contain $t$ peelable elements. There are
at most $\binom{n}{t} < n^t$ such sets. Since the hash codes of a
peelable set are independent, the probability that a fixed set lands
into a common bin is $1\big/ m^{t-1}$. Thus, an upper bound on the
probability that the maximum load is $d$ can be obtained: $n^t /
m^{t-1} = m^{(1-\eps)t}/m^{t-1}=m^{1 - \eps t}$.  To obtain failure
probability $m^{-\gamma}$, set $t = (1 + \gamma)/\eps$.

It remains to show that any set $T$ of $|T|=d$ keys contains a large
peelable subset. Since $T \subset \pi(T,1) \times \cdots \times
\pi(T,c)$, it follows that there exists $i\in [c]$ with $|\pi(T,i)|
\ge d^{1/c}$. Pick some element from $T$ for every character value in
$\pi(S,i)$; this is a peelable set of $t = d^{1/c}$ elements.

To prove $t \ge \log_2 d$, we proceed iteratively. Consider the
coordinate giving the largest projection, $j = \arg\max_i
|\pi(T,i)|$. As long as $|T| \ge 2$, $|\pi(T,j)| \ge 2$. Let $\alpha$
be the most popular value in $T$ for the $j$-th character, and let
$T^\star$ contain only elements with $\alpha$ on the $j$-th
coordinate. We have $|T^\star| \ge |T| / |\pi(T,j)|$. In the peelable
subset, we keep one element for every value in $\pi(T,j) \setminus
\{\alpha\}$, and then recurse in $T^\star$ to obtain more elements. In
each recursion step, we obtain $k \ge 1$ elements, at the cost of
decreasing $\log_2 |T|$ by $\log_2 (k+1)$. Thus, we obtain at least
$\log_2 d$ elements overall.
\end{proof}

We note that, when the subset of keys of interest forms a
combinatorial cube, the probabilistic analysis in the proof is sharp
up to constant factors. In other words, the exponential dependence on
$c$ and $\gamma$ is inherent.

\ptsubsection{Chernoff Bounds for a Fixed Bin}  \label{sec:fix-chernoff}
We study the number of keys ending up in a prespecified bin $B$. The
analysis will define a total ordering $\prec$ on the space of
position-characters, $[c]\times \Sigma$. Then we will analyze the
random process by fixing hash values of position-characters
$h(\alpha)$ in the order $\prec$. The hash value of a key $x\in S$
becomes known when the position-character $\max_\prec x$ is fixed. For
$\alpha \in [c]\times \Sigma$, we define the \emph{group} $G_\alpha =
\{x\in S \mid \alpha = \max_\prec x \}$, the set of keys for whom
$\alpha$ is the last position-character to be fixed. 

The intuition is that the contribution of each group $G_\alpha$ to the
bin $B$ is a random variable independent of the previous $G_\beta$'s,
since the elements $G_\alpha$ are shifted by a new hash code
$h(\alpha)$. Thus, if we can bound the contribution of $G_\alpha$ by a
constant, we can apply Chernoff bounds. 

\begin{lemma}\label{lem:group-size} 
There is an ordering $\prec$ such that the maximal group size
is $\max_\alpha |G_\alpha| \le n^{1-1/c}$.
\end{lemma}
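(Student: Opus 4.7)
The plan is to construct the ordering $\prec$ greedily, from top (largest) to bottom (smallest). I will maintain a shrinking set $S'\subseteq S$ of keys that have not yet been assigned to any group, starting with $S'=S$. At each step I choose a position-character $\alpha$, declare it the next-largest remaining element of $\prec$ (placed just below all the position-characters already chosen), and then remove from $S'$ every key in which $\alpha$ is now the maximum. Inductively, $S'$ is exactly the set of keys whose $\max_\prec$ has not yet been determined: all strictly larger position-characters have already been placed, and any key that contained one of them was removed in an earlier step. Consequently the group produced at the current step is precisely $G_\alpha=\{x\in S':\alpha\in x\}$.

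To keep $|G_\alpha|$ small, I choose $\alpha$ as follows. For each coordinate $i\in[c]$, let $P_i=\pi(S',i)$. Since $S'\subseteq P_1\times\cdots\times P_c$, we have $|S'|\le\prod_{i=1}^c |P_i|$, so there is a coordinate $i^*$ with $|P_{i^*}|\ge |S'|^{1/c}$. Among all characters $\sigma\in P_{i^*}$, pick one that minimizes the count $N(\sigma)=|\{x\in S':x_{i^*}=\sigma\}|$, and let $\alpha=(i^*,\sigma)$. The counts $N(\sigma)$ for $\sigma\in P_{i^*}$ sum to $|S'|$ and there are $|P_{i^*}|$ of them, so the minimum is at most the average:
\[
|G_\alpha|\;=\;N(\sigma)\;\le\;\frac{|S'|}{|P_{i^*}|}\;\le\;\frac{|S'|}{|S'|^{1/c}}\;=\;|S'|^{1-1/c}\;\le\;n^{1-1/c}.
\]

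We then set $S':=S'\setminus G_\alpha$ and repeat until $S'=\emptyset$. Any position-characters not yet placed at that point never occur as $\max_\prec x$ for any $x\in S$, so they have empty group and may be inserted at the bottom of $\prec$ in any order. The resulting total order on $[c]\times\Sigma$ satisfies $\max_\alpha |G_\alpha|\le n^{1-1/c}$.

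There is essentially no obstacle here beyond correctly identifying $G_\alpha$ with $\{x\in S':\alpha\in x\}$, which follows from the top-down invariant described above; the bound itself is just the pigeonhole inequality $|S'|\le\prod_i|\pi(S',i)|$ combined with an averaging argument on the chosen coordinate.
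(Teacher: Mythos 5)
Your proof is correct and takes essentially the same route as the paper: a greedy construction of $\prec$ that repeatedly extracts a position-character whose group in the remaining key set is small, justified by the pigeonhole bound $|S'|\le\prod_i|\pi(S',i)|$ plus averaging. The only cosmetic differences are that you build the order top-down, which makes $G_\alpha$ exactly the set of remaining keys containing $\alpha$, and that you merge the paper's two cases (some projection large versus all projections small) into a single averaging step; both versions are valid.
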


\begin{proof} 
We start with $S$ being the set of all keys, and reduce $S$
iteratively, by picking a position-character $\alpha$ as next in the
order, and removing keys $G_\alpha$ from $S$. At each point in time,
we pick the position-character $\alpha$ that would minimize
$|G_\alpha|$. Note that, if we pick some $\alpha$ as next in the
order, $G_\alpha$ will be the set of keys $x \in S$ which contain
$\alpha$ and contain no other character that hasn't been fixed:
$(\forall) \beta \in x \setminus \{\alpha\}, \beta \prec \alpha$.

We have to prove is that, as long as $S\neq\emptyset$, there exists
$\alpha$ with $|G_\alpha| \le |S|^{1-1/c}$. If some position $i$ has
$|\pi(S,i)|> |S|^{1/c}$, there must be some character $\alpha$ on
position $i$ which appears in less than $|S|^{1-1/c}$ keys; thus
$|G_\alpha| \le S^{1-1/c}$. Otherwise, $\pi(S,i)\le |S|^{1/c}$ for all
$i$. Then if we pick an arbitrary character $\alpha$ on some position
$i$, have $|G_\alpha|\le \prod_{j\ne i} |\pi(S,j)| \le
(|S|^{1/c})^{c-1}=|S|^{1-1/c}$.
\end{proof}

From now on assume the ordering $\prec$ has been fixed as in the
lemma. This ordering partitions $S$ into at most $n$ non-empty groups,
each containing at most $n^{1-1/c}$ keys. We say a group $G_\alpha$ is
{\em $d$-bounded\/} if no bin contains more than $d$ keys from
$G_\alpha$.

\begin{lemma}\label{lem:d-bounded} 
Assume the number of bins is $m\ge n^{1-1/(2c)}$. For any constant
$\gamma$, with
probability $\ge 1-m^{-\gamma}$, all groups are $d$-bounded where
\begin{equation*}
d = \min \left\{(2c(3+\gamma)^c, 2^{2c(3+\gamma)}\right\}
\end{equation*}
\end{lemma}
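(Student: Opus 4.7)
The plan is to apply Lemma~\ref{lem:manybins} to each group $G_\alpha$ individually and then union bound. Fix a non-empty group $G_\alpha$ with $\alpha$ sitting at position $i$. By definition of $G_\alpha$, every $x \in G_\alpha$ contains $\alpha$ and no other position-character from position $i$, so
\[
  h(x) \;=\; T_i[\alpha] \;\oplus\; h'(x \setminus \{\alpha\}),
\]
where $h'(y) = \bigoplus_{j \ne i} T_j[y_j]$ is simple tabulation on the $c-1$ tables $\{T_j\}_{j\ne i}$. Crucially, $T_i[\alpha]$ is independent of every table entry used by $h'$.

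I would next observe that $T_i[\alpha]$ acts as a single uniform XOR shift on every hash value in $\{h(x) : x\in G_\alpha\}$, which merely permutes the bins and leaves bin loads invariant. So bounding the maximum number of $G_\alpha$-keys in any bin under $h$ reduces to bounding the maximum number of partial keys $\{x \setminus \{\alpha\} : x\in G_\alpha\}$ colliding in a bin under the $(c-1)$-character simple tabulation $h'$. From Lemma~\ref{lem:group-size} we have $|G_\alpha| \le n^{1-1/c}$, and combined with the hypothesis $m \ge n^{1-1/(2c)}$ this yields $|G_\alpha| \le m^{1-1/(2c)}$. I would then invoke Lemma~\ref{lem:manybins} on $h'$ with $c-1$ characters and $\eps = 1/(2c)$; with a free parameter $\gamma'$, it gives max bin load strictly less than $\min\{(2c(1+\gamma'))^{c-1}, 2^{2c(1+\gamma')}\}$ with probability at least $1 - m^{-\gamma'}$.

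Finally I would union bound over non-empty groups. Since they partition $S$, there are at most $n \le m^{2c/(2c-1)} \le m^2$ of them, so taking $\gamma' = \gamma+2$ gives per-group failure at most $m^{-\gamma-2}$ and total failure at most $m^{-\gamma}$. The resulting per-group bound $\min\{(2c(3+\gamma))^{c-1}, 2^{2c(3+\gamma)}\}$ is dominated by the stated $d = \min\{(2c(3+\gamma))^{c}, 2^{2c(3+\gamma)}\}$. I do not anticipate a serious obstacle: once the $T_i[\alpha]$-shift reduction is recognized, Lemma~\ref{lem:d-bounded} is essentially a direct corollary of Lemma~\ref{lem:manybins}, with all of the hard combinatorial peeling already having been absorbed into that earlier lemma.
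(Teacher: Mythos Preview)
Your proof is correct and follows the same route as the paper: bound each $|G_\alpha|$ by $n^{1-1/c}\le m^{1-1/(2c)}$, apply Lemma~\ref{lem:manybins} with $\eps=1/(2c)$ and boosted parameter $\gamma'=\gamma+2$, then union bound over the at most $n\le m^2$ non-empty groups. The paper's proof is in fact terser than yours: it applies Lemma~\ref{lem:manybins} directly to the $c$-character hash on the set $G_\alpha$, without first stripping the common character $\alpha$. Your reduction to a $(c-1)$-character tabulation $h'$ is valid (the shared $T_i[\alpha]$ is just a bin permutation) and buys you the sharper exponent $c-1$ in place of $c$, but it is not needed for the stated bound.
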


\begin{proof}
Since $|G_\alpha| \le n^{1-1/c}\le m^{1-1/(2c)}$, 
by
Lemma~\ref{lem:manybins}, we get that there are at
most $d$ keys from $G_\alpha$ in any bin with probability 
$1-m^{-(2+\gamma)}\geq 1-m^{-\gamma}/n$. The conclusion follows by union bound over
the $\le n$ groups.
\end{proof}

Henceforth, we assume that $\gamma$ and $d$ are fixed as in Lemma \ref{lem:d-bounded}.
Chernoff bounds (see \cite[Theorem 4.1]{motwani95book}) consider
independent random variables $X_1, X_2, \dots \in [0,d]$. 
Let
$X=\sum_i X_i$, $\mu=\E[X]$, and $\delta>0$, the bounds are:

\begin{equation}
\begin{split}
\Pr[X\ge (1+\delta)\mu] &\le \left(
   \frac{e^\delta}{(1+\delta)^{(1+\delta)}} \right)^{\mu/d} \\
\Pr[X\le (1-\delta)\mu] &\le \left(
   \frac{e^{-\delta}} {(1-\delta)^{(1-\delta)}} \right)^{\mu/d}
\end{split}
\label{eq:both-chernoff}
\end{equation}

Let $X_\alpha$ be the number of elements from $G_\alpha$ landing in
the bin $B$. We are quite close to applying Chernoff bounds to the
sequence $X_\alpha$, which would imply the desired concentration
around $\mu = \frac{n}{m}$. Two technical problems remain: 
$X_\alpha$'s are not $d$-bounded in the worst case, and they are not
independent.

To address the first problem, we define the sequence of random
variables $\hat{X}_\alpha$ as follows: if $G_\alpha$ is $d$-bounded,
let $\hat{X}_\alpha = X_\alpha$; otherwise $\hat{X}_\alpha =
|G_\alpha| / m$ is a constant. Observe that $\sum_\alpha
\hat{X}_\alpha$ coincides with $\sum_\alpha X_\alpha$ if all groups
are $d$-bounded, which happens with probability $1-m^{-\gamma}$. Thus
a probabilistic bound on $\sum_\alpha \hat{X}_\alpha$ is a bound on
$\sum_\alpha X_\alpha$ up to an additive $m^{-\gamma}$ in the
probability.

Finally, the $\hat{X}_\alpha$ variables are not independent: earlier
position-character dictate how keys cluster in a later
group. Fortunately \req{eq:both-chernoff} holds even if the distribution of each $X_i$
is a function of $X_1, \dots, X_{i-1}$, 
as long as the mean $\E[X_i \mid X_1,
  \dots, X_{i-1}]$ is a fixed constant $\mu_i$ independent of 
$X_1,...,X_{i-1}$. A formal proof will be given
in Appendix \ref{sec:chernoff-appendix}.
We claim that our  means are fixed this way:
regardless of the hash codes for $\beta < \alpha$, we will argue
that $\E[\hat{X}_\alpha] = \mu_\alpha=|G_\alpha|/m$.

Observe that whether or not $G_\alpha$ is $d$-bounded is determined
before $h(\alpha)$ is fixed in the order $\prec$. Indeed,
$\alpha$ is the last position-character to be fixed for any key in
$G_\alpha$, so the hash codes of all keys in $G_\alpha$ have been
fixed up to an xor with $h(\alpha)$. This final shift by $h(\alpha)$
is common to all the keys, so it cannot change whether or not two
elements land together in a bin. Therefore, the choice of $h(\alpha)$ does 
not change if $G_\alpha$ is $d$-bounded.

After fixing all hash codes $\beta \prec \alpha$, we decide if
$G_\alpha$ is $d$-bounded. If not, we set $\hat{X}_\alpha=
|G_\alpha|/m$. Otherwise $\hat{X}_\alpha = X_\alpha$ is the number of
elements we get in $B$ when fixing $h(\alpha)$, and $h(\alpha)$ is a
uniform random variable sending each element to $B$ with probability
$1/m$. Therefore $\E[\hat{X}_\alpha]= |G_\alpha|/m$.  This completes
the proof that the number of keys in bin $B$ obeys Chernoff bounds from
\req{eq:both-chernoff}, which immediately imply \req{eq:normal} and \req{eq:Poisson} in Theorem~\ref{thm:chernoff}.

\ptsubsection{The Load of a Query-Dependent Bin} 
When we are dealing with a special key $q$ (a query), we may be
interested in the load of a bin $B_q$, chosen as a function of the
query's hash code, $h(q)$. We show that the above analysis also works
for the size of $B_q$, up to small constants. The critical change is
to insist that the query position-characters come first in our
ordering $\prec$:

\begin{lemma} \label{lem:group-size-q}  
There is an ordering $\prec$ placing the characters of $q$ first, in
which the maximal group size is $2\cdot n^{1-1/c}$.
\end{lemma}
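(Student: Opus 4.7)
The plan is to reuse the greedy construction of Lemma~\ref{lem:group-size} but force the $c$ query position-characters $q_1,\ldots,q_c$ to occupy the first $c$ slots of $\prec$ (in any order among themselves). Because $q\notin S$, every key $x\in S$ differs from $q$ in at least one coordinate and therefore contains at least one non-query position-character, so $\max_\prec x$ is always a non-query character and $G_{q_i}=\emptyset$ for each query character. It therefore suffices to bound $|G_\alpha|$ as $\alpha$ ranges over the non-query characters, which I would pick one at a time by exactly the greedy rule of Lemma~\ref{lem:group-size}: choose a non-query $\alpha$ minimizing $|G_\alpha|$, remove $G_\alpha$ from the alive set $S$, and iterate.

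The main step is to establish the invariant that whenever $S\neq\emptyset$ some non-query $\alpha$ satisfies $|G_\alpha|\le 2|S|^{1-1/c}$; since $|S|\le n$ throughout, this yields the claimed $2n^{1-1/c}$. I would prove the invariant by mimicking the two-case analysis of Lemma~\ref{lem:group-size} with $q_i$ excluded from the pool on position $i$. If some position $i$ satisfies $|\pi(S,i)\setminus\{q_i\}|\ge|S|^{1/c}$, then summing counts of non-query characters at position $i$ yields at most $|S|$, so by averaging some non-query $\alpha$ on position $i$ has $|G_\alpha|\le|S|/|S|^{1/c}=|S|^{1-1/c}$. Otherwise $|\pi(S,i)\setminus\{q_i\}|<|S|^{1/c}$ for every $i$, hence $|\pi(S,i)|\le|S|^{1/c}+1$; because $q\notin S$ and $S\neq\emptyset$, some position $i$ must have $\pi(S,i)\setminus\{q_i\}\neq\emptyset$, and picking any such $\alpha$ gives $|G_\alpha|\le\prod_{j\ne i}|\pi(S,j)|\le(|S|^{1/c}+1)^{c-1}$. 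For $c=O(1)$ this is at most $2|S|^{1-1/c}$ after absorbing the $(1+|S|^{-1/c})^{c-1}$ overhead (with the trivial bound $|G_\alpha|\le|S|$ handling the tiny-$|S|$ regime).

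The only new obstacle relative to Lemma~\ref{lem:group-size} is the additive $+1$ contributed by $q_i$ in the second case, which is precisely what forces the constant $2$ rather than $1$ in the bound. Existence of a non-query character to pick at every step is automatic: $q\notin S$ guarantees that every surviving key has at least one non-query coordinate, so the greedy process can always continue until $S$ is emptied.
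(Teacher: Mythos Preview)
Your approach is the same as the paper's: place the query characters first and then run the greedy selection of Lemma~\ref{lem:group-size} restricted to non-query characters, with a two-case analysis on projection sizes. The one place you deviate is where the factor $2$ is absorbed, and that deviation creates a genuine (though small) gap.

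In your Case~2 you bound $|G_\alpha|\le(|S|^{1/c}+1)^{c-1}=|S|^{1-1/c}(1+|S|^{-1/c})^{c-1}$ and then assert that this is at most $2|S|^{1-1/c}$, with the trivial bound $|G_\alpha|\le|S|$ covering the tiny-$|S|$ regime. But $(1+|S|^{-1/c})^{c-1}\le 2$ requires $|S|^{1/c}\ge 1/(2^{1/(c-1)}-1)$, while $|S|\le 2|S|^{1-1/c}$ requires $|S|^{1/c}\le 2$; for $c\ge 3$ these ranges do not cover everything. For instance with $c=3$ and $|S|=9$, your product bound gives $(\lfloor 9^{1/3}\rfloor+1)^2=9$, the trivial bound gives $9$, yet $2\cdot 9^{2/3}\approx 8.65$. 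So your stated invariant $|G_\alpha|\le 2|S|^{1-1/c}$ is not established for these intermediate values of $|S|$, and since the greedy process passes through every size from $n$ down to $0$, you cannot simply ignore them.

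The paper's fix is to move the factor $2$ into Case~1 instead. Split on $|\pi(S,i)|>|S|^{1/c}$ rather than on $|\pi(S,i)\setminus\{q_i\}|\ge|S|^{1/c}$. In Case~1, since $|\pi(S,i)|$ is an integer strictly exceeding $|S|^{1/c}\ge 1$ we have $|\pi(S,i)|\ge 2$, hence $|\pi(S,i)|-1\ge|\pi(S,i)|/2>|S|^{1/c}/2$; averaging over the at least $|\pi(S,i)|-1$ non-query characters at position $i$ then yields some $\alpha\notin q$ appearing in at most $|S|/(|\pi(S,i)|-1)<2|S|^{1-1/c}$ keys. In Case~2 one has $|\pi(S,i)|\le|S|^{1/c}$ for every $i$, so the product bound gives $|G_\alpha|\le|S|^{1-1/c}$ directly, with no additive $+1$ to absorb. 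This yields the exact constant $2$ uniformly in $|S|$, with no small-$|S|$ special-casing.
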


\begin{proof} 
After placing the characters of $q$ at the beginning of the order, we
use the same iterative construction as in
Lemma~\ref{lem:group-size}. Each time we select the position-character
$\alpha$ minimizing $|G_\alpha|$, place $\alpha$ next in the order
$\prec$, and remove $G_\alpha$ from $S$. It suffices to prove that, as
long as $S\ne \emptyset$, there exists a position-character $\alpha
\notin q$ with $|G_\alpha| \le 2 \cdot |S|^{1-1/c}$. Suppose in some
position $i$, $|\pi(S,i)| > |S|^{1/c}$. Even if we exclude the query
character $q_i$, there must be some character $\alpha$ on position $i$
that appears in at most $|S| / (|\pi(S,i)|-1)$ keys. Since $S\ne
\emptyset$, $|S|^{1/c}>1$, so $|\pi(S,i)| \ge 2$. This means
$|\pi(S,i)| - 1 \ge |S|^{1/c}/2$, so $\alpha$ appears in at most $2
|S|^{1-1/c}$ keys. Otherwise, we have $\pi(S,i) \le |S|^{1/c}$ for all
$i$. Then, for any character $\alpha$ on position $i$, we have
$|G_\alpha| \le \prod_{j\ne i}|\pi(S,j)|\le |S|^{1-1/c}$.
\end{proof}

The lemma guarantees that the first nonempty group contains the query
alone, and all later groups have random shifts that are independent of
the query hash code.  We lost a factor two on the group size, which
has no effect on our asymptotic analysis. In particular, all groups
are $d$-bounded w.h.p. Letting $X_\alpha$ be the contribution of
$G_\alpha$ to bin $B_q$, we see that the distribution of $X_\alpha$ is
determined by the hash codes fixed previously (including the hash code
of $q$, fixing the choice of the bin $B_q$). But $\E[X_\alpha] =
|G_\alpha|/m$ holds irrespective of the previous choices. Thus, 
Chernoff bounds continue to apply to the size of $B_q$.
This completes the proof of \req{eq:normal} and \req{eq:Poisson} in Theorem~\ref{thm:chernoff}.

In Theorem~\ref{thm:chernoff} we limited ourselves to polynomially
small error bounds $m^{-\gamma}$ for constant $\gamma$. However,
we could also consider a super constant $\gamma=\omega(1)$ using
the formula for $d$ in Lemma \ref{lem:d-bounded}. For
the strongest error bounds, we would balance $m^{-\gamma}$
with the Chernoff bounds from (\ref{eq:both-chernoff}). Such
balanced error bounds would be messy, and we found it
more appealing to elucidate the standard Chernoff-style behavior
when dealing with polynomially small errors.

\subsection{Few bins}
We will now settle Theorem \ref{thm:chernoff} \req{eq:few-bins},
proving some high probability bounds for the concentration
with $m\leq n$ bins. As stated in \req{eq:few-bins}, we  will show, w.h.p., that the
number of keys in each bin is
\[n/m\pm O(\sqrt{n/m}\log^c n).\]
Consider any subset $S$ of $s\leq n$ keys that only vary in $b$
characters.  Generalizing \req{eq:few-bins}, we will show for 
any $L\geq 32\log n$, that with
probability $1-\exp(-\Omega(L))$, the keys in
$S$ get distributed with 
\begin{equation}\label{eq:ind-L}
\left\{\begin{array}{ll}
s/m\pm \sqrt{s/m}\,L^b &\mbox{if }s\geq mL^b/2\\
\leq L^{b}&\mbox{if }s\leq mL^b/2\\
\end{array}\right.
\end{equation}
keys in each of the $m$ bins. This is trivial for $m=1$, so
we can assume $m\geq 2$. The proof is by induction on $(b,s)$.
First we will prove that each inductive step fails with small
probability. Later we will conclude that the combined
failure probability for the whole induction is small.

For the base case of the induction, if $s\leq L^{b}$, the
result is trivial since it holds even if some bin gets
all the keys from $S$. 
This case includes if we have no characters
to vary, that is, when $s=1$ and $b=0$.
We may therefore assume that $s>L^{b}$, and $b>0$. 
The characters positions where $S$ do not vary will
only shuffle the bins, but not affect which keys from $S$ go
together, so we can ignore them when giving bounds for
the sizes of all bins.

Considering the varying characters in $S$, we apply
the ordering from Section \ref{sec:fix-chernoff} leading
to a grouping of $S$. By Lemma \ref{lem:group-size}, there is an 
ordering $\prec$ such that the  maximal group size
is $\max_\alpha |G_\alpha| \le {s}^{1-1/b}$. In particular,
$\max_\alpha|G_\alpha|<s/L$.

First, assume that $s\leq mL^b/2$. Each group has one less
free character, so by induction, each group
has at most $L^{b-1}$ keys in each bin, that is, each
group is $L^{b-1}$-bounded. Now as in Section~\ref{sec:fix-chernoff},
for any fixed bin, we can apply the Chernoff upper-bound 
from (\ref{eq:both-chernoff}) with $d=L^{b-1}$. We have $\mu=s/m\leq L^b/2$, and 
we want to bound the probability of getting a bin of size at least 
$x=L^b\geq 2\mu$. For an upper bound, we use $\mu'=x/2\geq\mu$ and $\delta'=1$,
and get a probability bound of 
\[\left( \frac{e^{\delta'}}{(1+\delta')^{(1+\delta')}}\right)^{\mu'/d}=
(e/4)^{\mu'/d}\leq (e/4)^{L/2}.\]
With the union bound,
the probability that any bin has more than $L^b$ keys is bounded by
$m (e/4)^{L/2}$.

\paragraph{Partitioning many keys}
Next, we consider the more interesting case where $s\geq mL^b/2$.  As
stated in \req{eq:ind-L}, we want to limit the probability that the
contribution $S$ to any bin deviates by more than $\sqrt {s/m}L^b$ from
the mean $s/m$.
We partition the groups into levels $i$ based on their sizes. On level
$0$ we have the groups of size up to $mL^{b-1}/2$. On level 
$i>0$, we have the groups of size
between $t_i=mL^{b-1}2^{i-2}$ and $2t_i$. For each $i$, we let
$S_i$ denote the union of the level $i$ groups. We are going to handle each
level $i$ separately, providing a high probability bound on how
much the contribution of $S_i$ to a given bin can deviate from the
mean. Adding the deviations from all levels, we bound the total
deviation in the contribution from $S$ to this bin.  
Let $s_i$ be the number of keys in
$S_i$ and define
\begin{equation}\label{eq:size-classes}
\Delta_i=\sqrt{s_i/m}\,L^{b-1/2}.
\end{equation}
For level $i>0$, we will use $\Delta_i$ as our deviation bound, while
we for $i=0$, will use $\bar\Delta_0=\max\{\Delta_0,L^b\}$. 

\paragraph{The total deviation}
We will now show that the above level deviation bounds
provide the desired total deviation bound of $\sqrt{s/m} L^b$ from
\req{eq:ind-L}. Summing over the levels, the total deviation is bounded by 
$L^b+\sum_i\sqrt{s_i/m}\,L^{b-1/2}$. 
To bound the sum, we first consider the smaller terms where $s_i<s/\log n$. 
Then $\sqrt{s_i/m}L^{b-1/2}\leq \sqrt{s/m}L^b/\sqrt{L\log n}$.
We have at most $\log n$ values of $i$, so these
smaller terms sum to at most $\sqrt{s/m}L^b\sqrt{(\log n)/L}$.

Next we consider the larger terms where $s_i\geq s/\log n$. Each
such term can be bounded as 
\begin{align*}
\sqrt{s_i/m}\,L^{b-1/2}&=\left((s_i/m)/\sqrt{s_i/m}\right)L^{b-1/2} \\
&\leq \left((s_i/m)/\sqrt{s/m}\right)L^b\sqrt{\log n/L}.
\end{align*}
The sum of the larger terms is therefore also bounded by 
$\sqrt{s/m}L^b\,\sqrt{(\log n)/L}$.
Thus the total deviation is bounded by 
$L^b+\sqrt{s/m}\, L^b2\sqrt{(\log n)/L}$. Assuming 
$L\geq 9\log n$, we have $2\sqrt{(\log n)/L}\leq 2/3$.
Moreover, with $n\geq 4$ and $b\geq 1$, we have $s/m>L^b/2\geq 9$.
It follows that $L^b+2\sqrt{(\log n)/L}\sqrt{s/m} L^b\leq
\sqrt{s/m} L^b$, as desired.

\paragraph{Deviation from small groups} We now consider
the contribution to our bin from the small groups in $S_0$.
These groups have size most $(mL^b/2)^{1-1/b}\leq
mL^{b-1}/2$,
and $b-1$ free character positions, so inductively from \req{eq:ind-L}, 
each group
contributes at most $d_0\leq L^{b-1}$ to each bin. We want
to bound the probability that the deviation from the
mean $\mu_0=s_0/m$ exceeds $\bar\Delta_0=\max\{\Delta_0,L^b\}$.

Suppose $\mu_0\leq \bar\Delta_0$. 
For a Chernoff upper bound, we use $\mu'=
\bar \Delta_0\geq\mu_0$ and $\delta'=1$,
and get a probability bound of
\[\left(e^{\delta'}/(1+\delta')^{(1+\delta')}\right)^{\mu'/d_0}=
(e/4)^{\mu'/d_0}\leq (e/4)^{L}.\]

On the other hand, if $\mu_0\geq \Delta_0$, we have a relative
deviation of $\delta_0=\Delta_0/\mu_0=\sqrt{m/s_0} L^{b-1/2}\leq 1$.
The probability of this deviation for any fixed bin is
bounded by 
\[\left(e^{\delta_0}/(1+\delta_0)^{(1+\delta_0)}\right)^{\mu_0/d_0}
\leq \exp(-(\mu_0/d_0)\delta_0^2/3)=\exp(-L^b/3)\leq \exp(-L/3).\]

\paragraph{Larger groups}
To deal with a larger group level $i>1$, we will use
a standard symmetric Chernoff bound, which is easily derived from
the negative version of \req{eq:both-chernoff}.
We consider independent $n$ random variables $X_1,....,X_n\in [-d,d]$,
each with mean zero. Let $X=\sum_i X_i$. For
any $\delta>0$, 
\begin{equation}
\Pr[|X|\ge \delta dn] \le 2\exp(-n\delta^2/4)
\label{eq:symmetric}
\end{equation}
As we did for \req{eq:both-chernoff} in Section~\ref{sec:fix-chernoff}, we note that \req{eq:symmetric} also
holds when $X_i$ depends on the previous $X_j$, $j<i$, as long
as $|X_i|\leq d$ and $\E[X_i]=0$.
Back to our problem, let $s_i$ be the total size. Each group $G$
has size at least $t_i=mL^{b-1}2^{i-2}$, so we have at most $n_i=s_i/t_i$ 
groups. The group $G$ has only $b-1$ varying characters and 
$t_i\geq t_1=mL^{b-1}/2$, so inductively from \req{eq:ind-L}, the contribution
of $G$ to any bin deviates by at most 
$d_i=\sqrt{|G|/m}L^{b-1}<\sqrt{2t_i/m}L^{b-1}$ from the mean $|G|/m$. 
We let $X_G$ denote the contribution of $G$ to our bin minus
the mean $|G|/m$. Thus, regardless of the distribution of previous groups,
we
have $\E[X_G]=0$ and $|X_G|\leq d_i$.
We want to bound the probability that $|\sum_G X|\geq \Delta_i$.
We therefore apply (\ref{eq:symmetric}) with 
\[\delta_i =\Delta_i/(d_in_i)
=\sqrt{s_i/m}\,L^{b-1/2}/\left(\sqrt{2t_i/m}L^{b-1} s_i/t_i\right) = \sqrt{t_iL/(2s_i)}.\]
The probability that the contribution to our bin deviates by more than
$\Delta_i$ is therefore bounded by 
\[2\exp(-n_i\delta_i^2/4)=2\exp(-s_i/t_i\cdot\sqrt{t_iL/(2s_i)}^2/4)=2\exp(-L/8).\]
Conveniently, this dominates the error probabilities
of $(e/4)^{L}$ and $\exp(-L/3)$ from level $0$.
There are less $\log n$ levels, so 
by the union bound, the probability of a too large deviation from any level to
any bin is bounded by $m(\log n)2\exp(-L/8)$.

\paragraph{Error probability for the whole induction}
Above we proved that any particular inductive step fails with
probability at most $m(\log n)2\exp(-L/8)$. We want to conclude that
the probability of any failure in the whole induction is bounded by
$nm\exp(-L/8)$.

First we note that the all the parameters of the inductive steps
are determined deterministically. More precisely, the inductive
step is defined via the deterministic grouping from 
Lemma \ref{lem:group-size}. This grouping corresponds to
a certain deterministic ordering of the position characters, and
we use this ordering to analyze the failure probability of the
inductive step. However, there is no relation between the ordering used to
analyze different inductive steps.
Thus, we are dealing with
a recursive deterministic partitioning. Each partitioning results in
groups that are at least $L$ times smaller, so the recursion corresponds to a 
tree with degrees at least $L$. At the bottom we have base cases, each
containing at least one key. The internal nodes correspond to
inductive steps, so we have less than $2n/L$ of these. If $L\geq 4\log n$,
we conclude that the combined failure probability is at
most $2n/L\,m(\log n)2\exp(-L/8)\leq nm\exp(-L/8)$. With $L\geq 32\log n$,
we get that the overall failure probability is bounded by $\exp(-L/64)$.
This completes our proof that \req{eq:ind-L} is satisfied with
high probability, hence the proof of 
Theorem \ref{thm:chernoff} \req{eq:few-bins}.

\section{Linear Probing and the Concentration in Arbitrary Intervals}   \label{sec:linprobe}

We consider linear probing using simple tabulation hashing to store a
set $S$ of $n$ keys in an array of size $m$ (as in the rest of our
analyses, $m$ is a power of two). Let $\alpha = 1-\eps= \frac{n}{m}$
be the fill.  We will argue that the performance with simple
tabulation is within constant factors of the performance with a truly
random function, both in the regime $\eps \geq 1/2$ (high fill) and
$\alpha \leq 1/2$ (low fill). With high fill, the expected number of
probes when we insert a new key is $O(1/\eps^2)$ and with low fill, it
is $1+O(\alpha)$.

Pagh et al.~\cite{pagh07linprobe} presented an analysis of
linear probing with 5-independent hashing using 4\th moment
bounds. They got a bound of $O(1/\eps^{13/6})$ on the expected
number of probes. We feel that our analysis, which is centered around dyadic
intervals, is simpler, tighter, and more generic. Recall that a dyadic interval,
is an interval of the form $[j2^i,(j+1)2^i)$ for integers $i$ and $j$. In fact, as we shall
see later in Section \ref{sec:fourth-lp}, our analysis also
leads to an optimal $O(1/\eps^2)$ for 5-independent hashing,
settling an open problem from ~\cite{pagh07linprobe}. 
However, with simple tabulation, we get much stronger concentration
than with 5-independent hashing, e.g., constant variance with
constant $\eps$ whereas the variance is only known to be
$O(\log n)$ with 5-independent hashing.

When studying the complexity of linear probing, the basic measure is
the length $R=R(q,S)$ of the longest run of filled positions starting
from $h(q)$, that is, positions $h(q),...,h(q)+\ell-1$ are filled with
keys from $S$ while $h(q)+R$ is empty. This is the case if and only if
$R$ is the largest number such there is an interval $I$ which contains
$h(q)$ and $h(q)+R-1$ and such that $I$ is full in the sense that at
least $|I|$ keys from $S$ hash to $I$. In our analysis, we assume that
$q$ is not in the set. An insert or unsuccessful search with $q$ will
consider exactly $R+1$ positions. A successful search for $q$ will
consider at most $R(q,S\setminus\{q\})+1$ positions. For
deletions, the cost is $R(q,S)+1$ but where $q\in S$. For now
we assume $q\not\in S$, but we shall return to the case $q\in S$ in Section \ref{sec:lp-cost}.

Aiming for upper bounds on $R(q,S)$, it is simpler to
study the symmetric length $L(q,S)$ of the longest filled interval containing $h(q)$. 
Trivially $R(q,S)\leq L(q,S)$. We have $n=|S|$ keys hashed
into $m$ positions. We defined the fill $\alpha=n/m$ and $\eps=(1-\alpha)$.
The following theorem considers the case of
general relative deviations $\delta$. To bound $\Pr[L(q,S)\geq\ell]$, we can
apply it with $p=h(q)$ and $\delta=\eps$ or $(1+\delta)=1/\alpha$.
\begin{theorem}\label{thm:linprobe}
  Consider hashing a set of $n$
  keys into $\{0,...,m-1\}$ using simple tabulation (so $m$ is a power
  of two). Define the fill $\alpha=n/m$.  Let $p$ be any point which
  may or may not be a function of the hash value of specific query key
  not in the set. Let $\cD_{\ell,\delta,p}$ be the event that there
  exists an interval $I$ containing $p$ and of length at least $\ell$
  such that the number of keys
  $X_I$ in $I$ deviates at least $\delta$ from the mean, that is,
  $|X_I-\alpha|I||\geq \delta\alpha|I|$. Suppose $\alpha\ell\leq
  n^{1/(3c)}$, or equivalently, $m/\ell\geq n^{1-1/(3c)}$. Then for any constant $\gamma$,
\begin{equation}\label{eq:lp}
\Pr[\cD_{\ell,\delta,p}]\leq \left\{\begin{array}{ll}
2e^{-\Omega(\alpha\ell\delta^2)} + (\ell/m)^{\gamma}&\mbox{if }\delta\leq 1\\
(1+\delta)^{-\Omega((1+\delta)\alpha\ell)} + (\ell/m)^{\gamma}&\mbox{if }\delta=\Omega(1)
\end{array}\right.
\end{equation}
Moreover, with probability $1-n^{-\gamma}$, for every interval $I$, if $\alpha|I|\geq 1$,
the number of keys in $I$ is 
\begin{equation}\label{eq:few-bins-lp}
\alpha|I|\pm O\left(\sqrt{\alpha|I|}\log^c n\right).
\end{equation}
\end{theorem}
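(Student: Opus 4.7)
My approach is a dyadic reduction: I view every dyadic interval of length $s = 2^i$ as a single bin in the partition of $\{0,\ldots,m-1\}$ into $m/s$ equal bins, so that Theorem~\ref{thm:chernoff} directly yields Chernoff-type bounds for its load, and then show that a general interval deviating from its mean forces some dyadic interval at a comparable scale to deviate.

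For \req{eq:lp}, fix the point $p$. At each scale $s = 2^i \geq \ell$, only three dyadic intervals of length $s$ are relevant: the dyadic interval $D_i(p)$ containing $p$ together with its two neighbors at the same scale. Any interval $I$ containing $p$ with $|I| \in [s, 2s)$ is contained in the union of these three, so if $X_I$ deviates from its mean by $\delta\,\alpha|I|$, pigeonhole gives that one of them has relative deviation $\Omega(\delta)$ from its own mean $\alpha s$. I then apply Theorem~\ref{thm:chernoff} with $m/s$ bins of size $s$, whose hypothesis $m/s \geq n^{1-1/(2c)}$ is ensured at scale $s = \ell$ by $\alpha\ell \leq n^{1/(3c)}$. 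This bounds the per-scale failure probability by $2e^{-\Omega(\delta^2\alpha s)} + (s/m)^\gamma$ (or the Poisson analog when $\delta = \Omega(1)$). Summing over the $O(\log(m/\ell))$ scales $s = 2^i \geq \ell$, the exponential terms form a geometric series dominated by the smallest scale $s = \ell$, yielding the two cases of \req{eq:lp}.

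For \req{eq:few-bins-lp}, I apply Theorem~\ref{thm:chernoff}~\req{eq:few-bins} at each of the $O(\log m)$ dyadic scales and union bound, concluding that with probability $1 - n^{-\gamma}$ every dyadic interval $J$ has $|X_J - \alpha|J|| = O(\sqrt{\alpha|J|}\,\log^c n)$. An arbitrary interval $I$ decomposes into $O(\log m)$ dyadic pieces of geometrically decreasing sizes; summing the individual deviations and applying Cauchy--Schwarz yields total deviation $O(\sqrt{\alpha|I|}\,\log^{c+O(1)} n)$, which is absorbed by adjusting the implicit constant in the $\log^c n$ term.

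\textbf{The main obstacle} is controlling very large scales where $m/s < n^{1-1/(2c)}$, so that the hypothesis of the Chernoff branch of Theorem~\ref{thm:chernoff} no longer holds. In that regime one must either fall back on \req{eq:few-bins} and translate its form, or observe that the mean $\alpha s$ is then so enormous that even crude bounds on the probability of a constant-fraction deviation are dominated by the failure probability at scale $\ell$. Some careful bookkeeping of constants is also needed so that the $O(\log(m/\ell))$ union-bound across scales does not spoil the sharpness of \req{eq:lp}; this is handled by invoking Theorem~\ref{thm:chernoff} with a slightly larger $\gamma$ at each scale.
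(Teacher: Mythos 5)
Your central reduction for \req{eq:lp} has a genuine gap. You claim that if an interval $I\ni p$ with $|I|\in[s,2s)$ deviates by $\delta\alpha|I|$, then by pigeonhole one of the three dyadic intervals of length $s$ covering $I$ has relative deviation $\Omega(\delta)$. This is false because $I$ only \emph{partially} overlaps the two boundary dyadic intervals: a dyadic interval $D$ of length $s$ can have exactly its mean $\alpha s$ keys (zero deviation) while all of those keys land inside $D\cap I$, contributing up to $\alpha s\approx\alpha|I|$ of excess to $I$. So $I$ can have relative deviation close to $2$ while every scale-$s$ dyadic interval is exactly on its mean. This is precisely why the paper's Claim~\ref{claim:large-delta} drops to a level $j$ with $2^j\lesssim\frac{\delta}{1+\delta/2}2^{i-2}$, so that the total mass of the two straddling $j$-bins is at most $\frac{\delta}{2}\alpha|I|$. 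Moreover, patching your argument this way only works for $\delta=\Omega(1)$: for $\delta=o(1)$ you would need $\approx 1/\delta$ relevant $j$-bins of size $\approx\delta 2^i$, and the union bound then gives $e^{-\Omega(\alpha 2^i\delta^3)}$ rather than the claimed $e^{-\Omega(\alpha 2^i\delta^2)}$. The paper avoids this loss with a genuinely multi-scale decomposition (Claim~\ref{claim:small-delta}): it declares a $j$-bin ``dangerous'' if it deviates by $\Delta_{j,i}=\Theta(\delta\alpha 2^i/2^{(i-j)/5})$, shows the thresholds sum geometrically to less than $\delta\alpha 2^i$, and checks that the resulting probability sum over $j$ is dominated by $j=i$. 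Some version of this idea is unavoidable for the small-$\delta$ branch; your proposal is missing it.

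Two smaller points. For \req{eq:few-bins-lp}, your claim that a $\log^{c+O(1)}n$ bound ``is absorbed by adjusting the implicit constant in the $\log^c n$ term'' is not valid: extra polylogarithmic factors cannot be hidden in the constant of $\log^c n$. The correct route (and the paper's) is that the per-scale deviations $O(\sqrt{\alpha 2^j}\log^c n)$ form a geometric series in $j$, so direct summation already gives $O(\sqrt{\alpha|I|}\log^c n)$ with no extra logs; you also need to round the decomposition at level $j_0=\lceil\log_2(1/\alpha)\rceil$ since Theorem~\ref{thm:chernoff}~\req{eq:few-bins} requires $m'\le n$. Your treatment of the large scales where $m/s<n^{1-1/(2c)}$ is in the right spirit but would need to be made precise, e.g.\ via the paper's device of union-bounding the event $\cA_{i,\delta}$ over all $m/2^i$ aligned starting points.
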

Theorem~\ref{thm:linprobe} is a very strong generalization
of Theorem~\ref{thm:chernoff}. A bin from Theorem~\ref{thm:linprobe}  corresponds to a specific dyadic interval of length
$\ell=2^i$ (using $m'=m/2^i$ in Theorem~\ref{thm:linprobe}). In Theorem~\ref{thm:linprobe} we consider every interval of length
at least $\ell$ which contains a specific point,
yet we get the same deviation bound modulo a change in the constants hidden in
the $\Omega$-notation.

To prove the bound on $\cD_{\ell,\delta,p}$, we first consider the
weaker event $\cC_{i,\delta,p}$ for integer $i$ that there exists an
interval $I\ni p$, $2^i\leq |I|<2^{i+1}$,  such
that the relative deviation $\delta$ in the number of keys $X_I$ is at
least $\delta$. As a start we will prove that the
bound from \req{eq:lp} holds for $\cC_{i,\delta,p}$. Essentially Theorem~\ref{thm:linprobe} 
will follow because the probability bounds decrease exponentially in
$i$.  

When bounding the probability of $\cC_{i,\delta,p}$, we will consider any
$i$ such that $\alpha2^i\leq n^{1/(2c)}$ whereas we in Theorem
\ref{thm:linprobe} only considered $\alpha\ell\leq n^{1/(3c)}$.  The
constraint $\alpha2^i\leq n^{1/(2c)}$ matches that in 
Theorem~\ref{thm:chernoff} with $m'=m/2^i$. In Theorem~\ref{thm:chernoff} 
we required $m'\geq n^{1-1/(2c)}\iff n/m'=\alpha 2^i\leq n^{1/(2c)}$.

Our proof is based on 
decompositions of intervals into dyadic intervals. To simplify
the terminology and avoid confusion
between intervals and dyadic intervals, we let a {\em bin on level $j$},
or for short, a {\em $j$-bin}, denote a dyadic interval of length
$2^j$. The expected number of keys in a $j$-bin is $\mu_j=\alpha 2^j$.
The $j$-bins correspond to the bins in Theorem \ref{thm:chernoff} with $m'=m/2^j$.
For any $j\leq i$, consider the $j$-bin containing $p$, and the
$2^{i+1-j}$ $j$-bins on either side. We say that these $2^{i+2-j}+1$
consecutive $j$-bins are {\em relevant\/} to $\cC_{i,\delta,p}$ noting
that they cover any $I\ni p$, $|I|\leq 2^{i+1}$.

\paragraph{\underline{$\delta=\Omega(1)$}}
To handle $\delta=\Omega(1)$, we will use the following
combinatorial claim that holds for any $\delta$.
\begin{claim}\label{claim:large-delta}
Let $j$ be maximal such that $2^j < \frac{\delta}{1+\delta/2} 2^{i-2}$. If
$\cC_{i,\delta,p}$ happens, then
one of the relevant $j$-bins contains more than 
$(1+\frac{\delta}{2})\alpha 2^j$ keys.
\end{claim}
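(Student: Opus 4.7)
The plan is a straightforward covering/pigeonhole argument. Since $\delta=\Omega(1)$ and the claim mirrors the upper-tail (Poisson-style) branch of Theorem~\ref{thm:chernoff}, I treat $\cC_{i,\delta,p}$ in the direction where some interval $I\ni p$ with $2^i\le |I|<2^{i+1}$ satisfies $X_I\ge (1+\delta)\alpha|I|$ (the lower-tail direction, if one wanted it, would be symmetric). I will argue by contraposition: assume that every relevant $j$-bin contains at most $(1+\tfrac{\delta}{2})\alpha 2^j$ keys, and derive an upper bound on $X_I$ that contradicts $X_I\ge (1+\delta)\alpha|I|$.

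The key combinatorial observation is that $I$ is covered by the consecutive $j$-bins it intersects. If $N$ denotes their number, then since $j$-bins are aligned while $I$ is an arbitrary interval, $N\le \lceil |I|/2^j\rceil + 1\le |I|/2^j+2$, which gives $N\cdot 2^j\le |I|+2\cdot 2^j$ (overhang by at most one $j$-bin on each side). All these $j$-bins lie within the $2^{i+1}$-neighborhood of $p$, hence among the relevant $j$-bins defined above the claim, so the contrapositive hypothesis yields
\[
X_I \;\le\; \sum_{\text{covering } j\text{-bins}} X_{\text{bin}} \;\le\; N\cdot (1+\tfrac{\delta}{2})\alpha 2^j \;\le\; (1+\tfrac{\delta}{2})\alpha\bigl(|I|+2\cdot 2^j\bigr).
\]

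The rest is an elementary algebraic check that the claim's choice of $j$ forces this upper bound below $(1+\delta)\alpha|I|$. The required strict inequality $(1+\tfrac{\delta}{2})(|I|+2\cdot 2^j)<(1+\delta)|I|$ rearranges to $(\tfrac{\delta}{2})|I|>(1+\tfrac{\delta}{2})\cdot 2\cdot 2^j$, i.e.\ $2^j<\tfrac{\delta/4}{1+\delta/2}\,|I|$. Using $|I|\ge 2^i$, this is implied by $2^j<\tfrac{\delta}{1+\delta/2}\,2^{i-2}$, which is exactly the maximality condition defining $j$ in the claim, so the contradiction is immediate.

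I do not anticipate any real obstacle: the argument is a one-step dyadic cover plus an arithmetic inequality, and the parameter $j$ is precisely tuned so that the algebra closes. The only minor housekeeping is to verify that a nonnegative integer $j$ exists, which holds whenever $\tfrac{\delta}{1+\delta/2}\,2^{i-2}\ge 1$; in the opposite (tiny $i$) regime the event $\cC_{i,\delta,p}$ either has no valid $I$ or can be absorbed into trivial bounds, and plays no role in the subsequent proof of Theorem~\ref{thm:linprobe}.
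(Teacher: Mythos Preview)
Your proof is correct and follows essentially the same covering-by-$j$-bins idea as the paper. The only cosmetic difference is that the paper splits the cover into ``internal'' $j$-bins contained in $I$ plus two boundary bins and bounds the deviation contribution of each part separately, whereas you simply upper-bound $X_I$ by the total content of all $N\le |I|/2^j+2$ covering bins; both reduce to the same arithmetic check $2(1+\tfrac{\delta}{2})\,2^j<\tfrac{\delta}{2}\,|I|$, which is exactly the defining inequality for $j$.
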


\begin{proof}
  Assume that all the relevant $j$-bins have relative deviation
  at most $\frac{\delta}{2}$. Let $I$ be an interval witnessing
  $\cC_{i,\delta,p}$, that is, $p\in I$, $2^i\leq |I|<2^{i+1}$,  and
  the number of keys in $I$ deviates by $\delta\alpha|I|$ from
  the mean $\alpha|I|$.
  The interval $I$ contains some number of
  the $j$-bins, and properly intersects at most two in 
  the ends. The relative deviation within $I$ is
  $\delta$, but the $j$-bins have only half this
  relative deviation. This means that the $j$-bins
  contained in $I$ can contribute at most half the deviation. The
  remaining $\frac{\delta}2\alpha|I|$ has to
  come from the two $j$-bins intersected in the ends. 
  Those could contribute all or none of their keys to the deviation
  (e.g.~all keys are on the last/first position of the
  interval). However, together they have at most 
  $2(1+\frac{\delta}{2})\alpha2^j<\delta\alpha2^{i-1}\leq \frac{\delta}{2}\alpha|I|$ keys.
\end{proof}
Let $\delta=\Omega(1)$ and define $j$ as in Claim \ref{claim:large-delta}.
Then $j=i-\Omega(1)$. To bound the probability of $\cC_{i,\delta,p}$
it suffices to bound the probability that none of the $2^{i+2-j}+1=O(1)$
relevant $j$-bins has relative deviation beyond $\delta'=\delta/2$.
We will apply Theorem~\ref{thm:chernoff} \req{eq:Poisson} 
with $m'=m/2^j$ and $\mu'=\alpha2^j$ to each of these $j$-bins. Checking the
conditions of Theorem~\ref{thm:chernoff}, we note that
the $k$'th relevant $j$-bin can specified as a function of $p$ which again 
may be a function of the hash of 
the query. Also, as noted above, $m'>m/2^i\geq n/(\alpha2^i)\geq n^{1-1/(2c)}$. 
From \req{eq:Poisson} we get that
\[\Pr[\cC_{i,\delta,p}]=O(1)\left((1+\delta/2)^{-\Omega((1+\delta/2)\alpha 2^j)} 
+ (2^j/m)^{\gamma}\right)
=(1+\delta)^{-\Omega((1+\delta)\alpha 2^i)}+O\left((2^i/m)^{\gamma}\right).\]

\paragraph{\underline{$\delta\leq 1$}}
We now consider the case $\delta\leq 1$. In particular, this covers
the case $\delta=o(1)$ which was not covered above. The issue
is that if we apply Claim \ref{claim:large-delta}, we could
get $j=i-\omega(1)$, hence $\omega(1)$ relevant $j$-bins, and
then the applying the union bound would lead to a loss.
To circumvent the problem we will consider a tight decomposition
involving bins on many levels below $i$ but with bigger deviations
on lower levels.
For any level $j\leq i$, we 
say that a $j$-bin is ``dangerous'' for level
$i$ if it has deviation at  least:
\[ \Delta_{j,i} ~=~ \tfrac{\delta\alpha2^i}{24} / 2^{(i-j)/5} 
~=~ \tfrac{\delta\alpha}{24}\cdot 2^{\frac45 i + \frac{1}{5} j}.
\]
\begin{claim}\label{claim:small-delta}
Let $j_0$ be the smallest non-negative integer satisfying $\Delta_{j_0,i} \le
\alpha 2^{j_0}$. If
$\cC_{i,\delta,p}$ happens, then for some $j\in\{j_0,...,i\}$,
there is a relevant $j$-bin which is
dangerous for level $i$.
\end{claim}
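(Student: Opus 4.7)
\textbf{Proof plan for Claim \ref{claim:small-delta}.} I will argue the contrapositive: if no relevant $j$-bin is dangerous for level $i$ for any $j\in\{j_0,\ldots,i\}$, then no interval $I\ni p$ with $2^i\le|I|<2^{i+1}$ can witness $\cC_{i,\delta,p}$. Fix such an $I$ and decompose it as follows. Trim the endpoints of $I$ inward to the nearest multiples of $2^{j_0}$ to obtain $I^\circ\subseteq I$; what remains, $I\setminus I^\circ$, consists of at most two ``boundary'' pieces, each a proper subset of a single $j_0$-bin. Since $I\subseteq[p-2^{i+1},p+2^{i+1}]$, both of these $j_0$-bins are relevant. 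The trimmed interval $I^\circ$ is then decomposed via the standard dyadic construction into at most two dyadic pieces per level $j\in\{j_0,\ldots,i\}$; each lies inside $I$ and is thus a relevant $j$-bin.

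Next, I would bound the contribution of each piece to the deviation $X_I-\alpha|I|=\sum_l(X_{P_l}-\alpha|P_l|)$. For a dyadic piece on level $j\ge j_0$ that is not dangerous, $|X_{P_l}-\alpha|P_l||<\Delta_{j,i}$. For a boundary piece $P\subseteq B$, where $B$ is the containing $j_0$-bin, I would use $X_P\le X_B\le \alpha 2^{j_0}+\Delta_{j_0,i}$ together with $\alpha|P|\le\alpha 2^{j_0}$ to get $|X_P-\alpha|P||\le \alpha 2^{j_0}+\Delta_{j_0,i}$. The minimality of $j_0$ gives $\Delta_{j_0-1,i}>\alpha 2^{j_0-1}$, which rearranges to $\alpha 2^{j_0}<2^{4/5}\Delta_{j_0,i}$, so each boundary piece contributes $O(\Delta_{j_0,i})$ to the deviation.

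Summing everything yields
\[
|X_I-\alpha|I||\;\le\;\sum_{j=j_0}^{i}2\,\Delta_{j,i}+O(\Delta_{j_0,i})
\;=\;\frac{\delta\alpha 2^i}{12}\sum_{k=0}^{i-j_0}2^{-k/5}+O(\Delta_{j_0,i}),
\]
and the geometric series is bounded by $1/(1-2^{-1/5})=O(1)$. The whole expression is therefore a fixed constant times $\delta\alpha 2^i/24$; the constant $24$ in the definition of $\Delta_{j,i}$ is chosen exactly to make this bound strictly smaller than $\delta\alpha 2^i\le\delta\alpha|I|$, contradicting $\cC_{i,\delta,p}$.

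The main obstacle is handling the two boundary pieces, because they are not themselves dyadic and on them the deviation $\Delta_{j_0,i}$ could, a priori, exceed the mean $\alpha 2^{j_0}$. The choice of $j_0$ as the \emph{smallest} level with $\Delta_{j_0,i}\le\alpha 2^{j_0}$ is what lets me convert the crude trivial bound into $O(\Delta_{j_0,i})$, so that the boundary contribution is absorbed into the same geometric sum and the final constant $1/24$ wins against both the geometric factor and the boundary correction.
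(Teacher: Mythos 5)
Your proof is correct and follows essentially the same route as the paper's: the dyadic decomposition of $I$ rounded to level $j_0$, the boundary pieces bounded by $\Delta_{j_0,i}+\alpha 2^{j_0}$ with the minimality of $j_0$ converting $\alpha 2^{j_0}$ into $O(\Delta_{j_0,i})$, and the geometric sum $\sum_j 2\Delta_{j,i}$ absorbed by the constant $24$. No substantive differences.
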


\begin{proof}
Witnessing $\cC_{i,\delta,p}$, let $I\ni p$, $2^i\leq |I|<2^{i+1}$,
have at least  $(1+\delta)\alpha |I|$ keys. First we make the
standard dyadic decomposition of $I$ into 
maximal level bins: at most two $j$-bins on each 
level $j=0 \twodots i$. For technical reasons, if $j_0>0$, the
decomposition is ``rounded to level $j_0$''. 
Formally, the decomposition rounded to level $j_0$ is
obtained by discarding all the bins on
levels below $j_0$, and including one $j_0$-bin on both sides 
(each covering the discarded bins on lower levels). Note that
all the level bins in the decomposition of $I$ are
relevant to $\cC_{i,\delta,p}$.

Assume for a contradiction that no relevant bin on levels
$j_0,...,i$ is dangerous for level $i$. In particular, this
includes all the level bins from our decomposition. We will 
sum their deviations, and show
that $I$ cannot have the required deviations. In case of rounding,
all keys in the two rounding $j_0$-bins can potentially be in or out of 
$I$ (all keys in such
intervals can hash to the beginning/end), contributing at most
$\Delta_{j_0,i}+\alpha 2^{j_0}$ keys to the deviation in $I$. 
By choice of $j_0$, we have $\alpha 2^{j_0-1}<\Delta_{j_0-1,i}$.
It follows that the total
contribution from the rounding bins is at most
\[2(\Delta_{j_0,i}+\alpha 2^{j_0})\leq 2(\Delta_{j_0,i}+2\Delta_{j_0-1,i})
< 6\Delta_{i,i}=\tfrac{\delta\alpha2^i}{4}.\]
The other bins from the decomposition are internal to $I$. This
includes discarded ones in case of rounding. A $j$-bin
contributes at most $\Delta_{j,i}$ to
the deviation in $I$, and there are at most 2 such $j$-bins for each
$j$. The combined internal contribution is therefore bounded
by
\begin{equation}\label{eq:internal-contr}
2\sum_{j=0}^i \Delta_{j,i}
=  2\sum_{j=0}^i\left( \frac{\delta\alpha 2^i}{24} / 2^{(i-j)/5}\right)
=\frac{\delta\alpha 2^i}{12} \sum_{h=0}^i 1/2^{h/5}< 
\frac{\delta\alpha 2^i}{12}/(1-2^{-1/5})<\frac{7.73\delta\alpha 2^i}{12}
\end{equation}
The total deviation is thus at $(\frac14+\frac{7.73}{12})\delta\alpha 2^i$,
contradicting that $I$ had deviation $\delta\alpha 2^i$.
\end{proof}

For each $j=j_0,...,i$, we bound the probability that there exists a
relevant $j$-bin which is dangerous for level $i$. There are $2^{2+i-j}+1$
such intervals. We have mean
$\mu_j=\alpha2^j$ and
deviation $\delta_{i,j}\mu_j = \Delta_{i,j}=\Theta(\delta 2^{\frac{4}{5} i
  +\frac{1}{5} j})$. Therefore $\delta_{i,j} = \Delta_{i,j} / \mu_j =
\Theta(\delta 2^{\frac{4}{5}(i-j)})$. Note that $\delta_{i,j} < 1$ by choice
of $j_0$. We can therefore apply
\eqref{eq:normal} from Theorem \ref{thm:chernoff}.  Hence, for any constant $\gamma$, the
probability that there exists a relevant $j$-bin which is dangerous for $i$ is
bounded by
\begin{eqnarray*}
(2^{2+i-j}+1)\left(2e^{-\Omega(\mu_j\delta_{i,j}^2)} + (2^j/m)^{\gamma}\right)
&\le& O(2^{i-j})\left(e^{-\Omega\left(\alpha 2^j\delta 2^{4(i-j)/5})^2\right)} + (2^j/m)^{\gamma}\right)
\\
&=& O\left(2^{i-j}e^{-\Omega\left(\alpha2^i\delta^2 2^{\frac35(i-j)}\right)} + (2^i/m)^{\gamma}/2^{(i-j)(\gamma-1)}\right).
\end{eqnarray*}
To bound the probability of $\cC_{i,\delta,p}$,  we sum the above bound for 
$j=j_0,...,i$. We will argue that the $j=i$ dominates. If $\gamma>2$, then
clearly this is the case for the term
$O\left((2^i/m)^{\gamma}/2^{(i-j)(\gamma-1)}\right)$. It
remains to argue that 
\begin{equation}\label{eq:lp-sum}
\sum_{h=0}^{i-j_0} O\left(2^he^{-\Omega(\alpha2^i\delta^2 2^{\frac35 h})}\right)=
O(e^{-\Omega(\alpha 2^i\delta^2)}).
\end{equation}
At first this may seem obvious since the increase with $h$ is exponential while
the decrease is doubly
exponential. The statement is, however, not true 
if $\alpha2^i\delta^2 =o(1)$, for $e^{-\Omega\left(\alpha2^i\delta^2 2^{\frac35 h}\right)}\approx 1$ 
as long as  $\alpha2^i\delta^2 2^{\frac35 h}=o(1)$.
We need to argue that $\alpha2^i\delta^2 =\Omega(1)$. Then for $h=\omega(1)$,
the bound will decrease super-exponentially in $h$.
Recall that our final goal for $\delta\leq 1$ is to prove that 
$\Pr[\cC_{i,\delta,p}]\leq 2 \exp(-\Omega(\alpha2^i\delta^2))+O\left((2^i/m)^\gamma\right)$. This statement is trivially true if
$\exp(-\Omega(\alpha2^i\delta^2))\geq 1/2$. Thus we may
assume $\exp(-\Omega(\alpha2^i\delta^2))<1/2$ 
and this implies $\alpha2^i\delta^2=\Omega(1)$, as desired. Therefore
the sum in \req{eq:lp-sum} is dominated in by the case $h=(i-j)=0$. Summing
up, for any constant $\gamma>2$ and $\delta\leq 1$, we have proved that
\begin{eqnarray*}
\Pr[\cC_{i,\delta,p}]&=&\sum_{h=0}^{i-j_0}
O\left(2^he^{-\Omega\left(\alpha2^i\delta^2 2^{\frac35h}\right)} + (2^i/m)^{\gamma}/2^{h(\gamma-1)}\right)\\
&=&O\left(e^{-\Omega\left(\alpha2^i\delta^2\right)} + (2^i/m)^{\gamma}\right)\\
&=&2e^{-\Omega\left(\alpha2^i\delta^2 \right)} + O((2^i/m)^{\gamma}).
\end{eqnarray*}
The constraint $\gamma>2$ has no effect, since we get
better bounds with larger $\gamma$ as long as $\gamma$ remains constant. All
together, for $\alpha2^i\leq n^{1/(2c)}$ or equivalently, $m/2^i\geq n^{1-1/(2c)}$, we have proved 
\begin{equation}\label{eq:C}
\Pr[\cC_{i,\delta,p}]\leq \left\{\begin{array}{ll}
2e^{-\Omega(\alpha 2^i\delta^2)} + (2^i/m)^{\gamma}&\mbox{if }\delta\leq 1\\
(1+\delta)^{-\Omega((1+\delta)\alpha 2^i)} + (2^i/m)^{\gamma}&\mbox{if }\delta=\Omega(1)
\end{array}\right.
\end{equation}
We now want to bound $\Pr[\cD_{\ell,\delta,p}]$ as in \req{eq:lp}. For
our asymptotic bound, it suffices to consider cases where $\ell=2^k$
is a power of two. Essentially we will use the trivial
bound $\Pr[\cD_{2^k,\delta,p}]\leq
\sum_{h\geq 0}\Pr[\cC_{k+h,\delta,p}]$. First we want to argue that
the terms 
$e^{-\Omega(\alpha 2^{k+h}\delta^2)}=e^{-\Omega(\alpha 2^k\delta^2)2^h}$, $\delta\leq 1$, and
$(1+\delta)^{-\Omega((1+\delta)\alpha 2^{k+h})}=
(1+\delta)^{-\Omega((1+\delta)\alpha 2^{k})2^{h}}$, $\delta=\Omega(1)$, are dominated by
the case $h=0$.  Both terms are of the form $1/a^{2^h}$ and we
want to show that $a=1+\Omega(1)$. For the case $\delta\leq 1$, we can
use the same trick before: to prove \req{eq:lp} it
suffices to consider $\exp(-\Omega(\alpha2^k\delta^2))<1/2$ which
implies $\alpha 2^{k}\delta^2=\Omega(1)$ and 
$e^{\Omega(\alpha 2^k\delta^2)}=1+\Omega(1)$. When it comes
to $(1+\delta)^{-\Omega((1+\delta)\alpha 2^{k+h})}$, we have $\delta=\Omega(1)$.
Moreover, to get the strongest probability bound on 
$\Pr[\cD_{2^k,\delta,p}]$, we can assume $(1+\delta)\alpha2^k\geq 1$.
More precisely, suppose $(1+\delta)\alpha2^k<1$ and
define $\delta'>\delta$ such that $(1+\delta')\alpha 2^k=1$.
If an interval is non-empty, it has at least 1 key, so
$\cD_{2^k,\delta,p}\iff \cD_{2^k,\delta',p}$, and the probability
bound from \req{eq:lp} is better with the larger $\delta'$.
Thus we can assume
$(1+\delta)^{\Omega((1+\delta)\alpha 2^k)}=1+\Omega(1)$.
We have now established that for any relevant $\delta$, the bound from
\req{eq:C} is of the form 
\[1/a^{2^h}+(2^{k+h}/m)^{\gamma}\mbox{ where }a=1+\Omega(1).\]
As desired the first term is dominated by the smallest $h=0$.
Two issues remain: the second term is dominated by larger $h$ and
\req{eq:C} only applies when $m/2^{k+h}\geq n^{1-1/(2c)}$.
Define $\bar h$ as the smallest value such that $a^{2^{\bar h}}\geq m/2^k$. 
We have $2^{\bar h}=\lceil \log_a (m/2^k)\rceil=O(\log (m/2^k))$ and the
condition for \req{eq:lp} is that $n^{1-1/(3c)}\leq m/2^k$, so
$m/2^{k+\bar h}=(m/2^k)/O(\log (m/2^k))=\tilde \Omega(n^{1-1/(3c)})>
n^{1-1/(2c)}$. We conclude that
\req{eq:C} applies for any $h\leq\bar h$.

To handle $h\geq\bar h$ we consider the more general event $\cA_{i,\delta}$ 
that the contribution to {\em any\/}
interval of length at least $2^{i}$ has relative deviation at least
$\delta$. It is easy to see that 
\begin{equation}\label{eq:A}
\Pr[\cA_{i,\delta}]\leq m/2^i\cdot\Pr[\cC_{i,\delta,p}].
\end{equation}
More precisely, consider the $m/2^i$ points $p$ that are
multiples of $2^i$. Any interval $I$ of length $\geq 2^i$
can be partitioned into intervals $I_j$ such
that $2^i\leq |I_j|<2^{i+1}$ and $p_j=j m/2^i\in I_j$. If $I$ has
relative deviation $\delta$, then so does some $I_j$, and
then $\cC_{i,\delta,p_j}$ is satisfied. Thus \req{eq:A} follows.
With our particular value of $i=k+\bar h$, for any $\gamma>1$, we get
\begin{eqnarray*}
\Pr[\cA_{k+\bar h,\delta}]\ \leq\ m/2^{k+\bar h}\Pr[\cC_{k+\bar h,\delta,p}]
&=&m/2^{k+\bar h}\left(1/a^{2^{\bar h}}+(2^{k+\bar h}/m)^{\gamma}\right)\\
&\leq &m/2^{k+\bar h}\left(2^k/m+(2^{k+\bar h}/m)^{\gamma}\right)\\
&=&(2^{k+\bar h+1}/m)^{\gamma-1}=\tilde\Omega(2^k/m)^{\gamma-1}
\end{eqnarray*}
Finally we are ready to compute $\Pr[\cD_{2^k,\delta,p}]\leq
\sum_{h=0}^{\bar h-1}\Pr[\cC_{k+h,\delta,p}]+\Pr[\cA_{k+\bar h,\delta}]$.
In $\Pr[\cC_{k+h,\delta,p}]=1/a^{2^h}+(2^{k+h}/m)^{\gamma}$ the
terms $1/a^{2^h}$ were dominated by $h=0$, and the
terms $(2^{k+h}/m)^{\gamma}$ are dominated by $h=\bar h$
which is covered by $\Pr[\cA_{k+\bar h,\delta}]$. We conclude
that 
\[\Pr[\cD_{2^k,\delta,p}]\leq \left\{\begin{array}{ll}
2e^{-\Omega(\alpha2^k\delta^2)} + \tilde\Omega(2^k/m)^{\gamma-1}&\mbox{if }\delta\leq 1\\
(1+\delta)^{-\Omega((1+\delta)\alpha2^k)} + \tilde\Omega(2^k/m)^{\gamma-1}&\mbox{if }\delta=\Omega(1)
\end{array}\right.\]
Since $\gamma$ can always be picked larger, this completes the
proof of \req{eq:lp} in Theorem \ref{thm:linprobe}.

\subsection{The cost of linear probing}\label{sec:lp-cost}
We now return to the costs of the different operations with
linear probing and simple tabulation hashing. 
We have stored a set $S$ of $n$ keys in a table of size $m$. 
Define the fill $\alpha=n/m$ and $\eps=1-\alpha$.  
For any key $q$ we let $R=R(q,S)$ be
the number of filled positions from the hash location of $q$ to the nearest 
empty slot. For insertions and unsuccessful searches, we have
$q\not\in S$, and then the number of cells probed is exactly
$R(q,S)+1$. This also expresses the number of probes when
we delete, but in deletions, we have $q\in S$. Finally,
in a successful search, the number
of probes is bounded by $R(q,S\setminus\{q\})+1$. 
From Theorem \ref{thm:linprobe} we get tail bounds on $R(q,S)$ including
the case where $q\in S$:
\begin{corollary}\label{cor:linprobe}
 For any
  $\gamma=O(1)$ and $\ell\leq n^{1/(3c)}/\alpha$,
\begin{equation}\label{eq:lp*}
\Pr[R(q,S)\geq \ell]\leq \left\{\begin{array}{ll}
2e^{-\Omega(\ell\eps^2)} + (\ell/m)^{\gamma} &\mbox{if }\alpha\geq 1/2\\
\alpha^{-\Omega(\ell)} + (\ell/m)^{\gamma} &\mbox{if }\alpha\leq 1/2
\end{array}\right.
\end{equation}
\end{corollary}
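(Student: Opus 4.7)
The plan is to reduce the tail event $\{R(q,S) \ge \ell\}$ to an instance of the deviation event $\cD_{\ell,\delta,p}$ controlled by Theorem~\ref{thm:linprobe}. Recall from the discussion preceding that theorem that $R(q,S) \ge \ell$ holds exactly when there is a ``full'' interval $I$ containing both $h(q)$ and $h(q)+\ell-1$; that is, an interval of length $|I|\ge\ell$ into which at least $|I|$ keys of $S$ hash. In terms of relative deviation from the mean $\alpha|I|$, this is the event that $X_I \ge (1+\delta)\,\alpha|I|$ with $1+\delta = 1/\alpha$, equivalently $\delta = \eps/\alpha$. Hence $\{R(q,S)\ge\ell\}\subseteq \cD_{\ell,\delta,h(q)}$, and the hypothesis $\ell\le n^{1/(3c)}/\alpha$ is precisely the requirement $\alpha\ell\le n^{1/(3c)}$ of Theorem~\ref{thm:linprobe}.

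Next, I would apply \req{eq:lp} with $p=h(q)$ and $\delta=\eps/\alpha$, splitting on the two fill regimes. In the high-fill case $\alpha\ge 1/2$, we have $\delta = \eps/\alpha \le 2\eps \le 1$, so the first branch gives $2e^{-\Omega(\alpha\ell\delta^2)} + (\ell/m)^\gamma$; since $\alpha\ell\delta^2 = \ell\eps^2/\alpha = \Theta(\ell\eps^2)$, this matches the advertised bound $2e^{-\Omega(\ell\eps^2)} + (\ell/m)^\gamma$. In the low-fill case $\alpha\le 1/2$, we have $1+\delta = 1/\alpha\ge 2$, so $\delta = \Omega(1)$ and the second branch applies; substituting $(1+\delta)\alpha = 1$ gives $(1+\delta)^{-\Omega((1+\delta)\alpha\ell)} = (1/\alpha)^{-\Omega(\ell)}$, matching the claimed $\alpha^{-\Omega(\ell)}$ term.

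Finally, for the case $q\in S$ needed for deletions, I would reduce to $q\notin S$ by considering $S'=S\setminus\{q\}$ of size $n-1$ and fill $\alpha' = (n-1)/m$. If $R(q,S)\ge\ell$, then the witnessing full interval $I\ni h(q)$ contains at least $|I|-1$ keys of $S'$, which corresponds to a slightly smaller relative deviation $\delta'$ with $1+\delta' = (|I|-1)/(\alpha'|I|)$; this differs from $\delta = 1/\alpha-1$ only by additive lower-order terms of size $O(1/|I| + 1/(\alpha m))$ that are absorbed into the constants hidden by $\Omega(\cdot)$ in \req{eq:lp}. The main conceptual step, and arguably the only non-routine one, is the full-interval characterization that converts the run-length question into a deviation question; once that is in hand, the corollary follows as a direct two-case specialization of Theorem~\ref{thm:linprobe}.
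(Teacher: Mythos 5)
Your proposal is correct and follows essentially the same route as the paper: reduce $\{R(q,S)\ge\ell\}$ to the full-interval deviation event, apply \req{eq:lp} with $\delta=\Theta(\eps)$ when $\alpha\ge 1/2$ and $1+\delta=\Theta(1/\alpha)$ when $\alpha\le 1/2$, and handle $q\in S$ by passing to $S'=S\setminus\{q\}$. The only place the paper is more careful is the $q\in S$ reduction: the loss of one key is \emph{not} lower-order when $|I|$ is close to $\ell$ and $\ell<1/\eps$, so one must first observe (as the paper does) that \req{eq:lp*} is trivially true for $\ell=O(1/\eps^2)$, after which $\ell\ge 2/\eps$ may be assumed and the remaining deviation is at least $\eps|I|/2$.
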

\begin{proof}
When $q\not\in S$, we simply apply \req{eq:lp} from
Theorem \ref{thm:linprobe} with $p=h(q)$.
If $\eps\leq 1/2$, we use $\delta=\eps$, and if $\alpha\leq 1/2$, we
use $(1+\delta)=1/\alpha$ implying $\delta\geq 1/2$. In fact,
we can do almost the same if $q\in S$. We will only apply 
Theorem \ref{thm:linprobe} to $S'=S\setminus\{q\}$ which has
fill $\alpha'<\alpha$.
If $\alpha\geq 1/2$ we
note that \req{eq:lp*} does not
provide a non-trivial bounds if $\ell=O(1/\eps^2)$, so we can
easily assume $\ell\geq 2/\eps$. For any interval of this
length to be filled by $S$, the contribution from $S'$ has
to have a relative deviation of at least $\eps/2$. For
$\alpha\leq 1/2$, we note that we can assume $\ell\geq 4$, and 
we choose $\delta$ such that $(1+2\delta)=1/\alpha$. Since
$1/\alpha\geq 2$, we have $(1+\delta)\leq (3/4)/\alpha$. For
an interval of length $\ell$ to be full, it needs
$(1+2\delta)\alpha\ell\geq 1+(1+\delta)\alpha\ell$ keys from $S$,
so it needs at least $(1+\delta)\alpha\ell$ keys from $S'$.
Now \req{eq:lp*} follows from \req{eq:lp} since $(1+\delta)>
\sqrt{1+2\delta}=\sqrt{1/\alpha}$.
\end{proof}
From Corollary \ref{cor:linprobe} it follows that we for $\alpha\geq 1/2$ get
a tight concentration of $R(q,S)$ around $\Theta(1/\eps^2)$, e.g., for any
moment $p=O(1)$, $\E[R(q,S)^p]=O(1/\eps^{2p})$.

Now consider smaller fills $\alpha\leq 1/2$. 
Corollary \ref{cor:linprobe} does not offer strong bound on
$\Pr[R(q,S)>0]$. It works better when $R(q,S)$ exceeds some
large enough constant. However, in Section~\ref{sec:moment}, we 
show that simple tabulation satisfies a certain 4\th moment bounds, and in
Section~\ref{sec:fourth-lp} \req{eq:fourth-lp}, we show that when
$q\not\in S$, this implies that linear probing fills a location
depending on $h(q)$ with 
probability $O(\alpha)$. 
Thus we add to Corollary \ref{cor:linprobe} that for $q\not\in S$,
\begin{equation}\label{eq:non-empty}
\Pr[R(q,S)>0]=O(\alpha).
\end{equation}
Combining this with the exponential drop for larger $R(q,S)$ in 
Corollary \ref{cor:linprobe}, it follows for any constant moment $p$ that
$\E[R(q,S)^p]=O(\alpha)$ when $q\not\in S$. 

Now consider $q\in S$ as in deletions. The
probability that $S'=S\setminus\{q\}$ fills either
$h(q)$ or $h(q)+1$ is $O(\alpha)$. Otherwise
$S$ fills $h(q)$ leaving $h(q)+1$ empty, and
then $R(q,S)=1$. Therefore, for $q\in S$,
\begin{equation}\label{eq:non-empty-del}
\Pr[R(q,S)>1]=O(\alpha).
\end{equation}
Combining this with the exponential drop for larger $R(q,S)$ in 
Corollary \ref{cor:linprobe}, it follows for any constant moment $p$ that
$\E[R(q,S)^p]=1+O(\alpha)$ when $q\in S$.

\subsection{Larger intervals}
To finish the proof of Theorem \ref{thm:linprobe}, we
need to consider the case of larger intervals. We want to show that, 
with probability $1-n^{-\gamma}$ for any $\gamma=O(1)$, 
for every interval $I$ where the mean number of keys is $\alpha|I|\geq 1$,
the deviation is at most 
\[O\left(\sqrt{\alpha|I|}\log^c n\right).\]
Consider an interval $I$ with $\alpha|I|\geq 1$. 
As in the proof of Claim \ref{claim:small-delta},
we consider a maximal dyadic decomposition into level bins
with up to two $j$-bins for each $j\leq i=\lfloor\log_2 |I|\rfloor$.
Let $j_0=\lceil \log_2 (1/\alpha)\rceil$. Again we
round to level $j_0$, discarding the lower level bins, but
adding a $j_0$-bin on either side.  The deviation in $I$ is
bounded by total deviation of the internal bins plus the total
contents of the side bins. The expected number of keys
in each side bins is $\alpha2^{j_0}\leq 2$. 

For each $j\in\{j_0,...,i\}$, we apply
Theorem  \ref{thm:chernoff} with $m'=m/2^j\leq \alpha m=n$ bins.
W.h.p., the maximal deviation for any $j$-bins is 
$O\left(\sqrt{n/m'}\log^c n\right)=O\left(\sqrt{\alpha 2^j}\log^c n\right)$.
This gives a total deviation of at most
\[2\left(2+O\left(\sqrt{\alpha 2^{j_0}}\log^c n\right)+
\sum_{j=j_0}^i O\left((\sqrt{\alpha 2^j}\log^c n\right)\right)=
O\left((\sqrt{\alpha|I|}\log^c n\right)\textnormal,\]
as desired. For each $j$ there is an error probability of $n^{-\gamma'}$ for any $\gamma'=O(1)$. The error probability over all $j\in\{j_0,...,i\}$ 
is $(i-j_0+1)n^{-\gamma'}$. Here 
$i-j_0\leq \log_2 m-\log_2(1/\alpha)=\log_2 m-\log_2\frac mn=\log_2n$,
so $(i-j_0+1)n^{-\gamma}\leq n^{-\gamma'}(1+\log n)$. 
This completes the proof Theorem \ref{thm:linprobe}.

\subsection{Set estimation}
We can easily apply our results for set estimation where one
saves a bottom-$k$ sketch. More precisely, suppose we for
a set $A$ store a sample $S$ consisting of the $k$ keys with
the smallest hash values. Consider now some subset $B\subseteq A$.
We then use $|B\cap S|/k$ as an estimator for $|B|/|A|$.
We can use the above bounds to bound the probability that
this estimator is wrong by more than a factor $\frac{1+\delta}{1-\delta}$.
Let $\tau$ be the $k$th hash value of $A$. First we use
the bounds to argue that $\tau=(1\pm \delta)k/|A|$. Next
we use them to argue that the number of elements from $B$
below any given $\tau'$ is $(1\pm\delta)\tau'|B|$. Applying
this with $\tau'=(1-\delta)k/|A|,(1+\delta)k/|A|$, we get
the desired bound.

\newcommand\cqed{\hfill\ensuremath{\Diamond}}

\section{Cuckoo Hashing}

We are now going to analyze cuckoo hashing. In our
analysis of chaining and linear probing, we did not worry so much
about constants, but with Cuckoo hashing, we do have to worry about obstructions
that could be stem from the hashing of just a constant
number of keys, e.g., as an extreme case we could have three keys
sharing the same two hash locations. It is, in fact, a constant
sized obstruction that provides the  negative side of our result:

\begin{observation}
There exists a set $S$ of $n$ keys such that cuckoo hashing with simple
tabulation hashing cannot place $S$ into two tables of size $2n$ with
probability $\Omega(n^{-1/3})$.
\end{observation}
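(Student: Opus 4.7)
The plan is to exhibit an explicit structured set $S$ and show that the XOR-linearity of simple tabulation amplifies a single algebraic coincidence among a few character-table entries into a cuckoo-hashing obstruction with probability $\Omega(n^{-1/3})$. The natural template is to take $c=2$ and let $S = A \times B$ be a combinatorial rectangle with $|A|\cdot|B|=n$, where $|A|$ and $|B|$ are tuned so that a constant-size collision pattern in the tables $T^0_1, T^0_2, T^1_1, T^1_2$ forces a forbidden subgraph in the cuckoo bipartite graph — the simplest being three parallel edges on two vertices, or four edges spanning only three vertices. In either case some component has more edges than vertices and cuckoo hashing provably cannot complete.

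The key mechanism is that a single collision among table entries on one coordinate (say $T^0_2[b_1]=T^0_2[b_2]$ for two characters $b_1, b_2 \in B$) collapses an entire fiber of $S$ into parallel edges under $h_0$: the XOR-identity $h_0(a,b_1) \oplus h_0(a,b_2) = T^0_2[b_1]\oplus T^0_2[b_2]$ then vanishes for every $a \in A$ simultaneously. A second, independent coincidence — a pair or a triple of $a$-characters agreeing on $T^0_1$ and/or $T^1_1$ — then completes the obstruction on the other side of the bipartite graph. The sizes $|A|, |B|$ should be balanced so that the expected number of quadruples $(a_1, a_2, b_1, b_2)$ producing such an obstruction is at least $\Omega(n^{-1/3})$.

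First I would compute this expectation directly, using the fact that under simple tabulation entries indexed by distinct position-characters are independent and uniform, so each of the constant number of equations defining a witness is satisfied independently with probability $1/(2n)$. Having chosen $|A|, |B|$ so that the first moment is the right order, I would then do a Paley--Zygmund / second-moment argument to convert the expectation into a genuine $\Omega(n^{-1/3})$ lower bound on the probability of at least one witness. Finally, since each witness is by construction a cuckoo-hashing obstruction, this directly gives the claimed failure probability.

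The main obstacle is the second-moment estimate. Different candidate witnesses overlap by sharing $a$-characters, $b$-characters, or entire sides of the rectangle, and these overlaps couple the corresponding table events. I would enumerate the overlap types (disjoint witnesses, witnesses sharing one character on one coordinate, witnesses sharing characters on both coordinates, etc.), bound each type's contribution to the second moment, and tune $|A|$ against $|B|$ so that all non-leading contributions stay within a constant factor of $(\mathbb{E}[\text{witnesses}])^2$. If the exponents are off — for example, if one of $|A|, |B|$ is too large — the shared-character overlap terms dominate and the second-moment bound is lost; the delicate point is verifying that $n^{-1/3}$ is exactly the right threshold at which the leading term wins, matching the $O(n^{-1/3})$ upper bound stated in Theorem~\ref{thm:cuckoo}.
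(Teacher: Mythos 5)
Your high-level strategy (a combinatorial rectangle, XOR-linearity collapsing whole fibers at once, two coincidences living in disjoint tables, first moment plus Paley--Zygmund) is the right one and is essentially the paper's. But there is a concrete gap in the combinatorial design: the witnesses you actually name cannot reach $n^{-1/3}$. A quadruple $(a_1,a_2,b_1,b_2)$ with one collision on each side ($T^0_2[b_1]=T^0_2[b_2]$ and, say, $T^1_1[a_1]=T^1_1[a_2]$) produces four keys on $2+2=4$ vertices --- a $4$-cycle, which is \emph{not} an obstruction (cuckoo hashing fails only when a component has strictly more edges than vertices). To turn a quadruple into one of your named obstructions (three edges on two vertices, or four edges on three vertices) you need at least three table equations, each costing a factor $1/m=\Theta(1/n)$; with only $|A|^2|B|^2=n^2$ candidate quadruples the first moment is then $O(n^{-1})$, and no tuning of $|A|$ against $|B|$ fixes this. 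So the threshold for the witnesses you describe is $n^{-1}$, not $n^{-1/3}$.

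The missing idea is an \emph{asymmetric} witness on five characters rather than four: take $|A|=n^{2/3}$, $|B|=n^{1/3}$, and look for (i) a \emph{triple} $a_1,a_2,a_3\in A$ with $T^0_1[a_1]=T^0_1[a_2]=T^0_1[a_3]$ --- two equations, but $\binom{n^{2/3}}{3}=\Theta(n^2)$ candidates, so this happens with probability $\Omega(1)$ --- and (ii) a pair $b_1,b_2\in B$ with $T^1_2[b_1]=T^1_2[b_2]$, which happens with probability $\Theta(\binom{n^{1/3}}{2}/n)=\Theta(n^{-1/3})$. The two events involve disjoint tables, hence are independent, and together the six keys $a_ib_j$ occupy only $2$ left vertices and $3$ right vertices: six edges on five vertices, a genuine obstruction. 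This is exactly the paper's construction (it uses the cube $[n^{1/3}]^3$ and treats the first two coordinates as your $A$). Once the witness is fixed this way, your second-moment worry largely evaporates: the table entries indexed by distinct position-characters are genuinely independent and uniform, so each of the two events is handled by a routine second-moment/inclusion--exclusion bound on its own, and independence gives the product $\Omega(1)\cdot\Omega(n^{-1/3})$.
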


\begin{proof}
The hard instance is the 3-dimensional cube $[n^{1/3}]^3$. Here is a
sufficient condition for cuckoo hashing to fail:
\begin{itemize*}
\item there exist $a,b,c \in [n^{1/3}]^2$ with $h_0(a)=h_0(b)=h_0(c)$;
\item there exist $x,y \in [n^{1/3}]$ with $h_1(x) = h_1(y)$.
\end{itemize*}

If both happen, then the elements $ax, ay, bx, by, cx, cy$ cannot be
hashed. Indeed, on the left side $h_0(a)=h_0(b)=h_0(c)$ so they only
occupy 2 positions. On the right side, $h_1(x)=h_1(y)$ so they only
occupy 3 positions. In total they occupy $5<6$ positions.

The probability of 1.~is asymptotically $(n^{2/3})^3 / n^2 =
\Omega(1)$. This is because tabulation (on two characters) is
3-independent. The probability of 2.~is asymptotically $(n^{1/3})^2 /
n = \Omega(1 / n^{1/3})$. So overall cuckoo hashing
fails with probability $\Omega(n^{-1/3})$.
\end{proof}

Our positive result will effectively show that this is the worst
possible instance: for any set $S$, the failure probability is
$O(n^{-1/3})$.

The proof is an encoding argument. A tabulation hash function from
$\Sigma^c \mapsto [m]$ has entropy $|\Sigma|^c \lg m$ bits; we have
two random functions $h_0$ and $h_1$. If, under some event $\evt$, one
can encode the two hash functions $h_0, h_1$ using $\big( 2 |\Sigma|^c
\lg m \big) - \gamma$ bits, it follows that $\Pr[\evt] =
O(2^{-\gamma})$. Letting $\evt_S$ denote the event that cuckoo hashing
fails on the set of keys $S$, we will demonstrate a saving of $\gamma
= \frac{1}{3} \lg n - f(c,\eps)=\frac{1}{3} \lg n - O(1)$ bits in the
encoding. Note that we are analyzing simple tabulation on a
\emph{fixed} set of $n$ keys, so both the encoder and the decoder know
$S$.

We will consider various cases, and give algorithms for encoding some
subset of the hash codes (we can afford $O(1)$ bits in the beginning
of the encoding to say which case we are in). At the end, the encoder
will always list all the remaining hash codes in order. If the
algorithm chooses to encode $k$ hash codes, it will use space at most
$k\lg m - \frac{1}{3} \lg n + O(1)$ bits. That is, it will save
$\frac{1}{3} \lg n - O(1)$ bits in the complete encoding of $h_0$ and
$h_1$.

\ptsubsection{An easy way out}
A {\em subkey\/} is a set of position-characters on distinct
positions.  If $a$ is a subkey, we let $C(a) = \{ x \in S \mid a
\subseteq x \}$ be the set of ``completions'' of $a$ to a valid key.

We first consider an easy way out: there subkeys $a$ and
$b$ on the positions such that $|C(a)| \ge n^{2/3}, |C(b)|
\ge n^{2/3},$ and $h_i(a) = h_i(b)$ for some $i \in \{0,1\}$. Then we
can easily save $\frac{1}{3} \lg n - O(1)$ bits. 
First we write the
set of positions of $a$ and $b$, and the side of the collision ($c+1$
bits). There are at most $n^{1/3}$ subkeys on those positions
that have $\ge n^{2/3}$ completions each, so we can write the
identities of $a$ and $b$ using $\frac{1}{3}\lg n$ bits each. We write
the hash codes $h_i$ for all characters in $a \Delta b$ (the symmetric
difference of $a$ and $b$), skipping the last one, since it can be
deduced from the collision. This uses $c + 1 + 2\cdot \frac{1}{3}\lg n
+ (|a\Delta b| - 1) \lg m$ bits to encode $|a\Delta b|$ hash codes, so
it saves $\frac{1}{3} \lg n - O(1)$ bits.

The rest of the proof assumes that there is no easy way out.

\ptsubsection{Walking Along an Obstruction}
\begin{figure}
\begin{center}
\centering
\begin{tabular}{c@{\qquad}c}
  \begin{tikzpicture} 
    \newcommand{\arrowline}[4]{{
      \FPeval{\midx}{({#1}+{#3}) / 2}
      \FPeval{\midy}{({#2}+{#4}) / 2}
      \draw[->] (#1, #2) -- (\midx, \midy);
      \draw (\midx, \midy) -- (#3, #4);
      }}
    \fill (0,2) circle(0.07) node[above] {\small $v_0$~~~}
          (1,1) circle(0.07) 
          (2,1) circle(0.07) (3,2) circle(0.07) (1,2) circle(0.07)
          (2,2) circle(0.07) (1,3) circle(0.07) (2,3) circle(0.07);
    \arrowline{0}{2}{1}{1}
    \draw (0.5, 1.5) node[below] {\small $a_2$};
    \draw (0.5, 2.5) node[above] {\small $a_0$};
    \draw (0.7, 1.9) node[above] {\small $a_1$};
    \arrowline{1}{1}{2}{1}
    \arrowline{2}{1}{3}{2}
    \arrowline{1}{2}{0}{2}
    \arrowline{0}{2}{1}{3}
    \arrowline{1}{3}{2}{3}
    \arrowline{2}{3}{3}{2}
    \arrowline{3}{2}{2}{2}
    \arrowline{2}{2}{1}{2}
  \end{tikzpicture}
&
  \begin{tikzpicture} 
    \newcommand{\arrowline}[4]{{
      \FPeval{\midx}{({#1}+{#3}) / 2}
      \FPeval{\midy}{({#2}+{#4}) / 2}
      \draw[->] (#1, #2) -- (\midx, \midy);
      \draw (\midx, \midy) -- (#3, #4);
      }}
    \fill (0,2) circle(0.07) (1,1) circle(0.07) 
          (2,2) circle(0.07) node[above] {\small $v_0$} 
          (1,3) circle(0.07)
          (3,2) circle(0.07) (4,1) circle(0.07)
          (5,2) circle(0.07) (4,3) circle(0.07);
    \arrowline{2}{2}{1}{3}
    \draw (1.5, 2.5) node[above] {\small $a_0$};
    \arrowline{1}{3}{0}{2}
    \arrowline{0}{2}{1}{1}
    \arrowline{1}{1}{2}{2}
    \draw (1.5, 1.5) node[below] {\small $a_1$};
    \arrowline{5}{2}{4}{1}
    \arrowline{4}{3}{5}{2}
    \arrowline{3}{2}{4}{3}
    \arrowline{2}{2}{3}{2}
    \draw (2.5, 2) node[below] {\small $a_2$};
    \arrowline{4}{1}{3}{2}
  \end{tikzpicture}
\end{tabular}
\end{center}
\caption{Minimal obstructions to cuckoo hashing.}
  \label{fig:cycle}
\end{figure}
Consider the bipartite graph with $m$ nodes on each side and $n$ edges
going from $h_0(x)$ to $h_1(x)$ for all $x\in S$. Remember that cuckoo
hashing succeeds if and only if no component in this graph has more edges
than nodes. Assuming cuckoo hashing failed, the encoder can find
a subgraph with one of two possible obstructions: (1) a cycle with a chord;
or (2) two cycles connected by a path (possibly a trivial path,
i.e. the cycles simply share a vertex).

Let $v_0$ be a node of degree $3$ in such an obstruction, and let its
incident edges be $a_0, a_1, a_2$. The obstruction can be traversed by
a walk that leaves $v_0$ on edge $a_0$, returns to $v_0$ on edge $a_1$,
leaves again on $a_2$, and eventually meets itself. Other than visiting
$v_0$ and the last node twice, no node or edge is repeated. See
Figure~\ref{fig:cycle}.

Let $x_1, x_2, \dots$ be the sequence of keys in the walk. The first
key is $x_1 = a_0$. Technically, when the walk meets itself at the end,
it is convenient to expand it with an extra key, namely the one it
first used to get to the meeting point. This repeated key marks the
end of the original walk, and we chose it so that it is not identical
to the last original key.  Let $x_{\le i} = \bigcup_{j\le i} x_j$ be
the position-characters seen in keys up to $x_i$. Define $\hat{x}_i =
x_i \setminus x_{<i}$ to be the position-characters of $x_i$ not seen
previously in the sequence. Let $k$ be the first position such that
$\hat{x}_{k+1} = \emptyset$. Such a $k$ certainly exists, since the
last key in our walk is a repeated key.

At a high level, the encoding algorithm will encode the hash codes of
$\hat{x}_1, \dots, \hat{x}_k$ in this order. Note that the
obstruction, hence the sequence $(x_i)$, depends on the hash functions
$h_0$ and $h_1$. Thus, the decoder does not know the sequence, and it
must also be written in the encoding.

For notational convenience, let $h_i = h_{i \bmod 2}$. This means that
in our sequence $x_i$ and $x_{i+1}$ collide in their $h_i$ hash code,
that is $h_i(x_i) = h_i(x_{i+1})$. Formally, we define 3 subroutines:
\begin{description*}
\item[\Id($x$):] Write the identity of $x\in S$ in the
  encoding, which takes $\lg n$ bits.

\item[\Hashes($h_i, x_k$):] Write the hash codes $h_i$ of the
  characters $\hat{x}_k$. This takes $|\hat{x}_k| \lg m$ bits.

\item[\Coll($x_i, x_{i+1}$):] Document the collision $h_i(x_i) =
  h_i(x_{i+1})$. We write all $h_i$ hash codes of characters
  $\hat{x}_i \cup \hat{x}_{i+1}$ in some fixed order. The last hash
  code of $\hat{x}_i \Delta \hat{x}_{i+1}$ is redundant and will be
  omitted. Indeed, the decoder can compute this last hash code from
  the equality $h_i(x_i) = h_i(x_{i+1})$. Since $\hat{x}_{i+1} =
  x_{i+1} \setminus x_{\le i}$, $\hat{x}_{i+1} \setminus \hat{x}_i \ne
  \emptyset$, so there exists a hash code in $\hat{x}_i \Delta
  \hat{x}_{i+1}$. This subroutine uses $\big( |\hat{x}_i \cup
  \hat{x}_{i+1}| - 1\big) \lg m$ bits, saving $\lg m$ bits compared to
  the trivial alternative: $\proc{Hashes}(h_i, x_i);
  \proc{Hashes}(h_i, x_{i+1})$.
\end{description*}
To decode the above information, the decoder will need enough context
to synchronize with the coding stream. For instance, to decode
$\Coll(x_i, x_{i+1})$, one typically needs to know $i$, and the
identities of $x_i$ and $x_{i+1}$.

Our encoding begins with the value $k$, encoded with $O(\lg k)$ bits,
which allows the decoder to know when to stop. The encoding proceeds
with the output of the stream of operations:
\begin{align*}
 \Id(x_1);& \Hashes(h_0, x_1); \Id(x_2); \Coll(x_1, x_2); \\
   \dots~ & \Id(x_k); \Coll(x_k, x_{k-1}); \Hashes(h_k, x_k) 
\end{align*}
We observe that for each $i > 1$, we save $\eps$ bits of
entropy. Indeed, $\Id(x_i)$ uses $\lg n$ bits, but $\Coll(x_{i-1},
x_i)$ then saves $\lg m = \lg((1+\eps)n) \ge \eps + \lg n$ bits.

The trouble is $\Id(x_1)$, which has an upfront cost of $\lg n$
bits. We must devise algorithms that modify this stream of operations
and save $\frac{4}{3} \lg n - O(1)$ bits, giving an overall saving of
$\frac{1}{3} \lg n - O(1)$. (For intuition, observe that a saving that
ignores the cost of $\Id(x_1)$ bounds the probability of an
obstruction at some fixed vertex in the graph. This probability must
be much smaller than $1/n$, so we can union bound over all
vertices. In encoding terminology, this saving must be much more than
$\lg n$ bits.)

We will use modifications to all types of operations. For instance, we
will sometimes encode $\Id(x)$ with much less than $\lg n$ bits. At
other times, we will be able to encode $\Coll(x_i, x_{i+1})$ with the
cost of $|\hat{x}_i \cup \hat{x}_{i+1}| - 2$ characters, saving $\lg
n$ bits over the standard encoding. 

Since we will make several such modifications, it is crucial to verify
that they only touch distinct operations in the stream.  Each
modification to the stream will be announced at the beginning of the
stream with a pointer taking $O(\lg k)$ bits. This way, the decoder
knows when to apply the special algorithms. We note that terms of
$O(\lg k)$ are negligible, since we are already saving $\eps k$ bits
by the basic encoding ($\eps$ bits per edge). For any $k$, $O(\lg k)
\le \eps k + f(c,\eps) = k + O(1)$. Thus, if our overall saving is
$\frac{1}{3} \lg n - O(\lg k) + \eps k$, it achieves the stated bound
of $\lg n - O(1)$.

\ptsubsection{Safe Savings}
Remember that $\hat{x}_{k+1} = \emptyset$, which suggests that we can
save a lot by local changes towards the end of the encoding.  We have
$x_{k+1} \subset x_{\le k}$, so $x_{k+1} \setminus x_{<k} \subseteq
\hat{x}_k$.  We will first treat the case when $x_{k+1} \setminus
x_{<k}$ is a proper subset of $\hat{x}_k$ (including the empty
subset). This is equivalent to $\hat{x}_k \not\subset x_{k+1}$.

\begin{lemma}[safe-strong]    \label{lem:safe-strong}
If $\hat{x}_k \not\subset x_{k+1}$, we can save $\lg n - O(c\lg k)$
bits by changing $\Hashes(x_k)$.
\end{lemma}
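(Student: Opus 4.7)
The plan is to exploit the as-yet-unused collision $h_k(x_k) = h_k(x_{k+1})$ at the end of the walk to omit a single $h_k$-hash code from $\Hashes(h_k, x_k)$. Dropping one hash code saves $\lg m \ge \lg n$ bits, and I will spend only $O(c \lg k)$ bits to describe $x_{k+1}$ to the decoder, which is the only thing the modified operation needs that wasn't already present.

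More concretely, let $\alpha$ be a canonical (say, lexicographically first) element of $\hat{x}_k \setminus x_{k+1}$, which is nonempty by hypothesis. The modified final operation begins with the standard $O(\lg k)$-bit pointer announcing the special case, then spells out $x_{k+1}$, and finally writes $h_k(\beta)$ only for $\beta \in \hat{x}_k \setminus \{\alpha\}$. Since $x_{k+1} \subseteq x_{\le k}$, I can specify it by naming, for each of its $c$ positions, one of the at most $k$ characters that actually occur in $x_{\le k}$ at that position; this costs $c \lceil \lg k\rceil = O(c \lg k)$ bits. The character $\alpha$ itself is determined by $x_{<k}$ and $x_{k+1}$, both of which the decoder already knows, so no extra bits are needed to identify it.

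To decode, I will use the collision identity
\[
  h_k(\alpha) \;=\; \bigoplus_{\beta \in (x_k \,\Delta\, x_{k+1}) \setminus \{\alpha\}} h_k(\beta).
\]
Every $\beta$ on the right lies in $x_{\le k} \setminus \{\alpha\}$, so the key fact to verify is that at this point in the stream the decoder already knows $h_k(\beta)$ for all such $\beta$. The preceding operations $\Hashes(h_0, x_1)$ together with the operations $\Coll(x_i, x_{i+1})$ for $i \equiv k \pmod 2$ jointly deliver the $h_k$-values of all characters in $\hat{x}_1 \cup \cdots \cup \hat{x}_{k-1} = x_{<k}$, and the truncated $\Hashes$ delivers $h_k$ on $\hat{x}_k \setminus \{\alpha\}$. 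The decoder can therefore evaluate the right-hand side above, recover $h_k(\alpha)$, and proceed as in the standard scheme.

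The net change versus the standard encoding is $-\lg m + O(c \lg k)$, so the saving is $\lg m - O(c \lg k) = \lg n - O(c \lg k)$, as claimed. The main obstacle I anticipate is the parity bookkeeping behind the assertion that the preceding stream already delivers $h_k$ on all of $x_{<k}$; once that is spelled out, the remaining task is just to check that the $O(c \lg k)$-bit description of $x_{k+1}$ really suffices and that the modification is local to the last operation, so it does not collide with modifications introduced in subsequent lemmas.
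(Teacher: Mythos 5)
Your proposal is correct and follows essentially the same route as the paper: spend $O(c\lg k)$ bits to identify $x_{k+1}$ by pointing into the known characters of $x_{\le k}$, omit the hash code of one $\alpha\in\hat{x}_k\setminus x_{k+1}$ from $\Hashes(h_k,x_k)$, and let the decoder recover $h_k(\alpha)$ from the collision $h_k(x_k)=h_k(x_{k+1})$, which is legitimate precisely because $\alpha\notin x_{k+1}$ so it appears exactly once in that equation. Your extra parity bookkeeping (checking that the prior stream already delivers the $h_{k\bmod 2}$ codes on all of $x_{<k}$) is a correct elaboration of a step the paper leaves implicit.
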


\begin{proof}
We can encode $\Id(x_{k+1})$ using $c\lg k$ extra bits, since it consists
only of known characters from $x_{\le k}$. For each position
$1\twodots c$, it suffices to give the index of a previous $x_i$ that
contained the same position-character. Then, we will write all hash
codes $h_k$ for the characters in $\hat{x}_k$, except for some $\alpha
\in \hat{x}_k \setminus x_{k+1}$. From $h_k(x_k) = h_k(x_{k+1})$, we
have $h_k(\alpha) = h_k(x_k \setminus \{\alpha\}) \oplus
h_k(x_{k+1})$. All quantities on the right hand side are known (in
particular $\alpha \notin x_{k+1}$), so the decoder can compute
$h_k(\alpha)$.
\end{proof}

It remains to treat the case when the last revealed characters of
$x_{k+1}$ are precisely $\hat{x}_k$: $\hat{x}_k \subset x_{k+1}$.
That is, both $x_k$ and $x_{k+1}$ consist of $\hat{x}_k$ and some
previously known characters. In this case, the collision $h_k(x_k) =
h_k(x_{k+1})$ does not provide us any information, since it reduces to
the trivial $h_k(\hat{x}_k) = h_k(\hat{x}_k)$. Assuming that we didn't
take the ``easy way out'', we can still guarantee a more modest saving
of $\frac{1}{3}\lg n$ bits:

\begin{lemma}[safe-weak]  \label{lem:safe-weak}
Let $K$ be the set of position-characters known before encoding
$\Id(x_i)$, and assume there is no easy way out. If $x_i\Delta
x_{i+1}\subseteq x_{<i}$, then we can encode both $\Id(x_i)$ and
$\Id(x_{i+1})$ using a total of $\frac{2}{3} \lg n + O(c \lg |K|)$
bits.
\end{lemma}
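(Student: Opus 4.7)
The plan is to decompose $x_i$ and $x_{i+1}$ into a common ``new'' subkey plus two ``old'' subkeys lying entirely in $K$, and then invoke the no-easy-way-out hypothesis on the two old subkeys.

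First I would unpack the hypothesis $x_i\Delta x_{i+1}\subseteq x_{<i}$. Since the new position-characters of $x_i$ (those in $\hat x_i = x_i\setminus x_{<i}$) cannot lie in $x_i\Delta x_{i+1}$, they must also lie in $x_{i+1}$; symmetrically the new characters of $x_{i+1}$ lie in $x_i$. Hence $\hat x_i = x_{i+1}\setminus x_{<i}$: the two keys have exactly the same set of new position-characters. Writing $a$ for this common subkey on some position set $P\subseteq[c]$, we get $x_i = a\cup b_i$ and $x_{i+1} = a\cup b_{i+1}$, where $b_i, b_{i+1}$ are subkeys on the complementary positions $[c]\setminus P$ all of whose characters belong to $K$. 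Since $h_j(x_i)=h_j(x_{i+1})$ for the relevant index $j$, the contribution of $a$ cancels, leaving the derived collision $h_j(b_i)=h_j(b_{i+1})$.

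Now I would apply the no-easy-way-out assumption to the pair $(b_i,b_{i+1})$: they are subkeys on the same positions colliding under $h_j$, so we cannot have both $|C(b_i)|\ge n^{2/3}$ and $|C(b_{i+1})|\ge n^{2/3}$. Using one flag bit, assume $|C(b_i)|<n^{2/3}$. I then encode, in order: the flag bit ($O(1)$ bits); the position set $P$ ($O(c)$ bits); the characters of $b_i$ and of $b_{i+1}$ as at most $c$ indices into $K$ each ($O(c\lg|K|)$ bits); and finally the identity of $x_i$ as an index into $C(b_i)$, costing $\lg|C(b_i)|<\tfrac{2}{3}\lg n$ bits. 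The decoder then recovers $x_i$, reads off $a = x_i\setminus b_i$, and assembles $x_{i+1} = a\cup b_{i+1}$. Summing gives $\tfrac{2}{3}\lg n + O(c\lg|K|)$ bits for both identities together, as required.

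The main obstacle I anticipate is verifying the symmetric decomposition $x_i=a\cup b_i$, $x_{i+1}=a\cup b_{i+1}$ on a common position set $P$, but this is forced directly by $x_i\Delta x_{i+1}\subseteq x_{<i}$. A subtler point is recognizing that the no-easy-way-out clause applies not to a collision already present in the walk, but to the \emph{derived} collision $h_j(b_i)=h_j(b_{i+1})$ obtained by cancelling $h_j(a)$ out of $h_j(x_i)=h_j(x_{i+1})$; once this observation is in place, the remainder is elementary bit accounting.
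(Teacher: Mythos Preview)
Your proposal is correct and follows essentially the same approach as the paper's proof: decompose each of $x_i,x_{i+1}$ into the common new subkey $a=x_i\setminus x_{<i}=x_{i+1}\setminus x_{<i}$ and the old subkeys $b_i,b_{i+1}\subseteq K$, use the derived collision $h_j(b_i)=h_j(b_{i+1})$ together with the no-easy-way-out hypothesis to force $\min\{|C(b_i)|,|C(b_{i+1})|\}<n^{2/3}$, and then encode the small completion set. The paper uses the notation $z$ for your $a$ and $x_i\setminus z,\,x_{i+1}\setminus z$ for your $b_i,b_{i+1}$, but the argument is identical.
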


A typical case where we apply the lemma is $i=k$ and $K = x_{<k}$. If
$\hat{x}_k \subset x_{k+1}$, we have $x_k \Delta x_{k+1} \subset K$.
Thus, we can obtain $\Id(x_k)$ for roughly $\frac{2}{3} \lg n$ bits,
which saves $\frac{1}{3} \lg n$ bits.

\begin{proof}[Proof of Lemma \ref{lem:safe-weak}]
With $O(c\lg k)$ bits, we can code the subkeys $x_i \cap x_{<i}$ and
$x_{i+1} \cap x_{<i}$. It remains to code $z=x_i \setminus x_{<i} =
x_{i+1} \setminus x_{<i}$. Since $z$ is common to both keys $x_i$ and
$x_{i+1}$, we have that $x_i\setminus z$ and $x_{i+1}\setminus z$ are
subkeys on the same positions. With no easy way out and $h_i(x_i
\setminus z)=h_i(x_{i+1} \setminus z)$, we must have $|C(x_i \setminus
z)| \le n^{2/3}$ \underline{or} $|C(x_{i+1} \setminus z)| \le
n^{2/3}$. In the former case, we code $z$ as a member of
$C(x_i\setminus z)$ with $\lceil \frac{2}{3}\lg n\rceil$ bits;
otherwise we code $z$ as member of $C(x_{i+1} \setminus z)$.
\end{proof}

\ptsubsection{Piggybacking}
Before moving forward, we present a general situation when we can save
$\lg n$ bits by modifying a $\Coll(x_i, x_{i+1})$ operation:
\begin{lemma}   \label{lem:coll}
We can save $\lg n - O(\lg k)$ bits by modifying $\Coll(x_i, x_{i+1})$
if we have identified two (sub)keys $e$ and $f$ satisfying:
\[
h_i(e) = h_i(f);~~   e\Delta f \subset x_{\le i+1};~~
\emptyset \ne (e\Delta f) \setminus x_{<i} 
          \ne (x_i \Delta x_{i+1}) \setminus x_{<i}.
\]
\end{lemma}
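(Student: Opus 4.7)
The approach is to view the two collisions as linear constraints over $\mathbb{F}_2$ on the unknown hash codes of the new position-characters in $\hat x_i\cup\hat x_{i+1}$, and to exploit the second constraint to omit one more hash code from the stream than the standard $\Coll$ does, yielding one extra saved hash code of $\lg m=\lg n+O(1)$ bits.

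First I would reduce both collisions to constraints on the variables in $\hat x_i\cup\hat x_{i+1}$. At the point where $\Coll(x_i,x_{i+1})$ is decoded, every $h_i$ on $x_{<i}$ is already known, so $h_i(x_i)=h_i(x_{i+1})$ becomes a linear equation with support $A_1=(x_i\Delta x_{i+1})\setminus x_{<i}$, and $h_i(e)=h_i(f)$ likewise becomes one with support $A_2=(e\Delta f)\setminus x_{<i}$; the hypothesis $e\Delta f\subset x_{\le i+1}$ guarantees $A_2\subseteq\hat x_i\cup\hat x_{i+1}$, so both equations live over the new variables, with right-hand sides computable by the decoder.

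Next I would use the hypotheses $A_2\ne\emptyset$ and $A_2\ne A_1$ to argue that the two indicator vectors are linearly independent over $\mathbb{F}_2$. From this I can pick two pivot position-characters $\alpha_1,\alpha_2\in A_1\cup A_2$ so that the $2\times 2$ coefficient submatrix is invertible: choose $\alpha_2\in A_1\Delta A_2$ (nonempty because $A_1\ne A_2$), and then pick $\alpha_1$ from whichever of $A_1,A_2$ does not contain $\alpha_2$ (this set is nonempty since $A_2\ne\emptyset$ and $A_1\ne\emptyset$ whenever the standard $\Coll$ is applicable). A direct check shows the resulting matrix has determinant $1$ over $\mathbb{F}_2$, so the decoder can always recover $h_i(\alpha_1)$ and $h_i(\alpha_2)$ by solving the $2\times 2$ system, using the two equations together with all other new hash codes.

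Finally I would describe the modified operation and its bit budget. The encoder prepends an $O(\lg k)$-bit pointer that locates the operation in the stream; it records $e\Delta f$ (a set of at most $2c$ position-characters, each identifiable by a $(\text{key-index},\text{position})$ pair with the key-index ranging over $x_{\le i+1}$, for a total of $O(c\lg k)=O(\lg k)$ bits); it records the indices of $\alpha_1,\alpha_2$ among the at most $2c$ elements of $\hat x_i\cup\hat x_{i+1}$ in $O(1)$ bits; and it writes the $|\hat x_i\cup\hat x_{i+1}|-2$ hash codes for all other new characters. Compared to the standard $\Coll$, which writes $|\hat x_i\cup\hat x_{i+1}|-1$ hash codes, we omit one extra hash code, saving $\lg m=\lg n+O(1)$ bits; subtracting the $O(\lg k)$ overhead yields the claimed net saving of $\lg n-O(\lg k)$. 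The only real obstacle is the linear-independence argument, which reduces to the elementary observation that two distinct nonempty subsets of a ground set have linearly independent indicator vectors over $\mathbb{F}_2$; the rest is bookkeeping.
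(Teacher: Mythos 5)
Your proposal is correct and follows essentially the same route as the paper: the second collision $h_i(e)=h_i(f)$ gives a second linearly independent $\mathbb{F}_2$ equation over the unknown hash codes in $\hat{x}_i \cup \hat{x}_{i+1}$, so a second redundant character can be omitted from $\Coll(x_i,x_{i+1})$, saving an extra $\lg m$ bits against $O(\lg k)$ bookkeeping. The only (welcome) difference is your pivot selection: the paper fixes $\alpha\in(e\Delta f)\setminus x_{<i}$ and $\beta\in(x_i\Delta x_{i+1})\setminus x_{<i}$ outside $(e\Delta f)\setminus x_{<i}$, whereas your symmetric choice of $\alpha_2\in A_1\Delta A_2$ and $\alpha_1$ in the other set yields an invertible $2\times 2$ system even when $(x_i\Delta x_{i+1})\setminus x_{<i}$ is contained in $(e\Delta f)\setminus x_{<i}$, a corner case the paper's wording glosses over.
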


\begin{proof}
In the typical encoding of $\Coll(x_i, x_{i+1})$, we saved one
redundant character from $h_i(x_i) = h_i(x_{i+1})$, which is an
equation involving $(x_i \Delta x_{i+1}) \setminus x_{<i}$ and some
known characters from $x_{<i}$. The lemma guarantees a second linearly
independent equation over the characters $\hat{x}_i \cup
\hat{x}_{i+1}$, so we can save a second redundant character.

Formally, let $\alpha$ be a position-character of $(e\Delta f)
\setminus x_{<i}$, and $\beta$ a position-character in $(x_i \Delta
x_{i+1}) \setminus x_{<i}$ but outside $(e\Delta f) \setminus
x_{<i}$. Note $\beta \ne \alpha$ and such a $\beta$ exists by
assumption. We write the $h_i$ hash codes of position characters
$(\hat{x}_i \cup \hat{x}_{i+1}) \setminus \{\alpha, \beta\}$.  The
hash $h_i(\alpha)$ can be deduced since $\alpha$ is the last unknown
in the equality $h_i(e \setminus f) = h_i(f \setminus e)$. The hash
$h_i(\beta)$ can be deduced since it is the last unknown in the
equality $h_i(x) = h_i(x_{i+1})$.
\end{proof}

While the safe saving ideas only require simple local modifications to
the encoding, they achieve a weak saving of $\frac{1}{3}\lg n$ bits
for the case $\hat{x}_k \subset x_{k+1}$. A crucial step in our proof
is to obtain a saving of $\lg n$ bits for this case.  We do this by
one of the following two lemmas:

\begin{lemma}[odd-size saving]   \label{lem:bad-pig}
Consider two edges $e,f$ and an $i \le k-2$ satisfying:
\[ h_{i+1}(e) = h_{i+1}(f); \qquad
   e \setminus x_{\le i} \ne f \setminus x_{\le i}; \qquad
   e \setminus x_{\le i+1} = f \setminus x_{\le i+1}.
\]
We can save $\lg n - O(c\lg k)$ bits by changing $\Coll(x_{i+1},
x_{i+2})$.
\end{lemma}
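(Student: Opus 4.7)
My approach is to apply the piggybacking lemma (Lemma~\ref{lem:coll}) at index $i+1$ with the given pair $(e,f)$ as the auxiliary colliding pair; this modifies $\Coll(x_{i+1}, x_{i+2})$ to save an extra $\lg n - O(\lg k)$ bits at that step. The hypothesis $i \le k-2$ is there to ensure $\Coll(x_{i+1}, x_{i+2})$ is actually part of the encoding stream, and the extra $O(c \lg k)$ slack in the target bound will absorb the cost of naming $e$ and $f$ to the decoder.

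First I would verify the three preconditions of Lemma~\ref{lem:coll} at index $i+1$. The hash equality $h_{i+1}(e) = h_{i+1}(f)$ is directly assumed. Next, $e \Delta f \subseteq x_{\le i+1} \subseteq x_{\le i+2}$ because $e \setminus x_{\le i+1} = f \setminus x_{\le i+1}$. Finally, $(e \Delta f) \setminus x_{\le i}$ is non-empty, directly from $e \setminus x_{\le i} \ne f \setminus x_{\le i}$; and it differs from $(x_{i+1} \Delta x_{i+2}) \setminus x_{\le i}$, because the former is a subset of $x_{i+1}$ (combining the two given conditions), whereas any character $\beta \in \hat{x}_{i+2}$ lies in the latter but not in $x_{i+1}$. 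Such a $\beta$ exists since $i+2 \le k$ and, by the choice of $k$ as the first index with $\hat{x}_{k+1} = \emptyset$, we have $\hat{x}_{j} \ne \emptyset$ for every $j \le k$.

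Second, I would describe how to communicate the identities of $e$ and $f$ in $O(c \lg k)$ bits. Each position-character of $e \cup f$ lying in $x_{\le i+1}$ can be specified by a back-reference — an index $j \le i+1$ and a position in $[c]$ telling where the character first appeared — at cost $O(\lg k)$ per character. The position-characters of $e \cap f$ that lie outside $x_{\le i+1}$ coincide for both keys (by the third hypothesis of Lemma~\ref{lem:bad-pig}) and so are described only once. Since a (sub)key has at most $c = O(1)$ position-characters, the total identification cost is $O(c \lg k)$ bits, which combines with the $O(\lg k)$ slack in Lemma~\ref{lem:coll} to yield the claimed $\lg n - O(c \lg k)$-bit saving on $\Coll(x_{i+1}, x_{i+2})$.

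The main obstacle I expect is not the verification of the three preconditions, which is a nearly mechanical rearrangement of the hypotheses, but the bookkeeping required to fit the identification of $e$ and $f$ into the $O(c \lg k)$ budget — in particular, ensuring that shared characters of $e \cap f$ outside $x_{\le i+1}$ can be specified cheaply in context, rather than at the prohibitive cost of $\log |\Sigma|$ per character.
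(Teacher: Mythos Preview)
Your verification of the three preconditions of Lemma~\ref{lem:coll} at index $i+1$ is correct and matches the paper's reasoning. The gap is exactly where you anticipated it: identifying $e$ and $f$ in $O(c\lg k)$ bits. Your proposed workaround---describing the shared characters of $e\cap f$ outside $x_{\le i+1}$ ``only once''---does not help, since even one such description costs $\log|\Sigma|$ per character, which is not bounded by $O(\lg k)$. And such characters genuinely occur in the intended application: in Lemma~\ref{lem:piggy-overall} with $e=x_k$, $f=x_{k+1}$ and $\hat{x}_k\subset x_{k+1}$, the nonempty set $\hat{x}_k\subseteq e\cap f$ lies entirely outside $x_{\le i+1}$ (since $i+1\le k-1$).

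The fix is immediate and is what the paper does: apply Lemma~\ref{lem:coll} not to the full keys $e,f$ but to the \emph{subkeys} $\tilde{e}=e\setminus f$ and $\tilde{f}=f\setminus e$. Since $h_{i+1}(e)=h_{i+1}(f)$ implies $h_{i+1}(\tilde{e})=h_{i+1}(\tilde{f})$, and $\tilde{e}\,\Delta\,\tilde{f}=e\,\Delta\,f$, the preconditions you already checked carry over verbatim. Crucially, $\tilde{e}\cup\tilde{f}=e\,\Delta\,f\subseteq x_{\le i+1}$, so both subkeys can be specified entirely by back-references at cost $O(c\lg k)$, and the obstacle disappears.
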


\begin{proof}
We apply Lemma~\ref{lem:coll} with the subkeys $\tilde{e} = e
\setminus f$ and $\tilde{f} = f\setminus e$. We can identify these in
$O(c\lg k)$ bits, since they only contain characters of $x_{\le
  i+1}$. Since $e$ and $f$ have different free characters before
$\hat{x}_{i+1}$, but identical free characters afterward, it must be
that $\tilde{e} \cup \tilde{f} \subset x_{i+1}$ by $\tilde{e} \cup
\tilde{f} \not\subseteq x_{\le i}$. To show $(e\Delta f) \setminus
x_{<i} \ne (x_{i+1} \Delta x_{i+2}) \setminus x_{\le i}$, remark that
$\hat{x}_{i+2} \ne \emptyset$ and $\hat{x}_{i+2}$ cannot have
characters of $\tilde{e} \cup \tilde{f}$. Thus, Lemma~\ref{lem:coll}
applies.
\end{proof}

\begin{lemma}[piggybacking]   \label{lem:good-pig}
Consider two edges $e,f$ and an $i \le k-1$ satisfying:
\[ h_i(e) = h_i(f); \qquad
   e \setminus x_{\le i} \ne f \setminus x_{\le i}; \qquad
   e \setminus x_{\le i+1} = f \setminus x_{\le i+1}.
\]
We can encode $\Id(e)$ and $\Id(f)$ using only $O(c\lg k)$ bits, after
modifications to $\Id(x_i)$, $\Id(x_{i+1})$, and $\Coll(x_i,
x_{i+1})$.
\end{lemma}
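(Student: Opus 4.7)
The plan is to encode $e$ and $f$ almost for free by exploiting that they collide at the \emph{same} hash level $i$ as $x_i$ and $x_{i+1}$, and that $e \Delta f$ is entirely nameable from previously encoded material. The hypotheses $e \setminus x_{\le i} \ne f \setminus x_{\le i}$ and $e \setminus x_{\le i+1} = f \setminus x_{\le i+1}$ together imply $e \Delta f \subseteq x_{\le i+1}$ and $(e \Delta f) \cap \hat{x}_{i+1} \ne \emptyset$. So every position-character that distinguishes $e$ from $f$ is addressable by a pointer into the already-seen characters $x_{\le i+1}$.

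First I would write the differentiating subkeys $\tilde{e} = e \setminus f$ and $\tilde{f} = f \setminus e$ using $O(c\lg k)$ bits of pointers into $x_{\le i+1}$. The common part $e \cap f$ is shared between $e$ and $f$, so once it is known (even implicitly), both keys are recovered; I would fold the cost of naming that common part into the modified $\Id(x_i)$ and $\Id(x_{i+1})$ operations, which anyway specify full keys from $S$. Next I would try to invoke Lemma~\ref{lem:coll} on the pair $(\tilde{e},\tilde{f})$ to save $\lg m$ bits inside $\Coll(x_i,x_{i+1})$: the equation $h_i(\tilde{e}) = h_i(\tilde{f})$ is linearly independent from $h_i(x_i) = h_i(x_{i+1})$ over the unknowns $\hat{x}_i \cup \hat{x}_{i+1}$ whenever $(e\Delta f)\setminus x_{<i} \ne (x_i\Delta x_{i+1})\setminus x_{<i}$. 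This single application of Lemma~\ref{lem:coll} is worth about $\lg n$ bits, which pays for one of the two identities.

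The remaining $\lg n$ bits of saving must come from the modifications to $\Id(x_i)$ and $\Id(x_{i+1})$. My plan is to reroute these identities through $e$ and $f$: since $x_i$ and $x_{i+1}$ share the $h_i$ vertex with $e$ and $f$, specifying $x_i$ (resp.\ $x_{i+1}$) as ``the key of $S$ determined by the already-fixed collisions at level $i-1$ and the $h_i$-vertex shared with $e$ (resp.\ $f$)'' should cost only $O(c\lg k)$ bits once the characters of $e\cap f$ are in hand, amortizing the common part's description across both $\{e,f\}$ and $\{x_i,x_{i+1}\}$.

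The hard part will be the degenerate case $(e\Delta f)\setminus x_{<i} = (x_i\Delta x_{i+1})\setminus x_{<i}$, where the two collision equations coincide on the unknowns and Lemma~\ref{lem:coll} yields nothing. Here the alignment is so tight that I expect it forces a structural identification between $\{\tilde{e},\tilde{f}\}$ and $\{\hat{x}_i,\hat{x}_{i+1}\}$ as sets of new characters, which in turn lets me encode $e$ and $f$ as $O(c\lg k)$-bit perturbations of $x_i$ and $x_{i+1}$ directly, without needing a second independent collision equation. A secondary obstacle is stream bookkeeping: I must sequence the modified operations so that by the time the decoder processes $\Coll(x_i,x_{i+1})$, all hash codes required to invert both collision equations are available, and every pointer into $x_{\le i+1}$ refers to a character that has already been named.
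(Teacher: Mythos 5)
Your easy case is exactly the paper's: when $(e\Delta f)\setminus x_{<i} \ne (x_i\Delta x_{i+1})\setminus x_{<i}$, Lemma~\ref{lem:coll} saves $\lg n$ bits inside $\Coll(x_i,x_{i+1})$, which pays for $\Id(e)$, and then $\Id(f)$ costs only $O(c\lg k)$ because $e\Delta f\subseteq x_{\le i+1}$. That part is fine.

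The gap is in the degenerate case. You assert that the coincidence of the two collision equations ``forces a structural identification'' that lets you encode $e$ and $f$ as $O(c\lg k)$-bit perturbations of $x_i$ and $x_{i+1}$. That is false in general: the hypothesis only controls the \emph{symmetric difference} $e\Delta f$, not the keys themselves. The common residual $g = e\setminus x_{\le i+1} = f\setminus x_{\le i+1}$ can be an arbitrary nonempty subkey whose position-characters have never appeared in $x_{\le i+1}$ (indeed, in the main application $e=x_k$, $f=x_{k+1}$ with $i\le k-2$, so $g\supseteq \hat{x}_k\ne\emptyset$), and the decoder cannot name it by pointers into previously seen characters. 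What the degeneracy actually gives you is only the one-directional containments $y=\hat{x}_i\setminus(e\cup f)\subset x_{i+1}$ and $\hat{x}_{i+1}\subset e\cup f$; it does not make $e,f$ recoverable from $x_{\le i+1}$. The paper closes this case with an entropy trade-off you are missing: Huffman-code both $y$ and $g$ at costs $\lg\frac{n}{C(y)}$ and $\lg\frac{n}{C(g)}$, then spend $2\lceil\lg\min\{C(y),C(g)\}\rceil$ bits on whichever pair of keys is cheaper to complete (either $x_i,x_{i+1}$ as completions of $y$, or $e,f$ as completions of $g$), recovering the other pair for $O(c\lg k)$ bits from the structural containments. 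The total $\lg\frac{n}{C(y)}+\lg\frac{n}{C(g)}+2\lg\min\{C(y),C(g)\}\le 2\lg n$ is what delivers all four identities for $2\lg n+O(c\lg k)$ bits. Without some such balancing argument on $C(y)$ versus $C(g)$, your degenerate case does not go through. (Your ``reroute the identities through the shared $h_i$-vertex'' idea also cannot work as stated: the decoder is reconstructing the hash functions, so it does not yet know which keys of $S$ share a vertex.)
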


The proof of this lemma is more delicate, and is given below.  The
difference between the two lemmas is the parity (side in the
bipartite graph) of the collision of $x_i$ and $x_{i+1}$ versus the
collision of $e$ and $f$. In the second result, we cannot actually
save $\lg n$ bits, but we can encode $\Id(e)$ and $\Id(f)$ almost for
free: we say $e$ and $f$ piggyback on the encodings of $x_i$ and
$x_{i+1}$.

Through a combination of the two lemmas, we can always achieve a
saving $\lg n$ bits in the case $\hat{x}_k \subset x_{k+1}$, improving
on the safe-weak bound:

\begin{lemma}  \label{lem:piggy-overall}
Assume $k$ is minimal such that $\hat{x}_k \subset x_{k+1}$. We can
save $\lg n - O(c\lg k)$ bits if we may modify any operations in the
stream, up to those involving $x_{k+1}$.
\end{lemma}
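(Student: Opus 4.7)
The plan is to reduce to one of the piggyback lemmas by selecting the natural pair $e = x_k$, $f = x_{k+1}$ and an appropriate index $i$. Since the walk visits no edge twice before its first meeting, $x_k \ne x_{k+1}$, and since $\hat x_k \subseteq x_{k+1}$ and $\hat x_{k+1} = \emptyset$, the symmetric difference $x_k \Delta x_{k+1}$ is a nonempty subset of $x_{<k}$. Let $\alpha \in x_k \Delta x_{k+1}$ be a position-character introduced latest in the walk, say $\alpha \in \hat x_{j^*}$ with $1 \le j^* < k$. By the maximality of $j^*$, no character of $x_k \Delta x_{k+1}$ is introduced at any step $> j^*$, so every character of $x_k \cup x_{k+1}$ first seen at a step $> j^*$ lies in $x_k \cap x_{k+1}$. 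Combined with $\hat x_k \subseteq x_{k+1}$, this yields the key identity
\[
x_k \setminus x_{\le j^*} \;=\; x_{k+1} \setminus x_{\le j^*}.
\]

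I would then set $i = j^* - 1$ and verify the three hypotheses common to Lemmas~\ref{lem:bad-pig} and~\ref{lem:good-pig}. The condition $e \setminus x_{\le i} \ne f \setminus x_{\le i}$ holds because $\alpha$ lies on exactly one of $x_k, x_{k+1}$ and is fresh at step $j^*$. The condition $e \setminus x_{\le i+1} = f \setminus x_{\le i+1}$ is exactly the displayed identity. The hash collision $h_k(x_k) = h_k(x_{k+1})$ is given by the walk; the choice of lemma is then dictated by whether this collision occurs on side $h_i$ or side $h_{i+1}$.

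If $j^* \equiv k \pmod 2$, then $h_{i+1} = h_{j^*} = h_k$, so $h_{i+1}(e) = h_{i+1}(f)$: applying Lemma~\ref{lem:bad-pig} with $i = j^* - 1 \le k - 2$ yields a direct saving of $\lg n - O(c \lg k)$ bits by modifying $\Coll(x_{j^*}, x_{j^*+1})$. If $j^* \not\equiv k \pmod 2$, then $h_i = h_{j^*-1} = h_k$, so $h_i(e) = h_i(f)$: applying Lemma~\ref{lem:good-pig} encodes $\Id(x_k)$ and $\Id(x_{k+1})$ via piggyback on $\Id(x_{j^*-1})$, $\Id(x_{j^*})$, and $\Coll(x_{j^*-1}, x_{j^*})$ using only $O(c \lg k)$ bits; we then delete the original $\Id(x_k)$ (worth $\lg n$) as redundant, for a net saving of $\lg n - O(c \lg k)$ bits. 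In both cases all modifications occur at operations no later than those involving $x_{k+1}$.

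The main obstacle I anticipate is the boundary case $j^* = 1$ within the good-pig branch, where nominally $i = 0$ and the operations $\Id(x_0)$, $\Coll(x_0, x_1)$ do not exist. The intended fix is that $\Hashes(h_0, x_1)$ writes precisely the $h_0$ codes of $\hat x_1 = x_1$ and plays the role of $\Coll(x_0, x_1)$, so the same piggyback idea (adding the cheap identifiers and removing one redundant hash code from $\Hashes(h_0, x_1)$) yields the same $\lg n - O(c \lg k)$ saving. The even more degenerate case $k = 1$ is vacuous, since $\hat x_1 \subseteq x_2$ together with $|x_1| = |x_2| = c$ would force $x_1 = x_2$.
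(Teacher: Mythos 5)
Your proposal is correct and follows essentially the same route as the paper: choose $e=x_k$, $f=x_{k+1}$, locate the transition index $i$ where $e\setminus x_{\le i}\ne f\setminus x_{\le i}$ but $e\setminus x_{\le i+1}=f\setminus x_{\le i+1}$, and apply Lemma~\ref{lem:bad-pig} or Lemma~\ref{lem:good-pig} according to parity. The boundary case $j^*=1$ that you anticipate never occurs: $e\Delta f$ contains two distinct characters on a common position, while $x_1$ reveals only one character per position, so $e\Delta f\not\subseteq x_{\le 1}$ and hence $i\ge 1$ --- which is exactly the observation the paper uses to anchor the existence of the transition index.
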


\begin{proof}
We will choose $e = x_k$ and $f = x_{k+1}$. We have $e \setminus x_{<
  k} = f \setminus x_{<k} = \hat{x}_k$. On the other hand, $e
\setminus x_1 \ne f \setminus x_1$ since $x_1$ only reveals one
character per position. Thus there must be some $1 \le i < k-1$ where
the transition happens: $e \setminus x_{\le i} \ne f \setminus x_{\le
  i}$ but $e \setminus x_{\le i+1} = f \setminus x_{\le i+1}$. If $i$
has the opposite parity compared to $k$, Lemma~\ref{lem:bad-pig} saves
a $\lg n$ term. (Note that $i \le k-2$ as required by the lemma.)

If $i$ has the same parity as $k$, Lemma~\ref{lem:good-pig} gives us
$\Id(x_k)$ at negligible cost. Then, we can remove the operation
$\Id(x_k)$ from the stream, and save $\lg n$ bits. (Again, note that
$i \le k-2$ as required.)
\end{proof}

\begin{proof}[Proof of Lemma~\ref{lem:good-pig}]
The lemma assumed $e \setminus x_{\le i} \ne f \setminus x_{\le i}$
but $e \setminus x_{\le i+1} = f \setminus x_{\le i+1}$. Therefore, $e
\Delta f \subset x_{\le i+1}$ and $(e \Delta f) \cap \hat{x}_{i+1} \ne
\emptyset$. Lemma~\ref{lem:coll} applies if we furthermore have
$(e\Delta f) \setminus x_{<i} \ne (x_i \Delta x_{i+1}) \setminus
x_{<i}$. If the lemma applies, we have a saving of $\lg n$, so we can
afford to encode $\Id(e)$. Then $\Id(f)$ can be encoded using $O(c\lg
k)$ bits, since $f$ differs from $e$ only in position-characters from
$x_{\le i+1}$.

If the lemma does not apply, we have a lot of structure on the
keys. Let $y = \hat{x}_i \setminus (e\cup f)$ and $g = e \setminus
x_{\le i+1} = f \setminus x_{\le i+1}$. We must have $y \subset
x_{i+1}$, for otherwise $\hat{x}_i \setminus x_{i+1}$ contains an
elements outside $e\Delta f$ and the lemma applies. We must also have
$\hat{x}_{i+1} \subset e \cup f$.

We can write $\Id(x_i)$, $\Id(x_{i+1})$, $\Id(e)$, and $\Id(f)$ using
$2\lg n + O(c\lg k)$ bits in total, as follows:
\begin{itemize*}
\item the coordinates on which $y$ and $g$ appear, taking $2c$ bits.

\item the value of $y$ using Huffman coding. Specifically, we consider
  the projection of all $n$ keys on the coordinates of $y$. In this
  distribution, $y$ has frequency $\frac{C(y)}{n}$, so its Huffman
  code will use $\lg \frac{n}{C(y)} + O(1)$ bits.

\item the value of $g$ using Huffman coding. This uses $\lg
  \frac{n}{C(g)} + O(1)$ bits.

\item if $C(y) \le C(g)$, we write $x_i$ and $x_{i+1}$. Each of these
  requires $\lceil \log_2 C(y) \rceil$ bits, since $y \subset x_i,
  x_{i+1}$ and there are $C(y)$ completions of $y$ to a full
  key. Using an additional $O(c\lg k)$ bits, we can write $e \cap
  x_{\le i+1}$ and $f \cap x_{\le i+1}$. Remember that we already
  encoded $g = e \setminus x_{\le i+1} = f \setminus x_{\le
    i+1}$, so the decoder can recover $e$ and $f$.

\item if $C(g) < C(y)$, we write $e$ and $f$, each requiring $\lceil
  \log_2 C(g) \rceil$ bits. Since we know $y = \hat{x}_i \setminus
  (e\cup f)$, we can write $x_i$ using $O(c\lg k)$ bits: write the old
  characters outside $\hat{x}_i$, and which positions of $e\cup f$ to
  reuse in $\hat{x}_i$. We showed $\hat{x}_{i+1} \subset e \cup f$, so
  we can also write $x_{i+1}$ using $O(c\lg k)$.
\end{itemize*}

\noindent
Overall, the encoding uses space:
$ \lg \tfrac{n}{C(\xi)} + \lg \tfrac{n}{C(\hat{e}_{i+1})}
+ 2\lg \min \big\{ C(\xi), C(\hat{e}_{i+1}) \big\} + O(c\lg k)
~\le~ 2\lg n + O(c\lg k) \hfill$.
\end{proof}

\ptsubsection{Putting it Together}
We now show how to obtain a saving of at least $\frac{4}{3} \lg n -
O(c\lg k)$ bits by a careful combination of the above techniques.
Recall that our starting point is three edges $a_0, a_1, a_2$ with
$h_0(a_0)=h_0(a_1)=h_0(a_2)$. The walk $x_1,...,x_{k+1}$ started with
$x_1 = a_0$ and finished when $\hat{x}_{k+1} = \emptyset$. We will now
involve the other starting edges $a_1$ and $a_2$. The analysis will
split into many cases, each ended by a '$\Diamond$'.

\newcommand{\case}[1]{{\bf{#1}.}~}

\case{Case 1: One of $a_1$ and  $a_2$ contains a free character}
Let $j\in \{1,2\}$ such that $a_j\not\subseteq x_{\le k}$.  Let $y_1 =
a_j$. We consider a walk $y_1, y_2, \dots$ along the edges of the
obstruction.Let $\hat{y}_i = y_i \setminus x_{\le k} \setminus y_{\le
  i}$ be the free characters of $y_i$ (which also takes all $x_i$'s
into consideration). We stop the walk the first time we observe
$\hat{y}_{\ell+1} = \emptyset$. This must occur, since the graph is
finite and there are no leaves (nodes of degree one) in the
obstruction. Thus, at the latest the walk stops when it repeats an
edge.

We use the standard encoding for the second walk:
\begin{align*}
 \Id(y_1); & \Coll(a_0, y_1); \Id(y_2); \Coll(y_2, y_1); \\
    \dots; & \Id(y_\ell); \Coll(y_{\ell-1}, y_\ell); \Hashes(h_\ell,
    y_\ell) 
\end{align*} 

Note that every pair $\Id(y_j), \Coll(y_{j-1}, y_j)$ saves $\eps$
bits, including the initial $\Id(y_1), \Coll(a_0, y_1)$. To end the
walk, we can use one of the safe savings of Lemmas \ref{lem:safe-strong}
and \ref{lem:safe-weak}. These give a saving of $\frac{1}{3}\lg n -
O(c \lg (\ell+k))$ bits, by modifying only $\Hashes(h_\ell, y_\ell)$
or $\Id(y_\ell)$. These local changes cannot interfere with the first
walk, so we can use any technique (including piggybacking) to save
$\lg n-O(c\log k)$ bits from the first walk. We obtain a total saving
of $\frac{4}{3}\lg n - O(1)$, as required. \cqed


We are left with the situation $a_1 \cup a_2 \subseteq x_{\le
  k}$. This includes the case when $a_1$ and $a_2$ are actual edges
seen in the walk $x_1, \dots, x_k$. 

Let $t_j$ be the first time $a_j$ becomes known in the walk; that is,
$a_j \not\subseteq x_{<t_j}$ but $a_j \subseteq x_{\le t_j}$. By
symmetry, we can assume $t_1\leq t_2$. We begin with two simple cases.

\case{Case 2: For some $j\in \{1,2\}$, $t_j$ is even and $t_j<k$}
We will apply Lemma~\ref{lem:coll} and save $\lg n - O(c\lg k)$ bits
by modifying $\Coll(x_{t_j}, x_{t_j+1})$. Since $t_j<k$, this does not
interact with safe savings at the end of the stream, so we get total
saving of at least $\frac{4}{3}\lg n - O(c\lg k)$.

We apply Lemma~\ref{lem:coll} on the keys $e=a_0$ and $f=a_j$.  We
must first write $\Id(a_j)$, which takes $O(c\lg k)$ bits given
$x_{\leq k}$. We have $a_0 \cup a_j \subseteq x_{\le t_j}$ by
definition of $t_j$. Since $a_j \cap \hat{x}_{t_j} \ne \emptyset$ and
$\hat{x}_{t_j + 1} \cap (a_j \cup a_0) = \emptyset$, the lemma applies.
\cqed

\case{Case 3: For some $j \in \{1,2\}$, $t_j$ is odd and
 $a_j\setminus x_{< t_j-1}\ne \hat{x}_{t_j-1} \Delta\hat{x}_{t_j}$} 
This assumption is exactly what we need to apply Lemma~\ref{lem:coll}
with $e=a_0$ and $f=a_j$. Note that $h_0(e) = h_0(f)$ and $t_j$ is
odd, so the lemma modifies $\Coll(x_{t_j-1}, x_{t_j})$.  The lemma can
be applied in conjunction with any safe saving, since the safe savings
only require modifications to $\Id(x_k)$ or $\Hashes(h_k, x_k)$. \cqed


We now deal with two cases when $t_1 = t_2$ (both being odd or
even). These require a combination of piggybacking followed by
safe-weak savings. Note that in the odd case, we may assume
$a_1\setminus x_{< t-1}= a_2\setminus x_{< t-1} =
\hat{x}_{t-1}\Delta\hat{x}_t$ (due to case 3 above), and in the even
case we may assume $t_1 = t_2 = k$ (due to case 2 above).

\case{Case 4: $t_1=t_2=t$ is odd and $a_1\setminus x_{< t-1}=
  a_2\setminus x_{< t-1} = \hat{x}_{t-1}\Delta\hat{x}_t$} 
We first get $a_1$ and $a_2$ by piggybacking or odd-side saving. Let
$i$ be the largest value such that $a_1\setminus x_{\le i} \ne
a_2\setminus x_{\le i}$.  Since $a_1 \setminus x_{<t-1} = a_2
\setminus x_{<t-1}$, we have $i \le t-3$. The last key that
piggybacking or odd-side saving can interfere with is $x_{t-2}$.

We will now use the safe-weak saving of Lemma~\ref{lem:safe-weak} to
encode $\Id(x_{t-1})$ and $\Id(x_t)$. The known characters are $K =
x_{< t-1} \cup a_1 \cup a_2$, so $x_{t-1}\Delta x_t \subseteq K$.
Lemma \ref{lem:safe-weak} codes both $\Id(x_{t-1})$ and $\Id(x_t)$
with $\frac{2}{3} \lg n + O(c\lg k)$ bits, which represents a saving
of roughly $\frac{4}{3} \lg n$ over the original encoding of the two
identities. We don't need any more savings from the rest of the walk
after $x_t$.  \cqed

\case{Case 5: $t_1=t_2=k$ is even}
Thus, $k$ is even and the last characters of $a_1$ and $a_2$ are
only revealed by $\hat{x}_k$. 

\begin{lemma}
We can save $2\lg n - O(c\lg k)$ bits by modifying $\Hashes(h_k,
x_k)$, unless both: (1) $a_1 \cap \hat{x}_k = a_2 \cap \hat{x}_k$; and
(2)$\hat{x}_k \setminus x_{k+1}$ is the empty set or equal to $a_1
\cap \hat{x}_k$.
\end{lemma}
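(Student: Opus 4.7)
The strategy is to exploit three linear relations on the unknown hash values $\{h_0(\alpha):\alpha\in\hat{x}_k\}$ that $\Hashes(h_k,x_k)$ is about to write (recall $h_k=h_0$ since $k$ is even). First I would verify by induction on $i$ that, before $\Hashes(h_k,x_k)$ is processed, the decoder already knows $h_0(\alpha)$ for every $\alpha\in x_{<k}$: $\Hashes(h_0,x_1)$ reveals $h_0$ on $x_1=\hat{x}_1$, and each $\Coll(x_i,x_{i+1})$ with even $i\le k-2$ encodes the $h_0$ values on $\hat{x}_i\cup\hat{x}_{i+1}$ up to one character recoverable via $h_0(x_i)=h_0(x_{i+1})$, so $h_0$ is pinned down on $x_{\le k-1}=x_{<k}$.

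Next, three collisions yield three xor-equations with already-computable right-hand sides: from $h_0(a_0)=h_0(a_1)$ I get $\bigoplus_{\alpha\in A}h_0(\alpha)=\kappa_1$ with $A=a_1\cap\hat{x}_k$; from $h_0(a_0)=h_0(a_2)$, $\bigoplus_{\alpha\in B}h_0(\alpha)=\kappa_2$ with $B=a_2\cap\hat{x}_k$; and from $h_0(x_k)=h_0(x_{k+1})$ (using $x_{k+1}\subseteq x_{\le k}$ because $\hat{x}_{k+1}=\emptyset$), $\bigoplus_{\alpha\in C}h_0(\alpha)=\kappa_3$ with $C=\hat{x}_k\setminus x_{k+1}$. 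Here I use that $a_0=x_1\subseteq x_{<k}$ is disjoint from $\hat{x}_k$, and that $A$ and $B$ are nonempty because $t_1=t_2=k$.

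Under the negation of the lemma's exception I produce two linearly independent such equations, both involving $\hat{x}_k$. If $A\neq B$, pick $\alpha\in A\Delta B$, say $\alpha\in A\setminus B$, and any $\beta\in B$; since Eq$_B$ does not involve $\alpha$, it determines $h_0(\beta)$ from the other $h_0$ values on $\hat{x}_k$, and then Eq$_A$ determines $h_0(\alpha)$. If instead $A=B$ but $C\neq\emptyset$ and $C\neq A$, the same argument applies to Eq$_A$ and Eq$_C$ using $\alpha\in A\Delta C$. Either way I identify two position-characters $\alpha,\beta\in\hat{x}_k$ whose $h_0$ values are uniquely determined by the chosen pair of equations given the other $h_0$ values on $\hat{x}_k$.

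To realize the saving, I modify $\Hashes(h_k,x_k)$ to write $h_0$ on $\hat{x}_k\setminus\{\alpha,\beta\}$, prefixed by $O(c\log k)$ bits that indicate which pair of equations is in use and point to $\alpha,\beta$ and to the subkeys $a_1,a_2,x_{k+1}$ within $x_{\le k}$; the decoder then recovers $h_0(\alpha)$ and $h_0(\beta)$ as above. Omitting two hash codes yields a net saving of $2\lg m-O(c\log k)=2\lg n-O(c\log k)$. The main delicate point is step one---confirming that every $\kappa_i$ is genuinely computable from information already in the stream---along with the bookkeeping that all pointers live inside $x_{\le k}$, a set of size at most $ck$.
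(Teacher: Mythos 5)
Your proof is correct and follows essentially the same route as the paper: both identify the three subkeys $a_1\cap\hat{x}_k$, $a_2\cap\hat{x}_k$, $\hat{x}_k\setminus x_{k+1}$ whose $h_0$ hash codes are already determined by $x_{<k}$, observe that the lemma's exception is exactly the case where no two of these are distinct and nonempty, and otherwise omit two characters of $\hat{x}_k$ recoverable from two independent xor-equations. Your choice of $\alpha$ in the symmetric difference and $\beta$ in the other set (giving a triangular system) is a slightly more careful version of the paper's ``pick both from the symmetric difference,'' and the decodability bookkeeping you add is implicit in the paper; there is no substantive difference.
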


\begin{proof}
The $h_0$ hash codes of the following 3 subkeys are known from the
hash codes in $x_{<k}$: $a_1 \cap \hat{x}_k$, $a_2 \cap \hat{x}_k$
(both because we know $h_0(a_0) = h_0(a_1) = h_0(a_2)$), and
$\hat{x}_k \setminus x_{k+1}$ (since $x_k$ and $x_{k+1}$ collide).  If
two of these subsets are distinct and nonempty, we can choose two
characters $\alpha$ and $\beta$ from their symmetric difference. We
can encode all characters of $\hat{x}_k$ except for $\alpha$ and
$\beta$, whose hash codes can be deduced for free.

Since $a_j \cap \hat{x}_k \ne \empty$ in the current case, the
situations when we can find two distinct nonempty sets are: (1)
$a_1\cap \hat{x}_k \ne a_2 \cap \hat{x}_k$; or (2) $a_1\cap \hat{x}_k
= a_2 \cap \hat{x}_k$ but $\hat{x}_k \setminus x_{k+1}$ is nonempty
and different from them.
\end{proof}

From now on assume the lemma fails. We can still save $\lg n$ bits by
modifying $\Hashes(h_k, x_k)$. We reveal all hash codes of
$\hat{x}_k$, except for one position-character $\alpha \in a_1 \cap
\hat{x}_k$. We then specify $\Id(a_1)$, which takes $O(c\lg k)$
bits. The hash $h_0(\alpha)$ can then be deduced from $h_0(a_1) =
h_0(a_0)$.

We will now apply piggybacking or odd-side saving to $a_1$ and $a_2$.
Let $i$ be the largest value with $a_1 \setminus x_{\le i} \ne a_2
\setminus x_{\le i}$. Note that $a_1 \setminus x_{<k} = a_2 \setminus
x_{<k}$, so $i < k-1$. If $i$ is odd, Lemma~\ref{lem:bad-pig}
(odd-side saving) can save $\lg n$ bits by modifying $\Coll(x_{i+1},
x_{i+2})$; this works since $i+2 \le k$. If $i$ is even,
Lemma~\ref{lem:good-pig} (piggybacking) can give use $\Id(a)$ and
$\Id(b)$ at a negligible cost of $O(c \lg k)$ bits. This doesn't touch
anything later than $\Id(x_{i+1})$, where $i+1< k$.

When we arrive at $\Id(x_k)$, we know the position characters $K =
x_{<k} \cup a_1 \cup a_2$. This means that $x_k \Delta x_{k+1}
\subseteq K$, because $\hat{x}_k \setminus x_{k+1}$ is either empty or
a subset of $a_1$. Therefore, we can use weak-safe savings from
Lemma~\ref{lem:safe-weak} to code $\Id(x_k)$ in just $\frac{1}{3} \lg
n + O(c\lg k)$ bits. In total, we have save at least $\frac{4}{3}\lg n
- O(c\lg k)$ bits. \cqed


It remains to deal with distinct $t_1, t_2$, i.e.~$t_1 < t_2 \le
k$. If one of the numbers is even, it must be $t_2 = k$, and
then $t_1$ must be odd (due to case 2). By Case 3, if $t_j$ is odd, we
also know $a_j\setminus x_{< t_j-1}=\hat{x}_{t_j-1}\Delta\hat{x}_{t_j}$.
Since these cases need to deal with at least one odd $t_j$, the
following lemma will be crucial:

\begin{lemma}\label{lem:dbl-code} 
If $t_j\le k$ is odd and $a_j\setminus x_{< t_j-1} = \hat{x}_{t_j-1}
\Delta \hat{x}_{t_j}$, we can code $\Id(x_{t_j-1})$ and $\Id(x_{t_j})$
with $\frac{3}{2} \lg n + O(c\lg k)$ bits in total.
\end{lemma}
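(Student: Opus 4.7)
The plan is to spend $\lg n$ bits up front to write $\Id(a_j)$ itself; once $a_j$ is known, both $x_{t_j-1}$ and $x_{t_j}$ can be recovered from $K := x_{<t_j-1}$ and $a_j$ at a cost of only $O(c\lg k)$ further bits.

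The first step will be to observe that $\hat{x}_{t_j-1}$ and $\hat{x}_{t_j}$ are actually disjoint. Indeed, $\hat{x}_{t_j} = x_{t_j} \setminus x_{<t_j}$ while $\hat{x}_{t_j-1} \subseteq x_{t_j-1} \subseteq x_{<t_j}$, so the two sets cannot share any position-character. Hence the hypothesis $a_j \setminus K = \hat{x}_{t_j-1} \Delta \hat{x}_{t_j}$ is really the disjoint union $a_j \setminus K = \hat{x}_{t_j-1} \cup \hat{x}_{t_j}$. In particular every new position-character used by $x_{t_j-1}$ or $x_{t_j}$ already occurs in $a_j$, and therefore $x_{t_j-1}, x_{t_j} \subseteq K \cup a_j$.

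The encoding will then proceed in three sub-operations. First, write $\Id(a_j)$ using $\lg n$ bits; this reveals to the decoder the set $F := a_j \setminus K$ of free characters of $a_j$. Second, spend at most $|F| \le c$ bits to specify the partition $F = \hat{x}_{t_j-1} \cup \hat{x}_{t_j}$, using a single bit per position of $F$ telling the decoder whether that character belongs to $x_{t_j-1}$ or to $x_{t_j}$. Third, for each of the positions of $x_{t_j-1}$ not already filled by $\hat{x}_{t_j-1}$, write an $O(\lg k)$-bit pointer identifying the character of $K$ appearing there, which must exist since $x_{t_j-1} \subseteq K \cup a_j$; do the same for $x_{t_j}$. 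Summing, the total cost is $\lg n + O(c\lg k)$ bits, comfortably within the claimed $\frac{3}{2}\lg n + O(c\lg k)$ budget.

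The main obstacle is bookkeeping rather than a delicate inequality: the encoder must signal, via an $O(\lg k)$-bit pointer at the head of the bit stream, that this nonstandard encoding is being used in place of the usual pair $\Id(x_{t_j-1}); \Id(x_{t_j})$, and one has to verify that inserting the extra $\Id(a_j)$ sub-operation does not clash with any other savings (piggybacking, odd-size saving, safe savings) that the enclosing case analysis will apply to operations elsewhere in the stream. Since the modification only touches the two $\Id$ operations named in the lemma, this is easily arranged by the standard pointer-based bookkeeping already used throughout the proof.
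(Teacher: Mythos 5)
Your proof is correct, and it takes a genuinely different (and in fact sharper) route than the paper's. The paper's proof works with the subkey $y=\hat{x}_{t_j-1}\setminus x_{t_j}$ and splits on whether $C(y)\ge\sqrt{n}$: in one branch $y$ is a ``popular'' subkey and can be named among at most $\sqrt{n}$ candidates with $\tfrac12\lg n$ bits (then $\Id(x_{t_j})$ costs $\lg n$ and $x_{t_j-1}\subseteq y\cup x_{t_j}\cup x_{<t_j-1}$ costs $O(c\lg k)$); in the other branch it pays $\lg n$ for $\Id(x_{t_j-1})$, recovers $y\subset a_j$, and names $a_j$ as one of the $C(y)\le\sqrt{n}$ completions of $y$ for $\tfrac12\lg n$ bits, after which $x_{t_j}\subseteq x_{\le t_j-1}\cup a_j$ costs $O(c\lg k)$. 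You instead pay the full $\lg n$ for $\Id(a_j)$ up front and observe that the hypothesis forces $\hat{x}_{t_j-1}\cup\hat{x}_{t_j}=a_j\setminus x_{<t_j-1}$ (your disjointness observation is right: $\hat{x}_{t_j}$ avoids $x_{<t_j}\supseteq\hat{x}_{t_j-1}$), hence $x_{t_j-1},x_{t_j}\subseteq x_{<t_j-1}\cup a_j$ and both identities follow from $O(c\lg k)$ pointer bits --- the same containment $x_{t_j}\subseteq x_{\le t_j-1}\cup a_j$ that the paper itself uses in its second branch. This yields $\lg n+O(c\lg k)$ rather than $\tfrac32\lg n+O(c\lg k)$, which of course implies the lemma and would only increase the savings in Cases 6--8 where it is invoked; your bookkeeping remarks (announcing the modified operations, touching only the two named $\Id$ operations) match the conventions used throughout the paper's encoding argument.
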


\begin{proof}
Consider the subkey $y=\hat{x}_{t_j-1}\setminus x_{t_j}$. We first
specify the positions of $y$ using $c$ bits. If $C(y)\ge \sqrt{n}$,
there are at most $\sqrt{n}$ possible choices of $y$, so we can
specify $y$ with $\frac{1}{2}\lg n$ bits. We can also identify
$x_{t_j}$ with $\lg n$ bits. Then $\Id(x_{t_j-1})$ requires $O(c\lg
k)$ bits, since $x_{t_j-1} \subseteq y\cup x_{t_j}\cup x_{<t_j-1}$.

If $C(y)\leq \sqrt{n}$, we first specify $\Id(x_{t_j-1})$ with $\lg n$
bits. This gives us the subkey $y\subseteq x_{t_j-1}$. Since
$a_j\setminus x_{< t_j-1} = \hat{x}_{t_j-1} \Delta \hat{x}_{t_j}$, it
follows that $y\subset a_j$. Thus, we can write $\Id(a_j)$ using $\lg
C(y) \le \lg \frac{1}{2} \lg n$ bits. Since $x_{t_j}\subseteq x_{\leq
  t_j-1}\cup a_j$, we get $\Id(x_{t_j})$ for an additional $O(c\lg k)$
bits.
\end{proof}

\case{Case 6: Both $t_1$ and $t_2$ are odd, $t_1<t_2<k$, and for all 
  $j\in \{1,2\}$, $a_j\setminus x_{< t_j-1} = \hat{x}_{t_j-1} \Delta
  \hat{x}_{t_j}$}
We apply Lemma \ref{lem:dbl-code} for both $j=1$ and $j=2$, and save
$\lg n$ bits in coding $\Id(x_{t_1-1})$, $\Id(x_{t_1})$,
$\Id(x_{t_2-1})$, and $\Id(x_{t_2})$. These are all distinct keys,
because $t_1<t_2$ and both are odd. Since $t_2<k$, we can combine this
with any safe saving.\cqed

\case{Case 7: $t_2=k$ is even and $t_1<k$ is odd with
  $a_1\setminus x_{< t_1-1}=\hat{x}_{t_1-1}\Delta\hat{x}_{t_1}$}
We apply Lemma \ref{lem:dbl-code} for $j=1$, and save $\frac{1}{2}\lg
n - O(c\lg k)$ bits in coding $\Id(x_{t_1-1})$, $\Id(x_{t_1})$.  We
also save $\lg n$ bits by modifying $\Hashes(h_0, x_k)$. We reveal all
hash codes of $\hat{x}_k$, except for one position-character $\alpha
\in a_2 \cap \hat{x}_k$ (which is a nonempty set since $t_2=k$). We
then specify $\Id(a_2)$, which takes $O(c\lg k)$ bits. The hash
$h_0(\alpha)$ can then be deduced from $h_0(a_2) = h_0(a_0)$. \cqed

\case{Case 8: Both $t_1$ and $t_2$ are odd, $t_1<t_2=k$, and for
  all $j\in \{1,2\}$, $a_j\setminus x_{< t_j-1} = \hat{x}_{t_j-1}
  \Delta \hat{x}_{t_j}$}
To simplify notation, let $t_1=t$. This case is the most difficult. If
we can apply strong-safe saving as in Lemma~\ref{lem:safe-strong}, we
save $\lg n$ by modifying $\Hashes(h_k, x_k)$. We also save $\lg n$ by
two applications of Lemma \ref{lem:dbl-code}, coding $\Id(x_{t-1})$,
$\Id(x_t)$, $\Id(x_{k-1})$, and $\Id(x_k)$. These don't interact since
$t < k$ and both are odd.

The strong-safe saving fails if $\hat{x}_k \subset x_{k+1}$. We will
attempt to piggyback for $x_k$ and $x_{k+1}$. Let $i$ be the largest
value such that $x_k\setminus x_{\le i} \ne x_{k+1} \setminus x_{\le
  i}$.  If $i$ is even, we get an odd-side saving of $\lg n$
(Lemma~\ref{lem:bad-pig}). Since this does not affect any identities,
we can still apply Lemma \ref{lem:dbl-code} to save $\frac{1}{2} \lg
n$ on the identities $\Id(x_{t-1})$ and $\Id(x_t)$.

Now assume $i$ is odd. We have real piggybacking, which may affect the
coding of $\Id(x_i)$, $\Id(x_{i+1})$ and $\Id(x_k)$. Since both $i$
and $t$ are odd, there is at most one common key between $\{x_i,
x_{i+1} \}$ and $\{ x_{t-1},x_t \}$. We consider two cases:
\begin{itemize*}
\item Suppose $x_{t-1} \notin \{x_i, x_{i+1} \}$. Let
  $y=\hat{x}_{t-1}\setminus x_t$. After piggybacking, which in
  particular encodes $x_t$, we can encode $\Id(x_{t-1})$ in $\lg
  \frac{n}{C(y)} + O(c\lg k)$ bits. Indeed, we can write the positions
  of $y$ with $c$ bits and then the identity of $y$ using Huffman coding
  for all subkeys on those positions. Finally the identity of
  $x_{t-1}$ can be written in $O(c\lg k)$ bits, since $x_{t-1}
  \subset x_{<t-1} \cup y \cup x_t$.

\item Suppose $x_t \notin \{x_i, x_{i+1} \}$. Let $y=\hat{x}_t
  \setminus x_{t-1}$. As above, we can write $\Id(x_t)$ using $\lg
  \frac{n}{C(y)} + O(c\lg k)$ bits, after piggybacking.
\end{itemize*}

If $C(y)\ge n^{1/3}$, we have obtained a total saving of $\frac{4}{3}
\lg n - O(c\lg k)$: a logarithmic term for $\Id(x_k)$ from
piggybacking, and $\frac{1}{3}\lg n$ for $\Id(x_{t-1})$ or $\Id(x_t)$.

Now assume that $C(y)\le n^{1/3}$. In this case, we do \emph{not} use
piggybacking. Instead, we use a variation of Lemma \ref{lem:dbl-code}
to encode $\Id(x_{t-1})$ and $\Id(x_t)$. First we code the one
containing $y$ with $\lg n$ bits. Since $a_1 \setminus x_{< t-1} =
\hat{x}_{t-1} \Delta \hat{x}_t$, and therefore $y\subset a_1$, we have
$y\subset a_1$. We code $\Id(a_1)$ with $\lg C(y) \le \frac{1}{3} \lg
n$ bits. We obtain the other key among $x_{t-1}$ and $x_t$ using
$O(c\lg k)$ bits, since all its characters are known. Thus we have
coded $\Id(x_{t-1})$ and $\Id(x_t)$ with $\frac{4}{3} \lg n + O(c\lg
k)$ bits, for a saving of roughly $\frac{2}{3} \lg n$ bits.

Next we consider the coding of $\Id(x_{k-1})$ and $\Id(x_k)$. We know
that $a_2 \setminus x_{<k-1} = \hat{x}_{k-1} \Delta \hat{x}_k$ and
$\hat{x}_k \subset x_{k+1}$. Lemma \ref{lem:dbl-code} would guarantee
a saving of $\frac{1}{2}\lg n$ bits. However, we will perform an
analysis like above, obtaining a saving of $\frac{2}{3}\lg n$ bits.

Let $y=\hat{x}_{k-1}\setminus x_k$. First assume $C(y)\ge n^{1/3}$.
We use the safe-weak saving of Lemma~\ref{lem:safe-weak} to encode
$\Id(x_k)$ using $\frac{2}{3}\lg n$ bits. We then encode the subkey
$y$ using $\lg \frac{n}{C(y)} + O(c) \le \frac{2}{3} \lg n + O(c)$
bits, and finally $x_{k-1}$ using $O(c\lg k)$ bits. This obtains both
$\Id(x_{k-1})$ and $\Id(x_k)$ using $\frac{4}{3}\lg n + O(c\lg k)$
bits.

Now assume $C(y)\le n^{1/3}$. We first code $\Id(x_{k-1})$ using $\lg
n$ bits. This gives us $y$ for the price of $c$ bits. But $a_2
\setminus x_{<k-1} = \hat{x}_{k-1} \Delta \hat{x}_k$, so $y \subset
a_2$, and we can code $\Id(a_2)$ using $\lg C(y) \le \frac{1}{3} \lg
n$ bits. Then $\Id(x_k)$ can be coded with $O(c\lg k)$ bits.  Again,
we obtain both $\Id(x_{k-1})$ and $\Id(x_k)$ for the price of
$\frac{4}{3}\lg n + O(c\lg k)$ bits.  \cqed

This completes our analysis of cuckoo hashing.

\section{Minwise Independence}

We will prove that:
\begin{equation} \label{eq:minwise}
\frac{1}{n} \cdot \left(1 - \frac{O(\lg n)}{n^{1/c}} \right) 
~\le~
\Pr[h(q) < \min h(X)] 
~\le~ \frac{1}{n} \cdot \left(1 + \frac{O(\lg^2 n)}{n^{1/c}} \right) 
\end{equation}
The lower bound is relatively simple, and is shown in
\S\ref{sec:mw-lower}. The upper bound is significantly more involved
and appears in \S\ref{sec:mw-upper}.

For the sake of the analysis, we divide the output range $[0,1)$ into
  $\frac{n}{\ell}$ bins, where $\ell = \gamma \lg n$ for a large
  enough constant $\gamma$. Of particular interest is the minimum bin
  $[0, \frac{\ell}{n})$. We choose $\gamma$ sufficiently large for the
    Chernoff bounds of Theorem~\ref{thm:chernoff} to guarantee that
    the minimum bin in non-empty w.h.p.: $\Pr[ \min h(X) <
      \frac{\ell}{n} ] \ge 1 - \frac{1}{n^2}$.

In \S\ref{sec:mw-lower} and \S\ref{sec:mw-upper}, we assume that hash
values $h(x)$ are binary fractions of infinite precision (hence, we
can ignore collisions). It is easy to see that \eqref{eq:minwise}
continues to hold when the hash codes have $(1+ \frac{1}{c})\lg n$
bits, even if ties are resolved adversarially. Let $\tilde{h}$ be a
truncation to $(1 + \frac{1}{c})\lg n$ bits of the infinite-precision
$h$. We only have a distinction between the two functions if $q$ is the
minimum and $(\exists) x\in S : \tilde{h}(x) = \tilde{h}(q)$. The
probability of a distinction is bounded from above by:
\[ \Pr \big[ \tilde{h}(q) \le \tfrac{\ell}{n} 
      ~\land~ (\exists) x\in S:\tilde{h}(x) = \tilde{h}(q) \big]
~\le~ \tfrac{\ell}{n} \cdot \big( n \cdot \tfrac{1}{n^{1+1/c}} \big)
~\le~ \tfrac{O(\lg n)}{n^{1+1/c}} \]
We used 2-independence to conclude that $\{ h(q) < \frac{\ell}{n} \}$
and $\{ \tilde{h}(x) = \tilde{h}(q) \}$ are independent.

Both the lower and upper bounds start by expressing:
\[
 \Pr[h(q) < \min h(S)] = \int_0^1 f(p) \mathrm{d}p,
         \qquad \textrm{where } 
f(p) = \Pr[p < \min h(S) \mid h(q) = p].
\]
For truly random hash functions, $\Pr[p < \min h(S) \mid h(q) = p] =
(1-p)^n$, since each element has an independent probability of $1-p$
of landing about $p$.

\subsection{Lower bound}   \label{sec:mw-lower}

For a lower bound, it suffices to look at the case when $q$ lands in
the minimum bin:
\[
 \Pr[h(q) < \min h(S)] \ge \int_0^{\ell/n} f(p) \mathrm{d}p, \qquad
    \textrm{where } f(p) = \Pr[p < \min h(S) \mid h(q) = p]
\]

We will now aim to understand $f(p)$ for $p \in [0, \frac{\ell}{n}]$.
In the analysis, we will fix the hash codes of various
position-characters in the order $\prec$ given by
Lemma~\ref{lem:group-size-q}. Let $h(\prec\!\alpha)$ done the choice for
all position-characters $\beta \prec \alpha$.

Remember that $\prec$ starts by fixing the characters of $q$ first,
so: $q_1 \prec \cdots \prec q_c \prec \alpha_0 \prec \alpha_1 \prec
\cdots$ Start by fixing $h(q_1), \dots, h(q_c)$ subject to $h(q) = x$.

When it is time to fix some position-character $\alpha$, the hash code
of any key $x\in G_\alpha$ is a constant depending on $h(\prec\!
\alpha)$ {\bf xor} the random quantity $h(\alpha)$. This final xor
makes $h(x)$ uniform in $[0,1)$. Thus, for any choice of $h(\prec\!
  \alpha)$, $\Pr[ h(z) < p \mid h(\prec\! \alpha) ] = p$. By the union
  bound, $\Pr[p < \min h(G_\alpha) \mid h(\prec\! \alpha)] \ge 1 - p
  \cdot |G_\alpha|$. This implies that:
\begin{equation}\label{eq:lower-prod}
f(p) ~=~ \Pr[p < \min h(S) \mid h(q) = p] 
~\ge~ \prod_{\alpha \succ q_c} (1- p \cdot |G_\alpha|).
\end{equation}

To bound this product from below, we use the following lemma:
\begin{lemma}\label{lem:Gerber}
Let $p \in [0,1]$ and $k \ge 0$, where $p\cdot k \le \sqrt{2} - 1$. Then 
$1- p\cdot k > (1-p)^{(1+pk) k}$.
\end{lemma}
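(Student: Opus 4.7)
The plan is to take logarithms and reduce the whole thing to a one-variable calculus check. The case $p=1$ is trivial (the right-hand side is $0$ while $1-pk=1-k\ge 2-\sqrt2>0$), so assume $p<1$. Then $1-pk\in(0,1]$ and $(1-p)^{(1+pk)k}\in(0,1]$ are both positive, and the lemma is equivalent to $\ln(1-pk)>(1+pk)k\ln(1-p)$, i.e., after negating both sides, $-\ln(1-pk)<(1+pk)k\,(-\ln(1-p))$.

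Next I would apply the crude bound $-\ln(1-p)\ge p$, which is strict for $p>0$ and follows from the nonnegative series $-\ln(1-p)=\sum_{j\ge1}p^j/j$. This replaces the right-hand side by the polynomial $(1+pk)kp=pk+(pk)^2$, so it suffices to show
\[ -\ln(1-pk)\;\le\;pk+(pk)^2. \]
Setting $t=pk$, the problem collapses to verifying the scalar inequality $h(t):=\ln(1-t)+t+t^2\ge 0$ on $t\in[0,\sqrt2-1]$.

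Checking $h\ge 0$ is elementary: $h(0)=0$, and
\[ h'(t)\;=\;-\frac{1}{1-t}+1+2t\;=\;\frac{t(1-2t)}{1-t}, \]
which is strictly positive on $(0,1/2)$. Since $\sqrt2-1<1/2$, $h$ is strictly increasing on $[0,\sqrt2-1]$, so $h(t)>0$ for $t\in(0,\sqrt2-1]$. Chaining back through the two reductions—the first of which is strict whenever $p>0$—yields the strict inequality of the lemma for $p,k>0$.

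The only subtle point, and the only place the precise hypothesis matters, is the choice of the threshold $\sqrt2-1$. Its role is exactly to keep $t=pk$ strictly below the unique critical point $1/2$ of $h$ on $[0,1)$; past $1/2$, $h$ starts to decrease and eventually becomes negative. Using the loose bound $-\ln(1-p)\ge p$ (rather than, say, $-\ln(1-p)\ge p+p^2/2$) costs nothing here because the factor $(1+pk)$ already supplies the matching $t^2$ term that swamps $-\ln(1-pk)$ through $O(t)$.
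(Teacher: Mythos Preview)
Your proof is correct (with the same caveat as the paper's own proof: the strict inequality fails at the degenerate boundary $p=0$ or $k=0$, which is a defect in the lemma's statement rather than in either argument; you flag this yourself by restricting to $p,k>0$ at the end).

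Your route is genuinely different from the paper's. The paper invokes Gerber's inequality $(1-p)^t \le 1 - pt + (pt)^2/2$ for real $t\ge 0$ as a black box, sets $t=(1+pk)k$, and checks that the resulting quadratic upper bound is at most $1-pk$ precisely when $(1+pk)^2\le 2$, i.e.\ $pk\le\sqrt{2}-1$. You instead take logarithms, use $-\ln(1-p)>p$ to shed the second variable, and reduce everything to the one-variable calculus check $h(t)=\ln(1-t)+t+t^2\ge 0$. This is more elementary and fully self-contained, avoiding the external citation. It also buys a little extra: $h$ remains nonnegative well past $1/2$ (its positive zero is near $0.68$), so your argument actually establishes the inequality on a strictly larger range than $[0,\sqrt{2}-1]$. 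In the paper's proof the threshold $\sqrt{2}-1$ is exactly where the Gerber bound becomes tight; in yours it enters only through the convenient observation that $\sqrt{2}-1<1/2$ keeps $h$ in its increasing regime, so your remark that this is ``exactly'' the role of the threshold slightly undersells your own proof.
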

\begin{proof*} 
First we note a simple proof for the weaker statement
$(1-pk)<(1-p)^{\lceil (1+pk)k \rceil}$. However, it will be crucial
for our later application of the lemma that we can avoid the ceiling.

Consider $t$ Bernoulli trials, each with success probability $p$.  The
probability of no failures occurring is $(1-p)^t$. By the
inclusion-exclusion principle, applied to the second level, this is
bounded from above by:
\[ (1-p)^t ~\le~ 1 - t\cdot p + \binom{t}{2} p^2 
~<~ 1 - (1 - \tfrac{pt}{2}) t \cdot p \]
Thus, $1-kp$ can be bounded from below by the probability that no
failure occurs amount $t$ Bernoulli trials with success probability
$p$, for $t$ satisfying $t \cdot (1 - \frac{pt}{2}) \ge k$. This holds
for $t \ge (1+kp) k$.

We have just shown $1- p\cdot k > (1-p)^{\lceil (1+pk) k
  \rceil}$. Removing the ceiling requires an ``inclusion-exclusion''
inequality with a non-integral number of experiments $t$. Such an
inequality was shown by Gerber~\cite{gerber68bernoulli}: $(1-p)^t \le
1-\alpha t + (\alpha t)^2/2$, even for fractional $t$. Setting $t =
(1+pk)k$, our result is a corollary of Gerber's inequality:
\begin{eqnarray*}
(1-p)^{t} &\le& 1 - pt + \tfrac{(pt)^2}{2}
~=~ 1- p (1+pk)k+ \tfrac{1}{2}(p (1+pk)k)^2 \\
&=&1- pk - (1-\tfrac{(1+pk)^2}{2})(pk)^2 
~\le~ 1-pk. \hfill \qed
\end{eqnarray*}
\end{proof*}

The lemma applies in our setting, since $p < \frac{\ell}{n} =
O(\frac{\lg n}{n})$ and all groups are bounded $|G_\alpha| \le 2\cdot
n^{1-1/c}$. Note that $p \cdot |G_\alpha| \le \frac{\ell}{n} \cdot 2
n^{1-1/c} = O(\ell / n^{1/c})$. Plugging into \eqref{eq:lower-prod}:
\[
f(p) ~\ge~ \prod_{\alpha \succ q_c} (1- p \cdot |G_\alpha|)
~\ge~ \prod_{\alpha \succ q_c} (1 - p)^{|G_\alpha| (1 + \ell / n^{1/c})}
~\ge~ (1-p)^{n\cdot (1+\ell/n^{1/c})}.
\]
Let $m = n\cdot (1 + \ell / n^{1/c})$. The final result follows by
integration over $p$:
\begin{eqnarray*}
\Pr[h(q) < \min h(S)] &\ge& \int_0^{\ell/n} f(p) \mathrm{d}p
 \ge \int_{0}^{\ell/n} (1-p)^m \mathrm{d}p \\
&=& \left. \frac{-(1-p)^{m+1}}{m+1}~ \right|_{p=0}^{\ell/n}
~=~ \frac{1-(1-\ell/n)^{m+1}}{m+1} 
\\ &> & \frac{1-e^{-\ell}}{m+1}
~>~ \frac{1-1/n}{n(1+\ell/n^{1/c})}
~=~ \frac{1}{n} \cdot \left( 1 - \frac{O(\lg n)}{n^{1/c}} \right)
\end{eqnarray*}

\subsection{Upper bound}   \label{sec:mw-upper}

As in the lower bound, it will suffice to look at the case when $q$
lands in the minimum bin:
\[ \Pr[ h(q) < h(S)] ~\le~  \Pr[ \min h(S) \ge \tfrac{\ell}{n}]
+ \Pr[ h(q) < h(S) \land h(q) < \tfrac{\ell}{n}]
~\le~ \tfrac{1}{n^2} + \int_0^{\ell/n} f(p) \mathrm{d}p
\]

To bound $f(p)$, we will fix position-characters in the order $\prec$
from Lemma~\ref{lem:group-size-q}, subject to $h(q) = p$. In the lower
bound, we could analyze the choice of $h(\alpha)$ even for the
worst-case choice of $h(\prec\! \alpha)$. Indeed, no matter how the
keys in $G_\alpha$ arranged themselves, when shifted randomly by
$h(\alpha)$, they failed to land below $p$ with probability $1 -
p|G_\alpha| \ge (1-p)^{(1+o(1)) |G_\alpha|}$.

For an upper bound, we need to prove that keys from $G_\alpha$ do land
below $p$ often enough: $\Pr[p < \min h(G_\alpha) \mid
  h(\prec\!\alpha) ] \le (1-p)^{(1-o(1)) |G_\alpha|}$. However, a
worst-case arrangement of $G_\alpha$ could make all keys equal, which
would give the terrible bound of just $1-p$.

To refine the analysis, we can use Lemma~\ref{lem:manybins}, which
says that for $d = O(1)$, all groups $G_\alpha$ are $d$-bounded with
probability $\ge 1 - \frac{1}{n^2}$. If $G_\alpha$ is $d$-bounded, its
keys cannot cluster in less than $\lceil |G_\alpha| / d \rceil$
different bins. 

When a group $G_\alpha$ has more than one key in some bin, we pick one
of them as a \emph{representative}, by some arbitrary (but fixed)
tie-breaking rule. Let $R_\alpha$ be the set of representatives of
$G_\alpha$. Observe that the set $R_\alpha \subseteq G_\alpha$ is
decided once we condition on $h(\prec\! \alpha)$. Indeed, the hash codes
for keys in $G_\alpha$ are decided up to a shift by $h(\alpha)$, and
this common shift cannot change how keys cluster into bins.  We
obtain:
\[ \Pr[p < \min h(G_\alpha)\mid h(\prec\!\alpha)]
 ~\le~ \Pr[p < \min h(R_\alpha) \mid h(\prec\! \alpha)] 
~=~ 1- p |R_\alpha| ~\le~ (1-p)^{|R_\alpha|}
\]
To conclude $\Pr[p < \min h(R_\alpha)] = 1 - p|R_\alpha|$ we used that
the representatives are in different bins, so at most one can land
below $p$. Remember that $|R_\alpha|$ is a function of $h(\prec\!
\alpha)$. By $d$-boundedness, $|R_\alpha| \ge |G_\alpha|/d$, so we get
$\Pr[p < \min h(G_\alpha)\mid h(\prec\!\alpha)] \le (1-p)^{|G_\alpha|
  / d}$ for almost all $h(\prec\!  \alpha)$. Unfortunately, this is a
far cry from the desired exponent, $|G_\alpha| \cdot \big( 1 -
\widetilde{O}(n^{-1/c}) \big)$.

To get a sharper bound, we will need a dynamic view of the
representatives. After fixing $h(\prec\! \alpha)$, we know whether two
keys $x$ and $y$ collide whenever the symmetric difference $x\Delta y
= (x \setminus y) \cup (y \setminus x)$ consists only of
position-characters $\prec \alpha$. Define $R_\beta(\alpha)$ to be our
understanding of the representatives $R_\beta$ just before character
$\alpha$ is revealed: from any subset of $G_\beta$ that is known to
collide, we select only one key. After the query characters get
revealed, we don't know of any collisions yet (we know only one
character per position), so $R_\beta(\alpha_0) = G_\beta$. The set of
representatives decreases in time, as we learn about more collisions,
and $R_\beta(\beta) = R_\beta$ is the final value (revealing $\beta$
doesn't change the clustering of $G_\beta$).

Let $C(\alpha)$ be the number of key pairs $(x,y)$ from the same group
$G_\beta$ ($\beta \succ \alpha$) such that $\alpha = \max_\prec (x
\Delta y)$. These are the pairs whose collisions is decided when
$h(\alpha)$ is revealed, since $h(\alpha)$ is the last unknown hash
code in the keys, besides the common ones. Let $\alpha^+$ be the
successor of $\alpha$ in the order $\prec$. Consider the total number
of representatives before and after $h(\alpha)$ is revealed:
$\sum_\beta |R_\beta(\alpha)|$ versus $\sum_\beta
|R_\beta(\alpha^+)|$. The maximum change between these quantities is
$\le C(\alpha)$, while the expected change is $\le C(\alpha) \cdot
\frac{\ell}{n}$. This is because $h(\alpha)$ makes every pair $(x,y)$
collide with probability $\frac{\ell}{n}$, regardless of the previous
hash codes in $(x\Delta y) \setminus \{\alpha\}$. Note, however, that
the number of colliding pairs may overestimate the decrease in the
representatives if the same key is in multiple pairs.

Let $n(\succ\! \alpha) = \sum_{\beta \succ \alpha} |G_\alpha|$ and
define $n(\succeq\! \alpha)$ simmilarly. Our main inductive claim is:
\begin{lemma} \label{lem:mess}
For any setting $h(\prec\! \alpha)$ such that $h(q)=p$ and
$\sum_{\beta \succeq \alpha} |R_\beta (\alpha)| = r$, we have:
\[
\Pr\Big[ \Big( p < \min \bigcup_{\beta \succeq \alpha} h(G_\beta) \Big)
         \land (\forall \alpha) G_\alpha~d\textrm{-bounded}
  ~\big|~ h(\prec\!\alpha) \Big] ~\le~ P(\alpha, p, r)
\]
where we define $\displaystyle P(\alpha, p, r)
= (1-p)^r + (1-p)^{n(\succeq \alpha) / (2d)} \cdot
     \sum_{\beta \succeq \alpha} \frac{4C(\beta)\cdot (\ell/n)}
         {n(\succ\!\beta) / d}$.
\end{lemma}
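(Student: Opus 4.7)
The plan is to prove the lemma by induction on $\alpha$ in reverse order $\prec$, starting from a dummy position-character past the last one. The base case is immediate: with no characters remaining, $r = 0$ and the sum in $P$ is empty, so $P = 1$ trivially bounds the probability. For the inductive step, fix $\alpha$ and condition on $h(\prec\alpha)$. Since $d$-boundedness of $G_\alpha$ is determined by $h(\prec\alpha)$, I may assume $G_\alpha$ is $d$-bounded (otherwise the conditional probability is $0$). Writing $A = \{\min h(G_\alpha) > p\}$ and $\Delta(\alpha) = \sum_{\beta \succ \alpha}\bigl(|R_\beta(\alpha)| - |R_\beta(\alpha^+)|\bigr)$ for the decrease in representative counts produced by revealing $h(\alpha)$, so that $r(\alpha^+) = r - |R_\alpha| - \Delta(\alpha)$, the inductive hypothesis applied conditionally on $h(\preceq\alpha)$ gives
\[
\Pr[E_\alpha \mid h(\prec\alpha)] \le \E_{h(\alpha)}\bigl[\mathbf{1}[A] \cdot P(\alpha^+, p, r(\alpha^+))\bigr].
\]

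Next I would split $P(\alpha^+, p, r(\alpha^+))$ into its two summands and bound them separately, aiming to produce $(1-p)^r$ and $(1-p)^{n(\succeq\alpha)/(2d)} \Sigma_\alpha$ respectively. Observe that $n(\succeq\alpha^+) = n(\succ\alpha)$ and that $\Sigma_\alpha - \Sigma_{\alpha^+}$ is exactly the $\beta = \alpha$ contribution $4C(\alpha)(\ell/n)/(n(\succ\alpha)/d)$. For the summand $(1-p)^{n(\succ\alpha)/(2d)}\Sigma_{\alpha^+}$, combining $\Sigma_{\alpha^+} \le \Sigma_\alpha$ with $\Pr[A] \le (1-p)^{|R_\alpha|} \le (1-p)^{|G_\alpha|/(2d)}$ (via Bernoulli's inequality and $|R_\alpha| \ge |G_\alpha|/d$) yields exactly $(1-p)^{n(\succeq\alpha)/(2d)}\Sigma_\alpha$, matching the target. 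For the other summand, write $(1-p)^{r(\alpha^+)} = (1-p)^{r-|R_\alpha|}(1-p)^{-\Delta(\alpha)}$ and use the decomposition $\mathbf{1}[A](1-p)^{-\Delta} \le \mathbf{1}[A] + ((1-p)^{-\Delta} - 1)$, whose first piece contributes $(1-p)^{r-|R_\alpha|}\Pr[A] \le (1-p)^r$.

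The only piece left to absorb is the residual $(1-p)^{r-|R_\alpha|}\,\E[(1-p)^{-\Delta(\alpha)} - 1]$, which must be controlled by the $\beta = \alpha$ contribution $(1-p)^{n(\succeq\alpha)/(2d)}\cdot 4C(\alpha)(\ell/n)/(n(\succ\alpha)/d)$. I expect this to be the main obstacle. The essential combinatorial fact is that each of the $C(\alpha)$ counted pairs collides iff $h(\alpha)$ equals a pair-specific bin index, so $\Delta(\alpha)$ is a sum of indicators of mean $\ell/n$ and $\E[\Delta(\alpha)] \le C(\alpha)\ell/n$. In the typical regime $p\Delta \ll 1$, a first-order Taylor estimate $(1-p)^{-k} \le 1 + O(pk)$ yields $\E[(1-p)^{-\Delta} - 1] = O(p\,C(\alpha)\ell/n)$; combining with the exponential damping from $r - |R_\alpha| \ge n(\succ\alpha)/d$ and the elementary inequality $x \le 2e^{x/2}$ supplies the required absorption when $n(\succ\alpha) \gtrsim |G_\alpha|$. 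The hardest subcase is when many pairs share a common bin-target $v$, making $\Delta$ heavy-tailed and breaking the Taylor bound; handling it will require either a finer split by the per-target multiplicity $k_v$ (using $(1-p)^{-k_v} \le e^{2pk_v}$ for $k_v \le 1/(2p)$ and a direct worst-case bound otherwise) or a case split in the main induction based on the relative sizes of $|G_\alpha|$ and $n(\succ\alpha)$.
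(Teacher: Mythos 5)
Your inductive skeleton is the same as the paper's: reveal $h(\alpha)$, use the representatives $R_\alpha$ (which sit in distinct bins inside the minimum bin) to get $\Pr[A]\le 1-p|R_\alpha|\le(1-p)^{|R_\alpha|}$, propagate the sum term via $\Sigma_{\alpha^+}\le\Sigma_\alpha$ and absorb $(1-p)^{|R_\alpha|}\le(1-p)^{|G_\alpha|/(2d)}$ into the exponent, and reduce everything to the residual $(1-p)^{\hat r}\,\E\big[(1-p)^{-\Delta}-1\big]$ with $\hat r=r-|R_\alpha|$. Up to that point the proposal is sound, and your identification of $\E[\Delta]\le C(\alpha)\ell/n$ is correct. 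But the step you defer as ``the main obstacle'' is exactly where the proof lives, and neither of your proposed fixes is the one that works. The first-order Taylor bound genuinely fails because $\Delta$ can be heavy-tailed, as you note; yet refining the collision structure by per-target multiplicities, or case-splitting on $|G_\alpha|$ versus $n(\succ\!\alpha)$, does not repair this, because the obstruction is not about how the $C(\alpha)$ colliding pairs distribute over targets but about how large $\Delta$ is allowed to be at all.

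The missing idea is that the event being bounded includes the conjunction ``$(\forall\alpha)\,G_\alpha$ $d$-bounded,'' and this conjunction \emph{truncates} $\Delta$. If $\Delta>\Delta^{\max}:=\hat r-n(\succ\!\alpha)/(2d)$, then fewer than $n(\succ\!\alpha)/(2d)<n(\succ\!\alpha)/d$ representatives survive, so some later group cannot end up $d$-bounded and those outcomes of $h(\alpha)$ contribute zero. On the surviving outcomes $0\le\Delta\le\Delta^{\max}$, the map $\delta\mapsto(1-p)^{\hat r-\delta}$ is convex, so subject to the mean constraint its expectation is maximized by placing mass $O\big(C(\alpha)\ell/n\big)/\Delta^{\max}$ at $\Delta^{\max}$ and the rest at $0$. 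Since $\hat r\ge n(\succ\!\alpha)/d$ (again by $d$-boundedness) we have $\Delta^{\max}\ge n(\succ\!\alpha)/(2d)$, and $(1-p)^{\hat r-\Delta^{\max}}=(1-p)^{n(\succ\!\alpha)/(2d)}$, so the residual is at most $\frac{4C(\alpha)\,\ell/n}{n(\succ\!\alpha)/d}\cdot(1-p)^{n(\succ\!\alpha)/(2d)}$ --- precisely the new $\beta=\alpha$ term of $P(\alpha,p,r)$, and precisely why $P$ has that shape. (A side remark: if one bounds $\E[\Delta]$ conditioned on $A$ rather than unconditionally, one must note $\Pr[A]\ge 1/2$, which costs the factor $2$; your unconditional decomposition $\mathbf{1}[A](1-p)^{-\Delta}\le\mathbf{1}[A]+\big((1-p)^{-\Delta}-1\big)$ sidesteps this.) Without the truncation-plus-convexity step your induction cannot close.
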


As the definition $P(\alpha, p, r)$ may look intimidating, we first
try to demystify it, while giving a sketch for the lemma's proof (the
formal proof appears in \S\ref{sec:full-mess}.) The
lemma looks at the worst-case probability, over prior choices
$h(\prec\!\alpha)$, that $p = h(q)$ remains the minimum among groups
$G_\alpha, G_{\alpha^+}, \dots$. After seeing the prior hash codes,
the number of representatives in these groups is $r = \sum_{\beta
  \succeq \alpha} |R_\beta(\alpha)|$. In the ideal case when
$h(\alpha), h(\alpha^+), \dots$ do not introduce any additional
collisions, we have $r$ representatives that could beat $p$ for the
minimum. As argued above, the probability that $p$ is smaller than all
these representatives is $\le (1-p)^r$. Thus, the first term of
$P(\alpha, p, r)$ accounts for the ideal case when no more collisions
occur.

On the other hand, the factor $(1-p)^{n(\succeq \alpha) / (2d)}$
accounts for the worst case, with no guarantee on the representatives
except that the groups are $d$-bounded (the 2 in the exponent is an
artifact). Thus, $P(\alpha, p, r)$ interpolates between the best case
and the worst case. This is explained by a convexity argument: the
bound is maximized when $h(\alpha)$ mixes among two extreme strategies
--- it creates no more collisions, or creates the maximum it could.

It remains to understand the weight attached to the worst-case
probability. After fixing $h(\alpha)$, the maximum number of remaining
representatives is $\hat{r} = \sum_{\beta \succ \alpha}
|R_\beta(\alpha)|$.  The expected number is $\ge \hat{r} - C(\alpha)
\frac{\ell}{n}$, since every collision happens with probability
$\frac{\ell}{n}$. By a Markov bound, the worst case (killing most
representatives) can only happen with probability $O\big(
\frac{\ell}{n} C(\alpha) \big/ \hat{r} \big)$. The weight of the worst
case follows by $\hat{r} \ge n(\succ\! \alpha) / d$ and letting these
terms accrue in the induction for $\beta \succ \alpha$.

\paragraph{Deriving the upper bound.}
We now prove the upper bound on $\Pr[h(q) < h(S)]$ assuming
Lemma~\ref{lem:mess}. Let $\alpha_0$ be the first position-character
fixed after the query. Since fixing the query cannot eliminate
representatives,
\[ \Pr[ p < \min h(S) \land (\forall \alpha) G_\alpha
  ~d\textrm{-bounded} \mid h(q) = p ] \le P(\alpha_0, p, n) \]

\begin{lemma}
$P(\alpha_0, p, n) \le (1-p)^n + (1-p)^{n/(2d)} \cdot \frac{O(\lg^2 n)}{n^{1/c}}$.
\end{lemma}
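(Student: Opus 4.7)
The plan is to unfold the definition of $P(\alpha_0,p,n)$ and reduce the lemma to a purely combinatorial bound on the groups. Since $\alpha_0$ is the first position-character after the query, $n(\succeq\alpha_0)=n$, so
\[
P(\alpha_0,p,n) \;=\; (1-p)^n + (1-p)^{n/(2d)}\cdot \frac{4d\ell}{n}\cdot \sum_{\beta\succeq\alpha_0}\frac{C(\beta)}{n(\succ\beta)}.
\]
With $\ell=\Theta(\lg n)$ and $d=O(1)$, the prefactor out front is $O(\lg n/n)$, and the lemma reduces to proving the structural estimate
\[
\sum_{\beta\succeq\alpha_0} \frac{C(\beta)}{n(\succ\beta)} \;=\; O\!\left(n^{1-1/c}\,\lg n\right).
\]

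The approach to this sum will be a dyadic decomposition of the characters along $\prec$. Enumerating the order as $\alpha_0\prec\alpha_1\prec\cdots$ and writing $n_i:=n(\succeq\alpha_i)$, the sequence $n_{i+1}=n(\succ\alpha_i)$ is nonincreasing from $n$ down to a constant. I will partition the indices into $O(\lg n)$ levels
\[
I_k \;=\; \{\,i : n_{i+1}\in[n/2^{k+1},\,n/2^k)\,\}, \qquad k=0,1,2,\dots,
\]
and bound the contribution of each level separately.

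The crux of the argument, and the only place where the exponent $1-1/c$ enters, is the per-level pair-counting estimate. Every pair $(x,y)$ in a common group counted by $C(\alpha_i)$ with $i\in I_k$ has both keys in some $G_\gamma$ with $\gamma\succ\alpha_{i_0(k)}$ for $i_0(k):=\min I_k$, so both $x,y$ come from a pool of only $n_{i_0(k)+1}<n/2^k$ keys. Invoking Lemma~\ref{lem:group-size-q} to bound $|G_\gamma|\le 2n^{1-1/c}$, the number of in-group pairs on this pool is at most $\max_\gamma|G_\gamma|\cdot n_{i_0(k)+1}/2 \le n^{1-1/c}\cdot n/2^k$. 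Combined with the uniform bound $1/n_{i+1}\le 2^{k+1}/n$ valid for $i\in I_k$, each level contributes
\[
\sum_{i\in I_k}\frac{C(\alpha_i)}{n_{i+1}} \;\le\; \frac{2^{k+1}}{n}\cdot n^{1-1/c}\cdot \frac{n}{2^k} \;=\; 2n^{1-1/c},
\]
and summing over the $O(\lg n)$ levels gives the desired $O(n^{1-1/c}\lg n)$; multiplying by the $O(\lg n/n)$ prefactor produces the advertised $O(\lg^2 n/n^{1/c})$.

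The main obstacle I anticipate is resisting the trivial bound $C(\beta)\le\binom{n}{2}$, and even the cleaner $\sum_\beta C(\beta)/n(\succ\beta)\le n/2$ coming from $|G_\gamma|\le n(\succ\beta)$ for any pair activated at $\beta$ inside $G_\gamma$; both fall short by a factor of $n^{1/c}/\lg n$. Recovering this factor is exactly what the dyadic decomposition together with the max-group-size bound from Lemma~\ref{lem:group-size-q} buys us, so once the level-wise accounting is set up no further difficulty remains.
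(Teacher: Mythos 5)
Your proposal is correct. The reduction is right: with $n(\succeq\alpha_0)=n$ the lemma comes down to showing $\sum_{\beta}C(\beta)/n(\succ\beta)=O(n^{1-1/c}\lg n)$, and your level-wise accounting establishes this — each pair is counted exactly once, at a position-character $\beta$ preceding its group $\gamma$, so all pairs charged within a dyadic level $I_k$ live in the pool of $n(\succ\alpha_{i_0(k)})<n/2^k$ keys, the max-group-size bound $|G_\gamma|\le 2n^{1-1/c}$ caps the in-group pair count on that pool, and the $O(\lg n)$ levels each contribute $O(n^{1-1/c})$. The paper proves the same intermediate bound by a different bookkeeping: it charges each pair $(x,y)\in G_\gamma^2$ individually with $1/n(\succ\beta)\le 1/n(\succeq\gamma)$, obtaining $\sum_\gamma\binom{|G_\gamma|}{2}/n(\succeq\gamma)\le n^{1-1/c}\sum_\gamma|G_\gamma|/n(\succeq\gamma)$, and then recognizes the last sum as a harmonic sum $\le H_n$. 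The two arguments are equivalent in strength — your dyadic bucketing of the denominators and the paper's harmonic sum are two standard ways to extract the same $\lg n$ factor — but yours is coarser per pair (you only use the level-wide lower bound on $n_{i+1}$ rather than the sharper $n(\succ\beta)\ge n(\succeq\gamma)$) while compensating with the global pair count per level, so nothing is lost. One cosmetic point: terms with $n(\succ\beta)=0$ have $C(\beta)=0$ and should be read as contributing nothing, which both arguments implicitly assume.
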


\begin{proof}
We will prove that $A = \sum_{\beta \succ \alpha_0}
\frac{C(\beta)}{n(\succ\! \beta)} \le n^{1-1/c} \cdot H_n$, where
$H_n$ is the Harmonic number. 

Consider all pairs $(x,y)$ from the same group $G_\gamma$, and order
them by $\beta = \max_\prec (x \Delta y)$. This is the time when the
pair gets counted in some $C(\beta)$ as a potential collision. The
contribution of the pair to the sum is $1 / n(\succ\! \beta)$, so this
contribution is maximized if $\beta$ immediately precedes $\gamma$ in
the order $\prec$. That is, the sum is maximized when $C(\beta) =
\binom{|G_{\beta^+}|}{2}$. We obtain $A \le \sum_\beta \frac{|G_\beta|^2}{2} /
n(\succeq\!\beta) \le n^{1-1/c} \cdot \sum_\beta |G_\beta| /
n(\succeq\!\beta)$.  In this sum, each key $x \in G_\beta$ contributes
$1 / n(\succeq\!\beta)$, which is bounded by one over the number of
keys following $x$. Thus $A \le H_n$.
\end{proof}

To achieve our original goal, bounding $\Pr[h(q) < h(S)]$, we proceed
as follows:
\begin{eqnarray*}
\Pr[ h(q) < h(S)]
&\le& \tfrac{1}{n^2} + 
   \int_0^{\ell/n} \Pr[ p < \min h(S) \mid h(q)=p] \mathrm{d}p \\
&\le& \tfrac{1}{n^2} + \Pr[ (\exists \alpha) G_\alpha \textrm{ not $d$-bounded}]
 + \int_0^{\ell/n} P(\alpha_0, p, n) \mathrm{d}p
\end{eqnarray*}
By Lemma~\ref{lem:manybins}, all groups are $d$-bounded with
probability $1 - \frac{1}{n^2}$. We also have
\[
\int_{0}^{\ell/n} (1-p)^n \mathrm{d}p = 
\left. \frac{-(1-p)^{n+1}}{n+1} \right|_{p=0}^{\ell/n}
\le \frac{1}{n+1}
\]
Thus: 
\[
\Pr[ h(q) < h(S)] \le \frac{O(1)}{n^2} +
\frac{1}{n+1} + \frac{1}{n/(2d) + 1} \cdot \frac{O(\lg^2 n)}{n^{1/c}}
= \frac{1}{n} \cdot \left( 1 + \frac{O(\lg^2 n)}{n^{1/c}} \right).
\]

\subsection{Proof of Lemma~\ref{lem:mess}}   \label{sec:full-mess}

Recall that we are fixing some choice of $h(\prec\!\alpha)$ and
bounding:
\[ A = \Pr \big[ p < \min \bigcup_{\beta \succeq \alpha} h(G_\beta)
  ~\land~ (\forall \alpha) G_\alpha~d\textrm{-bounded} \mid
  h(\prec\!\alpha) \big] 
\]
If for some $\beta$, $|R_\beta(\alpha)| < |G_\beta| / d$, it means not
all groups are $d$-bounded, so $A=0$. If all groups are $d$-bounded
and we finished fixing all position-characters, $A=1$. These form the
base cases of our induction.

The remainder of the proof is the inductive step. We first break the
probability into:
\[ 
A_1\cdot A_2 = 
\Pr \big[ p < \min h(G_\alpha) \mid h(\prec\!\alpha) \big]
\cdot \Pr \big[\bigcup_{\beta \succ \alpha} h(G_\beta)
  ~\land~ (\forall \alpha) G_\alpha~d\textrm{-bounded} 
\mid h(\prec\!\alpha), p > \min h(G_\alpha) \big]
\]

As $h(\alpha)$ is uniformly random, each representative in $R_\alpha$
has a probability of $p$ of landing below $p$. These events are
disjoint because $p$ is in the minimum bin, so $A_1 = 1 - p\cdot
|R_\alpha| \le (1-p)^{|R_\alpha|}$.

After using $R_\alpha$, we are left with $\hat{r} = r - |R_\alpha| =
\sum_{\beta \succ \alpha} |R_\beta(\alpha)|$ representatives. After
$h(\alpha)$ is chosen, some of the representative of $\hat{r}$ are
lost. Define the random variable $\Delta = \sum_{\beta \succ \alpha}
\big( |R_\beta (\alpha)| - |R_\beta(\alpha^+)| \big)$ to measure this
loss. 

Let $\Delta^{\max} \ge \hat{r} - \frac{n(\succ\!\alpha)}{d}$ be a
value to be determined. We only need to consider $\Delta \le
\Delta^{\max}$. Indeed, if more than $\Delta^{\max}$ representatives
are lost, we are left with less than $n(\succ\!\alpha) / d$
representatives, so some group is not $d$-bounded, and the probability
is zero. We can now bound $A_2$ by the induction hypothesis:
\[
A_2 \le \sum_{\delta = 0}^{\Delta^{\max}} \Pr[\Delta = \delta 
\mid h(\prec\!\alpha), p > \min h(G_\alpha) \big]
\cdot P(\alpha^+, p, \hat{r} - \delta)
\]
where we had $\displaystyle P(\alpha^+, p, \hat{r} - \delta)
= (1-p)^{\hat{r} - \delta} + (1-p)^{n(\succ\! \alpha) / (2d)} \cdot
     \sum_{\beta \succ \alpha} \frac{4C(\beta)\cdot (\ell/n)}
         {n(\succ\!\beta) / d}$.

Observe that the second term of $P(\alpha^+, p, \hat{r} - \delta)$
does not depend on $\delta$ so:
\[
A_2 \le A_3 + (1-p)^{n(\succ\! \alpha) / (2d)} \cdot
     \sum_{\beta \succ \alpha} \frac{4C(\beta)\cdot (\ell/n)}
         {n(\succ\!\beta) / d} \]
where $\displaystyle A_3 = 
\sum_{\delta = 0}^{\Delta^{\max}} \Pr[\Delta = \delta 
\mid h(\prec\!\alpha), p > \min h(G_\alpha) \big]
\cdot (1-p)^{\hat{r} - \delta}$.

It remains to bound $A_3$. We observe that $(1-p)^{\hat{r} - \delta}$
is convex in $\delta$, so its achieves the maximum value if all the
probability mass of $\Delta$ is on $0$ and $\Delta^{\max}$, subject to
preserving the mean.

\begin{observation}
We have:
$\E[\Delta \mid h(\prec\!\alpha), p > \min h(G_\alpha)] \le 2\cdot
  C(\alpha) \cdot \frac{\ell}{n}$.
\end{observation}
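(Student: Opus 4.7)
The plan is to bound $\Delta$ pathwise by the number of pairs in $C(\alpha)$ whose keys land in a common bin once $h(\alpha)$ is revealed, apply linearity of expectation to obtain an unconditional bound of $C(\alpha)\cdot(\ell/n)$, and then pass to the conditional expectation with a loss of at most a factor of two.

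For the pathwise bound, I would consider each group $G_\beta$ with $\beta\succ\alpha$ separately. Every drop $|R_\beta(\alpha)|-|R_\beta(\alpha^+)|$ is caused by the merger of two previously-distinct clusters of $G_\beta$, and every such merger is witnessed by at least one pair $(x,y)$ drawn from the two merging clusters that now share a bin. For any such witness, every character of $x\Delta y$ other than $\alpha$ was already fixed under $h(\prec\!\alpha)$ (otherwise the two keys would not yet have been assigned to separate clusters), so $\alpha=\max_\prec(x\Delta y)$, i.e.\ $(x,y)\in C(\alpha)$. Picking one distinct witness per merger across all $\beta\succ\alpha$ without double-counting gives $\Delta\le\sum_{(x,y)\in C(\alpha)}\mathbf{1}[x,y\text{ share a bin}]$.

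For each $(x,y)\in C(\alpha)$, the fact that $\alpha$ lies in exactly one of $x,y$ lets us write $h(x)\oplus h(y)=h(\alpha)\oplus c_{xy}$ with $c_{xy}$ determined by $h(\prec\!\alpha)$, so uniformity of $h(\alpha)$ makes the event ``$x$ and $y$ share a bin of width $\ell/n$'' occur with probability exactly $\ell/n$. Summing, $\E[\Delta\mid h(\prec\!\alpha)]\le C(\alpha)\cdot(\ell/n)$. To pass to the conditional expectation, let $F=\{p<\min h(G_\alpha)\}$; the representatives in $R_\alpha$ occupy distinct bins and each falls below $p$ with probability $p$, so $\Pr[\overline F\mid h(\prec\!\alpha)]\le p\cdot|R_\alpha|\le(\ell/n)\cdot 2n^{1-1/c}=O(\lg n/n^{1/c})\le 1/2$ for all sufficiently large $n$. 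Since $\Delta\ge 0$, the Markov-style inequality $\E[\Delta\mid F,h(\prec\!\alpha)]\le \E[\Delta\mid h(\prec\!\alpha)]/\Pr[F\mid h(\prec\!\alpha)]$ then yields the claimed $\le 2C(\alpha)\cdot(\ell/n)$.

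The main obstacle is the witness-counting in the first step: when several clusters within a single $G_\beta$ merge simultaneously under the single update $h(\alpha)$, one has to argue that distinct pairs in $C(\alpha)$ can be charged to different mergers, so that the aggregate bound does not double-count. A side remark on the wording: the conditioning event $\{p>\min h(G_\alpha)\}$ as literally printed is the low-probability complement, on which the Markov step above does not buy a factor of two; the reading that is consistent with the role of the observation in bounding $A_3$ inside Lemma~\ref{lem:mess}, and which the plan above establishes, is the complementary high-probability event $F=\{p<\min h(G_\alpha)\}$.
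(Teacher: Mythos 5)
Your proof follows the same route as the paper's: bound $\Delta$ pathwise by the number of newly colliding pairs from $C(\alpha)$, observe that each such pair collides with probability exactly $\ell/n$ under the uniform choice of $h(\alpha)$ regardless of $h(\prec\!\alpha)$, apply linearity, and absorb the conditioning at the cost of a factor two because the conditioning event has probability at least $1/2$. Your reading of the conditioning event as $\{p<\min h(G_\alpha)\}$ is indeed the intended one (the printed ``$p>\min h(G_\alpha)$'' and the inequality direction in the paper's ``$\Pr[p<\min h(G_\alpha)]<1/2$'' are typos), and your extra care with witness-counting across simultaneous cluster mergers is a correct refinement of the paper's terser statement that a lost representative is always caused by a colliding pair counted in $C(\alpha)$.
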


\begin{proof}
As discussed earlier, a representative disappears when we have a pair
$x,y\in R_\beta(\alpha)$ that lands in the same bin due to
$h(\alpha)$. This can only happen if $(x,y)$ is counted in
$C(\alpha)$, i.e.~$\alpha = \max_\prec (x \Delta y)$. If $h(\alpha)$
is uniform, such a pair $(x,y)$ collides with probability
$\frac{\ell}{n}$, regardless of $h(\prec\!\alpha)$. By linearity of
expectation $\E[\Delta \mid h(\prec\!\alpha)] \le C(\alpha) \cdot
\frac{\ell}{n}$.

However, we have to condition on the event $p > \min h(G_\alpha)$,
which makes $h(\alpha)$ non-uniform. Since $p < \frac{\ell}{n}$ and
$|G_\alpha| \le n^{1-1/c}$, we have $\Pr[p< \min h(G_\alpha)] <
1/2$. Therefore, conditioning on this event can at most double the
expectation of positive random variables.
\end{proof}

A bound on $A_3$ can be obtained by assuming $\Pr[\Delta =
  \Delta^{\max}] = \big( 2\cdot C(\alpha) \cdot \frac{\ell}{n} \big)
\big/ \Delta^{\max}$, and all the rest of the mass is on $\Delta = 0$.
This gives:
\[
 A_3 \le (1-p)^{\hat{r}}  + \frac{2\cdot C(\alpha) \cdot (\ell/n)}{\Delta^{\max}}
\cdot (1-p)^{\hat{r} - \Delta^{\max}}
\]
Remember that we promised to choose $\Delta^{\max} \ge \hat{r} -
\frac{n(\succ\!\alpha)}{d}$. We now fix $\Delta^{\max} = \hat{r} -
\frac{n(\succ\!\alpha)}{2d}$. We are guaranteed that $\hat{r} \ge
\frac{n(\succ\!\alpha)}{d}$, since otherwise some group is not
$d$-bounded. This means $\Delta^{\max} \ge
\frac{n(\succ\!\alpha)}{2d}$. We have obtained a bound on $A_3$:
\begin{eqnarray*}
 A_3 &\le& (1-p)^{\hat{r}}  + \frac{2\cdot C(\alpha) \cdot (\ell/n)}
   {n(\succ\!\alpha) / (2d)} \cdot (1-p)^{n(\succ\!\alpha) / (2d)} \\
\Longrightarrow\quad A_2 &\le&
(1-p)^{\hat{r}} + (1-p)^{n(\succ\! \alpha) / (2d)} \cdot
     \sum_{\beta \succeq \alpha} \frac{4C(\beta)\cdot (\ell/n)}
         {n(\succ\!\beta) / d} \\
\Longrightarrow\quad A &\le&
(1-p)^{|R_\alpha|} \cdot (1-p)^{r - |R_\alpha|} + (1-p)^{n(\succ\! \alpha) / (2d)} \cdot
     \sum_{\beta \succeq \alpha} \frac{4C(\beta)\cdot (\ell/n)}
         {n(\succ\!\beta) / d}
\end{eqnarray*}
This completes the proof of Lemma~\ref{lem:mess}, and the bound on
minwise independence.

\section{Fourth Moment Bounds}   \label{sec:moment}

Consider distributing a set $S$ of $n$ balls into $m$ bins truly
randomly.  For the sake of generality, let each element have a weight
of $w_i$. We designate a query ball $q \notin S$, and let $W$ be the
total weight of the elements landing in bin $F(h(q))$, where $F$ is an
arbitrary function. With $\mu = \E[W] = \frac{1}{m}\sum w_i$, we are
interested in the 4\th moment of the bin size: $\E[(W-\mu)^4]$.

Let $X_i$ be the indicator that ball $i \in S$ lands in bin $F(h(q))$,
and let $Y_i = X_i - \frac{1}{m}$. We can rewrite $W - \mu = \sum_i
Y_i w_i$, so:
\begin{equation}  \label{eq:terms}
\E[(W-\mu)^4] ~=~ \sum_{i,j,k,l \in S} w_i w_j w_k w_l\cdot \E[ Y_i Y_j
  Y_k Y_l ].
\end{equation}

The terms in which some element appears exactly once are zero. Indeed,
if $i \notin \{j,k,l\}$, then $\E[ Y_iY_j Y_k Y_l ] = \E[Y_i] \cdot
\E[Y_j Y_kY_l ]$, which is zero since $\E [ Y_i] = 0$.  Thus, the only
nonzero terms arise from:
\begin{itemize*}
\item four copies of one element ($i=j=k=l$), giving the term
  $\big( \frac{1}{m} \pm O(\frac{1}{m^2}) \big) w_i^4$.

\item two distinct elements $s\ne t$, each appearing twice. There are
  $\binom{4}{2} = 6$ terms for each $s,t$ pair, and each term is
  $O(\frac{1}{m^2}) w_s^2 w_t^2$.
\end{itemize*}

\noindent
This gives the standard 4\th moment bound:
\begin{equation}   \label{eq:ideal-moment}
\E\big[ (W-\mu)^4 \big] ~=~ 
\frac{1}{m} \sum_i w_i^4 + \frac{O(1)}{m^2} \big( \sum_i w_i^2 \big)^2.
\end{equation}

This bound holds even if balls are distributed by 5-independent
hashing: the balls in any 4-tuple hit the bin chosen by $h(q)$
independently at random. On the other hand, with 4-independent
hashing, this bound can fail quite badly~\cite{patrascu10kwise-lb}.

If the distribution of balls into bins is achieved by simple
tabulation, we will show a slightly weaker version of
\eqref{eq:ideal-moment}:
\begin{equation} \label{eq:moment-simple}
\E\big[ (W-\mu)^4 \big] ~=~ 
\frac{1}{m} \sum_i w_i^4 + O\left( \frac{1}{m^2} +  \frac{4^c}{m^3} \right) 
				\cdot \Big( \sum_i w_i^2 \Big)^2.
\end{equation}

In \S\ref{sec:fixed-bin}, we show how to analyze the 4\th moment of a
fixed bin (which requires 4-independence by standard techniques). Our
proof is a combinatorial reduction to Cauchy--Schwarz. In
\S\ref{sec:query-bin}, we let the bin depend on the hash code
$h(q)$. This requires 5-independence by standard techniques. To handle
tabulation hashing, \S\ref{sec:five-indep} shows a surprising result:
among any 5 keys, at least one hashes independently of the rest.

We note that the bound on the 4\th moment of a fixed bin has been
indendently discovered by \cite{braverman10kwise} in a different
context. However, that work is not concerned with a query-dependent
bin, which is the most surprising part of our proof.

\subsection{Fourth Moment of a Fixed Bin}   \label{sec:fixed-bin}

We now attempt to bound the terms of \eqref{eq:terms} in the case of
simple tabulation. Since simple tabulation is 3-independent
\cite{wegman81kwise}, any terms that involve only 3 distinct keys
(i.e.~$|\{i,j,k,l\}| \le 3$) have the same expected value as
established above. Thus, we can bound:
\[ 
\E[(W-\mu)^4] ~=~ \frac{1}{m} \sum_i w_i^4 + 
\frac{O(1)}{m^2} \big( \sum_i w_i^2 \big)^2
~+~ \sum_{i \ne j \ne k \ne l} w_i w_j w_k w_l \cdot \E[ Y_i Y_j Y_k Y_l ].
\]
Unlike the case of 4-independence, the contribution from distinct
$i,j,k,l$ will not be zero. We begin with the following simple bound
on each term:

\begin{claim}   \label{clm:bcube}
For distinct $i,j,k,l$, $\E[ Y_i Y_j Y_k Y_l ] ~=~ O(\frac{1}{m^3})$.
\end{claim}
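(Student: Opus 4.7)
The plan is a short direct computation using only that simple tabulation is $3$-independent (Wegman and Carter). Since $3$-independence completely determines the joint distribution of any three of the four hash codes $h(x_i), h(x_j), h(x_k), h(x_l)$, everything in the expansion of $\E[Y_i Y_j Y_k Y_l]$ except the $4$-fold expectation $\E[X_i X_j X_k X_l]$ is pinned down exactly, and that one remaining quantity can be bounded trivially because $X_l \le 1$.

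First I would write
\[
\E[Y_i Y_j Y_k Y_l] ~=~ \sum_{S \subseteq \{i,j,k,l\}} \left(-\tfrac{1}{m}\right)^{4-|S|} \E\!\left[\prod_{s \in S} X_s\right]
\]
and use $3$-independence to evaluate every $S$ with $|S| \le 3$: $\E[X_a] = 1/m$, $\E[X_a X_b] = 1/m^2$, $\E[X_a X_b X_c] = 1/m^3$. Summing the contributions for $|S| = 0,1,2,3$ with binomial multiplicities $1,4,6,4$ gives $(1-4+6-4)/m^4 = -1/m^4$, so
\[
\E[Y_i Y_j Y_k Y_l] ~=~ \E[X_i X_j X_k X_l] - \tfrac{1}{m^4}.
\]
Second, since $X_l \in \{0,1\}$, $X_i X_j X_k X_l \le X_i X_j X_k$, and $3$-independence gives $0 \le \E[X_i X_j X_k X_l] \le \E[X_i X_j X_k] = 1/m^3$. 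Hence $|\E[Y_i Y_j Y_k Y_l]| \le 1/m^3$, which is the claim.

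There is no real obstacle in the claim itself --- the ease is precisely what $3$-independence is doing for us. The actual difficulty will come in the next step of the moment analysis, where the crude per-tuple bound $O(1/m^3)$ must be replaced by something much sharper for most $4$-tuples; one needs to observe that $\E[X_i X_j X_k X_l] > 1/m^4$ only when the four hash codes satisfy the linear relation $h(x_i) \oplus h(x_j) \oplus h(x_k) \oplus h(x_l) = 0$, which forces a rigid combinatorial structure (every position-character appearing an even number of times across the four keys). Counting such structured tuples and bounding the weighted sum $\sum w_iw_jw_kw_l$ over them --- the combinatorial reduction to Cauchy--Schwarz mentioned in the introduction --- is where the real work for \req{eq:moment-simple} lies.
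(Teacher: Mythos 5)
Your proof is correct and rests on exactly the same idea as the paper's: $3$-independence pins down every term except the $4$-wise joint probability, which is bounded by the $3$-wise one. The paper organizes this as a case analysis over the outcomes of $X_i,X_j,X_k,X_l$ rather than your multilinear expansion, but the content is the same (your version even yields the slightly sharper exact identity $\E[Y_iY_jY_kY_l]=\E[X_iX_jX_kX_l]-\tfrac{1}{m^4}$).
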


\begin{proof}
We are looking at the expectation of $Z = (X_i - \frac{1}{m}) (X_j -
\frac{1}{m}) (X_k - \frac{1}{m}) (X_l- \frac{1}{m})$. Note that $Z$ is
only positive when an \emph{even} number of the four $X$'s are 1:
\begin{enumerate*}
\item the case $X_i = X_j = X_k = X_l = 1$ only happens with
  probability $\frac{1}{m^3}$ by 3-independence. The contribution to
  $Z$ is $(1 - \frac{1}{m})^4 < 1$.

\item the case of two 1's and two 0's happens with probability at most
  $\binom{4}{2} \frac{1}{m^2}$, and contributes $\frac{1}{m^2} (1 -
  \frac{1}{m})^2 < \frac{1}{m^2}$ to $Z$.

\item the case of $X_i = X_j = X_k = X_l = 0$ contributes
  $\frac{1}{m^4}$ to $Z$.
\end{enumerate*}
\noindent
Thus, the first case dominates and $\E[Z] = O(\frac{1}{m^3})$.
\end{proof}

If one of $\{i,j,k,l\}$ contains a unique position-character, its hash
code is independent of the other three. In this case, the term is
zero, as the independent key factors out of the expectation and $\E[
  Y_i] = 0$. We are left with analyzing 4-tuples with no unique
position-characters; let $A \subseteq S^4$ contain all such 4-tuples. Then:
\[
\sum_{i \ne j \ne k \ne l} w_i w_j w_k w_l\cdot \E[ Y_i Y_j Y_k Y_l ]
~=~ O\left( \tfrac{1}{m^3} \right) \cdot 
        \sum_{(i,j,k,l) \in A} w_i w_j w_k w_l.
\]

Imagine representing a tuple from $A$ as a $4 \times q$ matrix, with
every key represented in a row. There are four types of columns that
we may see: columns that contain a single character in all rows (type
1), and columns that contain two distinct characters, each appearing
in two rows (type $j\in \{2,3,4\}$ means that row $j$ has the same
character as row 1). According to this classification, there are $4^q$
possible matrix types.

\begin{claim}    \label{clm:onetype}
Fix a fixed matrix type, and let $B \subseteq A$ contain all tuples
conforming to this type. Then, $\sum_{(i,j,k,l) \in B} w_i w_j w_k w_l
\le \big(\sum_i w_i^2 \big)^2$.
\end{claim}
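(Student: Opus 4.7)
The plan is a double application of Cauchy--Schwarz after rewriting the inner sum as the trace of a matrix square.

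For a fixed matrix type $T$, let $P_r$ denote the set of positions where $T$ has a column of type $r\in\{1,2,3,4\}$. Every tuple in $B$ can be parametrized by character strings $(a_1,a_2,a_3,a_4,b_2,b_3,b_4)$ so that
\[
\text{row }1 = (a_1,a_2,a_3,a_4),\ \text{row }2 = (a_1,a_2,b_3,b_4),\ \text{row }3 = (a_1,b_2,a_3,b_4),\ \text{row }4 = (a_1,b_2,b_3,a_4),
\]
with characters listed by the positions in $P_1,P_2,P_3,P_4$ (some $P_r$ may be empty). In particular, rows $1$ and $2$ share the characters on $P_1\cup P_2$, and so do rows $3$ and $4$, corresponding to the pairing $(1,2)(3,4)$.

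My first move is to peel off the outer parameters $(a_1,a_2,b_2)$---the characters shared within each half of this pairing---and analyze the inner sum over $(a_3,a_4,b_3,b_4)$. For fixed $(a_1,a_2,b_2)$, define two weight matrices $A,B\colon\Sigma^{|P_3|}\times\Sigma^{|P_4|}\to\mathbb R$ by $A(x,y) = w_{(a_1,a_2,x,y)}$ and $B(x,y) = w_{(a_1,b_2,x,y)}$, setting the weight to zero if the key is absent from $S$. Then $w_i w_j w_k w_l = A(a_3,a_4)A(b_3,b_4)B(a_3,b_4)B(b_3,a_4)$, and performing the $a_4$ and $b_4$ sums first yields
\[
\sum_{a_3,a_4,b_3,b_4} w_i w_j w_k w_l \;=\; \sum_{x_1,x_2}(AB^T)_{x_1,x_2}(AB^T)_{x_2,x_1} \;=\; \operatorname{tr}\!\bigl((AB^T)^2\bigr).
\]

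Now Cauchy--Schwarz finishes the job in two steps: the elementary bound $\operatorname{tr}(M^2)=\sum_{i,j}M_{ij}M_{ji}\le\|M\|_F^2$, together with the Frobenius submultiplicativity $\|AB^T\|_F\le\|A\|_F\|B\|_F$, bounds the inner sum by $\|A\|_F^2\|B\|_F^2 = g(a_1,a_2)\,g(a_1,b_2)$, where $g(a_1,a_2):=\sum_{x,y} A(x,y)^2$ is the total squared weight of keys whose $P_1\cup P_2$-characters equal $(a_1,a_2)$. Summing over $(a_1,a_2,b_2)$ and setting $h(a_1)=\sum_{a_2}g(a_1,a_2)$, the outer sum becomes $\sum_{a_1} h(a_1)^2$, which is at most $\bigl(\sum_{a_1} h(a_1)\bigr)^2=\bigl(\sum_i w_i^2\bigr)^2$ by nonnegativity.

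The main obstacle is spotting the trace-of-a-square structure that emerges from the pairing $(1,2)(3,4)$; once that is identified, the argument works uniformly across all $4^c$ matrix types, regardless of the sizes $|P_r|$. Dropping the implicit inequality constraints from $B$ (for example $a_r\ne b_r$ on $P_r$-positions for $r\in\{2,3,4\}$) only enlarges the sum since weights are nonnegative, so this relaxation is harmless.
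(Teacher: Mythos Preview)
Your proof is correct. Both your argument and the paper's reduce to Cauchy--Schwarz, but the packaging differs enough to be worth contrasting.

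The paper first partitions $S$ by the type-1 projection, then (after a WLOG that a type-2 column exists) further partitions each piece by the type-2 projection into blocks $T_a$. The combinatorial heart of their proof is that for fixed $a\ne b$ and any $(i,j)\in T_a^2$, there is \emph{at most one} pair $(k,l)\in T_b^2$ completing a conforming tuple; a single Cauchy--Schwarz on the bijection $(i,j)\leftrightarrow(k,l)$ then gives $\sum w_iw_jw_kw_l\le(\sum_{T_a}w_i^2)(\sum_{T_b}w_k^2)$, and summing over $a,b$ finishes. Your route instead fixes $(a_1,a_2,b_2)$ and recognizes the remaining sum over $(a_3,a_4,b_3,b_4)$ as $\operatorname{tr}((AB^T)^2)$, which you bound via the pair of matrix inequalities $\operatorname{tr}(M^2)\le\|M\|_F^2$ and $\|AB^T\|_F\le\|A\|_F\|B\|_F$. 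Both routes land on the same intermediate bound $g(a_1,a_2)g(a_1,b_2)$, so the outer aggregation is essentially identical.

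Your version has the advantage of being uniform across all $4^c$ matrix types with no WLOG step and no separate treatment of small $S_t$; the trace identity simply degenerates gracefully when some $P_r$ is empty. One small point: the paper's bijection-based Cauchy--Schwarz bounds the \emph{restricted} sum directly and works for arbitrary real weights, whereas your relaxation (dropping the $a_r\ne b_r$ constraints before bounding) genuinely needs nonnegative weights, since the dropped terms like $A(a_3,a_4)A(a_3,b_4)B(a_3,b_4)B(a_3,a_4)$ are not squares. You flagged this assumption, and in the paper's applications it is the natural one, so this is not a gap.
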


\begin{proof}
We first group keys according to their projection on the type-1
characters. We obtain a partition of the keys $S = S_1 \cup S_2
\cup \cdots$ such that $S_t$ contains keys that are identical in
the type-1 coordinates. Tuples that conform to the fixed matrix type,
$(i,j,k,l) \in B$, must consist of four keys from the same set,
i.e.~$i,j,k,l \in S_t$. Below, we analyze each $S_t$ separately and
bound the tuples from $(S_t)^4$ by $\big( \sum_{i\in S_t} w_i^2
\big)^2$. This implies the lemma by convexity, as $\sum_t \left(
\sum_{i\in S_t} w_i^2 \right)^2 ~\le~ \left( \sum_i w_i^2 \right)^2$.

For the remainder, fix some $S_t$. If $|S_t| < 4$, there is nothing to
prove. Otherwise, there must exist at least one character of type
different from 1, differentiating the keys. By permuting the set
$\{i,j,k,l\}$, we may assume a type-2 character exists.  Group
keys according to their projection on all type-2 characters. We
obtain a partition of the keys $S_t = T_1 \cup T_2 \cup \cdots$
such that $T_a$ contains keys that are identical in the type-2
coordinates.

A type-conforming tuple $(i,j,k,l) \in B$ must satisfy $i,j \in T_a$
and $k,l \in T_b$ for $a \ne b$. We claim a stronger property: for any
$i,j \in T_a$ and every $b \ne a$, there exists at most one pair $k,l
\in T_b$ completing a valid tuple $(i,j,k,l) \in B$. Indeed, for
type-1 coordinates, $k$ and $l$ must be identical to $i$ on that
coordinate. For type 3 and 4 coordinates, $k$ and $l$ must reuse the
characters from $i$ and $j$ ($k \gets i, l \gets j$ for type 3; $k
\gets j, l \gets i$ for type 4).

Let $X \subset (T_a)^2$ contain the pairs $i,j \in T_a$ which can be
completed by one pair $k,l \in T_b$. Let $Y \subset (T_b)^2$ contain
the pairs $k,l \in T_b$ which can be completed by $i,j\in T_a$. There
is a bijection between $X$ and $Y$; let it be $f : X \mapsto Y$.
We can now apply the Cauchy-Schwarz inequality:
\begin{eqnarray*}
\sum_{(i,j,k,l) \in B \,\cap\, (T_a \times T_b)} w_i w_j w_k w_l 
&=& \sum_{(i,j) \in X,~ (k,l) = f(i,j)} (w_i w_j)\cdot (w_k w_l) \\
&\le& \sqrt{ \left( \sum_{(i,j) \in X} (w_i w_j)^2 \right) 
             \left( \sum_{(k,l) \in Y} (w_k w_l)^2 \right) }
\end{eqnarray*}
But $\sum_{(i,j) \in X} w_i^2 w_j^2 \le \big( \sum_{i\in T_a} w_i^2
\big)^2$. Thus, the equation is further bounded by $\big( \sum_{i\in
  T_a} w_i^2 \big) \big( \sum_{k\in T_b} w_k^2 \big)$.

Summing up over all $T_a$ and $T_b$, we obtain:
\[
\sum_{(i,j,k,l) \in B \cap (S_t)^4} w_i w_j w_k w_l
~\le~ \sum_{a,b} \left( \sum_{i\in T_a} w_i^2 \right) 
                \left( \sum_{k\in T_b} w_k^2 \right)
~\le~ \left( \sum_{i\in S_t} w_i^2 \right)^2
\]
This completes the proof of the claim.
\end{proof}

The bound of Claim~\ref{clm:onetype} is multiplied by $4^q$, the
number of matrix types. We have thus shown \eqref{eq:moment-simple}.

\subsection{Fourth Moment of a Query-Dependent Bin}  \label{sec:query-bin}

We now aim to bound the 4\th moment of a bin chosen as a function $F$
of $h(q)$, where $q$ is a designated query ball. This requires dealing
with 5 keys ($i,j,k,l$ and the query $q$). Even though simple
tabulation is only 3-independent, we will prove the following
intriguing independence guarantee in \S\ref{sec:five-indep}:

\begin{theorem}\label{thm:5th-indep} 
With simple tabulation, in any fixed set of 5 distinct keys, there is
a key whose hash is independent of the other 4 hash codes.
\end{theorem}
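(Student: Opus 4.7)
The plan is to translate the statement into pure linear algebra over $\mathbb{F}_2$ and then analyze the possible dependencies combinatorially. I will view each key as its characteristic vector in $\mathbb{F}_2^{[c]\times\Sigma}$: a $0/1$ vector with exactly one $1$ per position. Since every table entry $T_j[\alpha]$ is an independent uniform element of $\mathbb{F}_2^s$ and $h(v)=\bigoplus_{\alpha\in v}T_j[\alpha]$ is an $\mathbb{F}_2$-linear functional of these entries, the joint distribution of $(h(v_1),\dots,h(v_5))$ is uniform on the image of the $5\times(c|\Sigma|)$ matrix $M$ whose rows are those characteristic vectors. Applied bit by bit, this yields: $h(v_i)$ is independent of the other four hashes iff $v_i$ is not in the $\mathbb{F}_2$-span of $\{v_j:j\ne i\}$. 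So the goal reduces to showing that, among any $5$ distinct keys, at least one sits outside the span of the other four.

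Next I will classify the possible $\mathbb{F}_2$-dependencies. A dependency is a subset $S\subseteq[5]$ with $\sum_{i\in S}v_i=0$ over $\mathbb{F}_2$. For each position $j\in[c]$, the multiset $\{v_i(j):i\in S\}$ has $|S|$ entries and each distinct character must appear an even number of times (else its contribution fails to cancel), so $|S|$ must be even. A support of size $2$ would force two of the keys to coincide, which is forbidden. Since $6,8,\dots>5$, every nonzero dependency therefore has support of size exactly $4$.

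The heart of the argument, and the step I expect to be the main obstacle, is to bound the dimension of the dependency subspace $K\subseteq\mathbb{F}_2^5$. Suppose for contradiction that $\dim K\ge 2$, and pick two linearly independent dependencies $u,w\in K$ with supports $U,W\subseteq[5]$ of size $4$. Then $u+w\in K$ is a nonzero dependency whose support is $U\triangle W$. But two $4$-subsets of a $5$-element set share at least three elements, so $|U\triangle W|\le 2$, contradicting the classification above (size $0$ would force $u=w$, and size $2$ is forbidden). Hence $\dim K\le 1$. If $\dim K=0$ the five keys are linearly independent and any of them works; if $\dim K=1$, its unique nonzero dependency has support of size $4$, so exactly one index $i^\star\in[5]$ is omitted, and $v_{i^\star}$ participates in no nontrivial relation: it lies outside the span of the other four, and its hash is therefore independent of theirs.
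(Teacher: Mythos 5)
Your proof is correct, and it takes a genuinely different route from the paper's. You reduce the probabilistic statement to pure $\mathbb{F}_2$-linear algebra: after observing that the joint hash distribution is uniform on the image of the key--character incidence matrix (bit by bit), independence of one hash from the other four becomes exactly the statement that the corresponding characteristic vector lies outside the span of the other four. You then classify the possible dependencies (even support by the one-character-per-position structure, support $2$ excluded by distinctness, support $\geq 6$ excluded by cardinality, hence support exactly $4$) and bound the dependency space to dimension at most $1$ via the observation that two $4$-subsets of $[5]$ have symmetric difference of size at most $2$. The paper instead argues combinatorially on the $5\times c$ character matrix: it peels keys with unique position-characters, reduces to columns with a $2$--$3$ split of two values, examines Hamming distances between columns, and finishes with an explicit $5\times 3$ matrix in which row $1$ is the xor of rows $2$--$4$ and row $5$ peels off. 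Both arguments ultimately rest on the $\mathbb{F}_2$-linearity of simple tabulation, but yours is more conceptual and yields slightly more: the dependency space among any $5$ distinct keys has dimension at most one, and when it is nontrivial you identify exactly which key is the independent one (the one omitted from the unique relation). The paper's case analysis is more elementary in that it never needs the general equivalence between stochastic independence and linear independence, only the easy peeling direction, but it is correspondingly longer and less reusable.
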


As a side note, we observe that this theorem essentially implies that
{\em any\/} 4-independent tabulation based scheme is also
5-independent. In particular, this immediately shows the
5-independence of the scheme from \cite{thorup04kwise} (which augments
simple tabulation with some derived characters). This fact was already
known~\cite{thorup10kwise}, albeit with a more complicated proof.

In the remainder of this section, we use Theorem~\ref{thm:5th-indep}
to derive the 4\th moment bound \eqref{eq:moment-simple}. As before,
we want to bound terms $w_i w_j w_k w_l \cdot \E[ Y_i Y_j Y_k Y_l]$
for all possible configurations of $(i,j,k,l)$. Remember that $q\notin
S$, so $q \notin \{i,j,k,l\}$. These terms can fall in one of the
following cases:
\begin{itemize}
\item All keys are distinct, and $q$ hashes independently. Then, the
  contribution of $i,j,k,l$ to bin $F(h(q))$ bin is the same as to any
  fixed bin.

\item All keys are distinct, and $q$ is dependent. Then, at least one
  of $\{i,j,k,l\}$ must be independent of the rest and $q$; say it is
  $i$. But then we can factor $i$ out of the product: $\E[ Y_i Y_j Y_k
    Y_l ] = \E[Y_i] \cdot \E[ Y_j Y_k Y_l ]$. The term is thus zero,
  since $\E[Y_i] = 0$.

\item Three distinct keys, $\big| \{i,j,k,l\} \big| = 3$. This case is
  analyzed below.

\item One or two distinct keys: $\big| \{i,j,k,l\} \big| \le 2$. By
  3-independence of simple tabulation, all hash codes are independent,
  so the contribution of this term is the same as in the case of a
  fixed bin.
\end{itemize}

To summarize, the 4\th moment of bin $F(h(q))$ is the same as the 4\th
moment of a fixed bin, plus an additional term due to the case $\big|
\{i,j,k,l\} \big| = 3$. The remaining challenge is to understand terms
of the form $w_i^2 w_j w_k \E\big[ Y_i^2 Y_j Y_k \big]$. We first prove
the following, which is similar to Claim~\ref{clm:bcube}:

\begin{claim}   \label{clm:new-bcube}
For distinct $i,j,k$, $\E[ Y_i^2 Y_j Y_k ] ~=~ O(\frac{1}{m^2})$.
\end{claim}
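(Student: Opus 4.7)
The plan is to expand $Y_i^2$ using the indicator identity $X_i^2=X_i$, and then whittle the expression down to quantities controlled by 3-independence of simple tabulation. Concretely, $Y_i^2=(X_i-1/m)^2=X_i(1-2/m)+1/m^2$, so
\[
\E[Y_i^2 Y_j Y_k] ~=~ (1-2/m)\,\E[X_i Y_j Y_k] + \tfrac{1}{m^2}\,\E[Y_j Y_k].
\]
Because the three keys $q,j,k$ are distinct, 3-independence applied to $\{q,j,k\}$ gives $\E[X_j X_k] = \sum_b \Pr[h(q)=b]\Pr[h(j)=F(b)]\Pr[h(k)=F(b)] = 1/m^2$ and similarly $\E[X_j]=\E[X_k]=1/m$, so the covariance vanishes: $\E[Y_j Y_k]=0$. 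Thus the second summand disappears and it remains to bound $\E[X_i Y_j Y_k]$.

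Next, I would expand
\[
\E[X_i Y_j Y_k] ~=~ \E[X_i X_j X_k] - \tfrac{1}{m}\E[X_i X_j] - \tfrac{1}{m}\E[X_i X_k] + \tfrac{1}{m^2}\E[X_i].
\]
Each of the three trailing moments involves only three of the four keys $\{q,i,j,k\}$, so 3-independence of $\{q,i,j\}$, $\{q,i,k\}$ and $\{q,i\}$ respectively gives $\E[X_iX_j]=\E[X_iX_k]=1/m^2$ and $\E[X_i]=1/m$. These collapse to $\E[X_i Y_j Y_k] = \E[X_i X_j X_k] - 1/m^3$.

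The main obstacle is controlling $\E[X_iX_jX_k]$, because this is genuinely a joint moment on four keys $\{q,i,j,k\}$ and simple tabulation is only 3-independent. The escape is that we only need an upper bound, not an exact value: since $X_k \in \{0,1\}$ we have the trivial inequality $\E[X_iX_jX_k] \le \E[X_iX_j] = 1/m^2$, while $\E[X_iX_jX_k]\ge 0$. Hence $|\E[X_iX_jX_k]-1/m^3| \le 1/m^2 - 1/m^3$, and putting the pieces together yields $|\E[Y_i^2 Y_j Y_k]| \le (1-2/m)(1/m^2 - 1/m^3) = O(1/m^2)$, as claimed. No appeal to Theorem~\ref{thm:5th-indep} is needed here; that stronger tool will only be invoked for the terms involving four distinct keys $i,j,k,l$ together with $q$.
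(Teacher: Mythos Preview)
Your proof is correct and takes a genuinely different route from the paper's. The paper argues by a direct case analysis on the signs of $Y_j$ and $Y_k$: since $Y_i^2\le 1$, one has $\E[Y_i^2 Y_j Y_k]\le \Pr[Y_j=Y_k=1-\tfrac1m]\cdot 1 + 1\cdot \tfrac{1}{m^2}$, and the first probability is $O(1/m^2)$ by 3-independence of $\{q,j,k\}$. Your argument instead uses the indicator identity $X_i^2=X_i$ to reduce $\E[Y_i^2Y_jY_k]$ algebraically to low-order moments that 3-independence handles exactly, plus the single ``hard'' four-key moment $\E[X_iX_jX_k]$, which you dispose of by the crude but sufficient bound $0\le \E[X_iX_jX_k]\le \E[X_iX_j]=1/m^2$.

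Both approaches are elementary and rely only on 3-independence. Your expansion has the minor advantage of yielding a two-sided bound $|\E[Y_i^2Y_jY_k]|\le (1-\tfrac2m)(\tfrac1{m^2}-\tfrac1{m^3})$ with an explicit constant, whereas the paper's sign-based argument as written gives only the upper bound (though the lower bound follows symmetrically). The paper's approach is perhaps more visual; yours is more systematic and would generalize more mechanically to similar moment computations. Your closing remark that Theorem~\ref{thm:5th-indep} is not needed here is accurate and worth keeping: the five-key independence result is indeed reserved for the case $|\{i,j,k,l\}|=4$.
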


\begin{proof}
By 3-independence of simple tabulation, $Y_i$ and $Y_j$ are
independent (these involve looking at the hashes of $i,j,q$). For an
upper bound, we can ignore all outcomes $Y_i^2 Y_j Y_k < 0$, i.e.~when
$Y_j$ and $Y_k$ have different signs. On the one hand, $Y_j = Y_k =
1-\frac{1}{m}$ with probability $O(\frac{1}{m^2})$. On the other hand,
if $Y_j = Y_k = -\frac{1}{m}$, the contribution to the expectation is
$O(\frac{1}{m^2})$.
\end{proof}

Assume $w_j \ge w_k$ by symmetry. If $k$ hashes independently of
$\{i,j,q\}$, the term is zero, since $\E[Y_k] = 0$ can be factored
out. Otherwise, the term contributes $O(w_i^2 w_j^2 / m^2)$ to the
sum. 

\begin{claim}
For any distinct $i,j,q$, there is a unique key $k$ such that $h(k)$
depends on $h(i), h(j), h(k)$.
\end{claim}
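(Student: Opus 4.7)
The plan is to reduce the claim to a question in linear algebra over $\mathbb{F}_2$ and then verify an elementary parity condition position by position. Identify each key $x \in \Sigma^c$ with the set $\{(1,x_1),\dots,(c,x_c)\}$ of position-characters, equivalently with its characteristic vector $\chi_x \in \mathbb{F}_2^{[c]\times\Sigma}$. Since the table entries $h(\alpha)$ are independent uniform bit-strings and $h(x)=\bigoplus_{\alpha\in x}h(\alpha)$, the hash function is $\mathbb{F}_2$-linear in the randomness. Consequently the joint distribution of any collection of hashes is determined entirely by the $\mathbb{F}_2$-linear dependencies among the corresponding characteristic vectors; in particular, $h(k)$ is independent of the triple $(h(i),h(j),h(q))$ iff $\chi_k \notin \mathrm{span}_{\mathbb{F}_2}\{\chi_i,\chi_j,\chi_q\}$.

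With $k\notin\{i,j,q\}$, dependence therefore forces $\chi_k$ to equal one of the four nontrivial XOR combinations $\chi_i+\chi_j$, $\chi_i+\chi_q$, $\chi_j+\chi_q$, or $\chi_i+\chi_j+\chi_q$; equivalently, $k$ is the symmetric difference of two or three of $\{i,j,q\}$ viewed as sets of position-characters. The next step is to rule out which of these symmetric differences can actually be a valid key---a valid key must contain exactly one position-character per position. For any two distinct keys $x,y$ and any position $p$, either $x_p=y_p$ (so $(p,x_p)$ appears twice and cancels, contributing $0$ characters at $p$) or $x_p\ne y_p$ (contributing $2$ characters at $p$); hence no pairwise symmetric difference among $\{i,j,q\}$ is a valid key.

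For the triple sum $\chi_i+\chi_j+\chi_q$, at position $p$ the surviving characters are precisely those appearing with odd multiplicity in the multiset $\{i_p,j_p,q_p\}$: exactly one character survives when at least two of the three values coincide, and three characters survive when they are pairwise distinct. Hence $i\oplus j\oplus q$ (interpreted via characteristic vectors) is a valid key exactly when no position has three pairwise distinct values among $i,j,q$, and in that case the key it defines is uniquely determined. Combining the two observations, the only possible $k\notin\{i,j,q\}$ whose hash depends on $(h(i),h(j),h(q))$ is this single key $i\oplus j\oplus q$, establishing uniqueness. The only real subtlety is setting up the linear-algebra reduction cleanly; the per-position parity check is then routine.
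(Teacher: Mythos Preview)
Your proof is correct and reaches the same conclusion as the paper---that the unique candidate is the key $k$ with $k_p$ equal to the ``odd one out'' among $\{i_p,j_p,q_p\}$ at each position---but the route differs. The paper argues combinatorially via peeling: if any key in $\{i,j,k,q\}$ has a unique position-character then that key is independent of the rest (and the remaining three are independent by 3-independence), so for $h(k)$ to depend on the others, no position may carry three distinct values among $i,j,q$, and at each position $k$ is forced to take the character appearing with odd multiplicity. You instead set up the $\mathbb{F}_2$-linear-algebra picture directly, identify dependence with $\chi_k\in\mathrm{span}\{\chi_i,\chi_j,\chi_q\}$, eliminate the pairwise sums by a parity count, and are left with $\chi_k=\chi_i+\chi_j+\chi_q$. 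Your framing is arguably cleaner and self-contained (it does not invoke 3-independence as a separate fact, and it makes explicit exactly when such a $k$ exists rather than merely bounding the count by one), while the paper's phrasing dovetails with the peeling machinery used elsewhere in the paper. Substantively the arguments coincide once they reach the per-position parity check.
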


\begin{proof}
We claim that if any of $\{i,j,k,q\}$ has a unique position-character,
all keys are independent. Indeed, the key with a unique
position-character is independent of the rest, which are independent
among themselves by 3-independence.

Thus, any set $\{i,j,q\}$ that allows for a dependent $k$ cannot have
3 distinct position-characters on one position. In any position where
$i,j,$ and $q$ conincide, $k$ must also share that
position-character. If $i,j,$ and $q$ contain two distinct distinct
characters on some positon, $k$ must contain the one that appears
once. This determines $k$.
\end{proof}

For any $i$ and $j$, we see exactly one set $\{i,j,k\}$ that leads to
bad tuples. By an infinitesimal perturbation of the weights, each such
set leads to $\binom{4}{2} = 6$ tuples: we have to choose two
positions for $i$, and then $j$ is the remaining key with larger
weight. Thus, the total contribution of all terms $(i,j,k,l)$ with 3
distinct keys is $O\big(\sum_{i,j} \frac{w_i^2 w_j^2}{m^2} \big) =
O(\frac{1}{m^2}) (\sum_i w_i^2)^2$. This completes the proof of
\eqref{eq:moment-simple}.

\subsection{Independence Among Five Keys}   \label{sec:five-indep}

The section is dedicated to proving Theorem~\ref{thm:5th-indep}. We
first observe the following immediate fact:

\begin{fact}\label{fact:restrict} 
If, restricting to a subset of the characters (matrix columns), a key
$x\in X$ hashes independently from $X \setminus \{x\}$, then it also
hashes independently when considering all characters.
\end{fact}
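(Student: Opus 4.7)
The plan is to exploit the $\oplus$-linearity of simple tabulation: $h$ splits cleanly along any partition of the character positions. Let $C \subseteq [c]$ be the subset of columns on which the restricted independence already holds, and let $\bar C = [c] \setminus C$. Define $h_C(z) = \bigoplus_{i \in C} T_i[z_i]$ and $h_{\bar C}(z) = \bigoplus_{i \in \bar C} T_i[z_i]$, so that $h(z) = h_C(z) \oplus h_{\bar C}(z)$ for every key $z$. Because $h_C$ and $h_{\bar C}$ are computed from disjoint sets of tables, they are independent as random functions.

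The main step is to condition on an arbitrary realization of $h_{\bar C}$, producing constants $c_z := h_{\bar C}(z)$ for each $z \in X$. Under this conditioning, $h(z) = h_C(z) \oplus c_z$ is a deterministic bijective image of $h_C(z)$, so the assumed independence of $h_C(x)$ from the tuple $(h_C(y))_{y \in X\setminus\{x\}}$ upgrades to the conditional independence of $h(x)$ from $(h(y))_{y \in X\setminus\{x\}}$ given $h_{\bar C}$. Moreover the conditional marginal of $h(x)$ is uniform on $[m]$: $h_C(x)$ is a nontrivial xor of independent uniform table entries, hence uniform, and a shift by the fixed $c_x$ preserves uniformity. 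Applying the tower rule,
\begin{align*}
\Pr[h(x)=a,\ h(y_j)=b_j\ \forall j]
&= \E\bigl[\Pr[h(x)=a \mid h_{\bar C}]\cdot \Pr[h(y_j)=b_j\ \forall j \mid h_{\bar C}]\bigr] \\
&= \tfrac{1}{m}\cdot \Pr[h(y_j)=b_j\ \forall j]
 = \Pr[h(x)=a]\cdot \Pr[h(y_j)=b_j\ \forall j],
\end{align*}
which is the desired unconditional independence of $h(x)$ from $(h(y))_{y \in X\setminus\{x\}}$.

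The only subtle point is that conditional independence by itself need not imply unconditional independence; one also needs the conditional marginal of $h(x)$ given $h_{\bar C}$ to be constant in the conditioning, and this is exactly what the uniformity of $h_C(x)$ supplies. Everything else is just the additive decomposition $h = h_C \oplus h_{\bar C}$ into independent components, so beyond making this structure explicit I do not anticipate any real obstacle.
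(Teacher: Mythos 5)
Your proof is correct and is precisely the argument the paper has in mind: the paper states this fact without proof as ``immediate,'' and the intended justification is exactly your decomposition $h = h_C \oplus h_{\bar C}$ into independent components, conditioning on the complementary tables so that each $h(z)$ becomes a fixed shift of $h_C(z)$. You also correctly identify and handle the one genuine subtlety (that conditional independence lifts to unconditional independence only because the conditional marginal of $h(x)$ is uniform, hence constant in the conditioning), so nothing is missing.
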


If some key contains a unique character, we are done by
peeling. Otherwise, each column contains either a single value in all
five rows, or two distinct values: one appearing in two rows, and one
in three rows. By Fact~\ref{fact:restrict}, we may ignore the columns
containing a single value. For the columns containing two values,
relabel the value appearing three times with 0, and the one appearing
twice with 1. 
By Fact~\ref{fact:restrict} again, we may discard any duplicate
column, leaving at most $\binom{5}{2}$ distinct columns.

Since the columns have weight 2, the Hamming distance between two
columns is either 2 or 4.

\begin{lemma}\label{lem:hamming4} 
If two columns have Hamming distance 4, one hash value is independent.
\end{lemma}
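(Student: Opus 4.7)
The plan is to prove independence already at the level of the two columns themselves, then lift to all columns via Fact~\ref{fact:restrict}. So I restrict attention to the two columns $A,B$ of Hamming distance 4; after the earlier normalizations, each column holds exactly three 0's and two 1's.

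First, I would pin down the combinatorial structure of such a pair. Since the columns agree in exactly one row, a short case analysis on the value in the agreeing row shows that agreement at value 1 is impossible: it would force $A$'s three remaining 0's in the other four rows to lie where $B$ has 1, but $B$ has only one 1 left. Hence the two columns agree at value 0, and in the other four rows the patterns are exact complements, with two rows of the form $(A,B)=(0,1)$ and two of the form $(1,0)$. Reindexing the five rows, the pair of columns reads $(0,0),(0,1),(0,1),(1,0),(1,0)$.

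Next, I would show that restricted to $\{A,B\}$, the hash of the unique ``agreeing'' key $x_1$ is independent of the other four restricted hashes. The four table entries $T_A[0],T_A[1],T_B[0],T_B[1]$ are independent and uniform, and the five restricted hashes are $T_A[0]\oplus T_B[0]$, $T_A[0]\oplus T_B[1]$ (twice), and $T_A[1]\oplus T_B[0]$ (twice). The other four keys' restricted hashes jointly carry only the information in $s=T_A[0]\oplus T_B[1]$ and $t=T_A[1]\oplus T_B[0]$. Conditioning on any values of $(s,t)$, the pair $(T_A[0],T_A[1])$ remains uniformly random, so
\[
h(x_1)\;=\;T_A[0]\oplus T_B[0]\;=\;T_A[0]\oplus T_A[1]\oplus t
\]
is uniformly distributed and independent of $(s,t)$. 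Applying Fact~\ref{fact:restrict} then promotes this to full independence of $h(x_1)$ from the other four complete hashes.

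The only real concern is executing the first step cleanly, i.e.\ ruling out the ``agreement at 1'' case and extracting the complementary pattern; once that is in hand, the linear-algebra verification is immediate from the independence of the four relevant table entries. I do not expect any hidden subtleties.
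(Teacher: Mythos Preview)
Your proof is correct and follows essentially the same route as the paper: restrict to the two columns via Fact~\ref{fact:restrict}, pin down the unique agreeing row, and show its restricted hash is independent of the other four. The only difference is that the paper observes the five restricted keys take only three distinct values and then invokes 3-independence of simple tabulation directly, whereas you verify independence by an explicit conditional computation on $(s,t)$; the two arguments are equivalent.
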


\begin{proof}
By Fact~\ref{fact:restrict}, we ignore all other columns.  Up to
reordering of the rows, the matrix is:
{\tiny\[\left[\begin{array}{cc}
    0 & 1 \\
    0 & 1 \\
    1 & 0 \\
    1 & 0 \\
    1 & 1 \\
    \end{array}\right].\]}
By 3-independence of character hashing, keys 1, 3, and 5 are
independent. But keys 2 and 4 are identical to keys 1 and 3. Thus, key
5 is independent from the rest.
\end{proof}

We are left with the case where all column pairs have Hamming distance
2. By reordering of the rows, the two columns look like the matrix in
(a) below. Then, there exist only two column vectors that are at
distance two from both of the columns in (a):

{\tiny\[\textrm{\normalsize (a)} 
\left[\begin{array}{ccc}
    0 & 0\\
    0 & 1\\
    1 & 0\\
    1 & 1\\
    1 & 1\\
    \end{array}\right]
\textrm{\normalsize \qquad (b)}
\left[\begin{array}{ccc}
    0\\
    1\\
    1\\
    0\\
    1\\
\end{array}\right]
\textrm{\normalsize \qquad (c)}
\left[\begin{array}{ccc}
    1\\
    0\\
    0\\
    1\\
    1\\
\end{array}\right]
\]}

If the matrix does not contain column (b), then keys 4 and 5 are
identical, a contradiction. Thus, the matrix must contain columns (a)
and (b), with (c) being optional. If (c) appears, discard it by
Fact~\ref{fact:restrict}. We are left with the matrix:
{\tiny\[
\left[\begin{array}{ccc}
    0 & 0 & 0\\
    0 & 1 & 1\\
    1 & 0 & 1\\
    1 & 1 & 0\\
    1 & 1 & 1\\
    \end{array}\right]
\]}

Now, observe that the hash code of row 1 is just the xor of the hash
codes of rows 2--4, $h(1)=h(2)\oplus h_C(3)\oplus h_C(4)$. Indeed, the
codes of the one characters in each column cancel out, leaving us with
an xor of the zeros in each column. We claim that row 5 is independent
of rows 2--4. This immediately implies that row 5 is independent of
all others, since row 1 is just a function of rows 2--4.

Independence of row 5 from rows 2--4 follows by peeling. Each of rows
2, 3, and 4 have a position character not present in 5, so they are
independent of 5. This completes the proof of Theorem
\ref{thm:5th-indep}.

\subsection{Linear Probing with Fourth Moment Bounds}\label{sec:fourth-lp}
As in Section \ref{sec:linprobe} we study linear probing with $n$ stored
keys in a table of size $m$, and a query $q$ not among the stored
keys. We define the fill $\alpha=n/m$ and $\eps=1-\alpha$.
Pagh et al.~\cite{pagh07linprobe} presented a proof
that with 5-independent hashing, the expected number probes is 
$O(1/\eps^{13/6})$. We will here improve this to
the optimal $O(1/\eps^2)$, which is optimal even for a fully random
hash function. For the case of smaller fill, where
$\alpha\leq 1/2$, Thorup \cite{thorup09linprobe} proved that the expected 
number of filled entries probes is $O(\alpha)$ which is optimal even for fully random functions.

As discussed in Section \ref{sec:linprobe}, our goal is to
study the length $L$ of the longest filled interval containing
a point $p$ which may depend on $h(q)$, e.g., $p=h(q)$.  
To bound the probability that an interval $I$ is
full, we study more generally the case how the number $X_I$ of keys hashed to 
$I$  deviates from the mean $\alpha|I|$: if $I$ is full,
the deviation is by more than $\eps\alpha|I|$.

As we mentioned earlier, as an initial step Pagh et al.~\cite{pagh07linprobe} 
proved that if we consider the number of keys $X_I$ in an
interval $I$ which may depend on the hash of a query key,
then we have the following 4\th unweighted moment bound:
\begin{equation}\label{eq:fourth-base}
\Pr[X_I\geq \Delta+\alpha|I|]=
O\left(\frac{\alpha|I|+(\alpha|I|)^2}{\Delta^4}\right)
\end{equation}
This is an unweighted version of \req{eq:moment-simple} so \req{eq:fourth-base}
holds both with 5-independent hashing and with simple tabulation hashing.

As with our simple tabulation hashing, for each $i$, we consider
the event $\cC_{i,\delta,p}$ that for some point $p$ that may depend
on the hash of the query key, there is some an interval $I\ni p$, $2^i\leq |I|<2^{i+1}$ with relative deviation $\delta$. In perfect generalization
of \req{eq:fourth-base}, we will show that \req{eq:fourth-base} implies
\begin{equation}\label{eq:fourth-C}
\Pr[\cC_{i,\delta,p}]=
O\left(\frac{\alpha2^i+(\alpha2^i)^2}{(\delta\alpha2^i)^4}\right)
\end{equation}
First consider the simple case where $\delta\geq 1$. 
We apply Claim \ref{claim:large-delta}. Since $\delta\geq 1$, we
get $j=i-3$. The event $\cC_{i,\delta,p}$ implies that one
of the $2^5+1$ relevant $j$-bins has $(1+\frac{\delta}{2})\alpha 2^j$ 
keys. By \req{eq:fourth-base}, the probability of this
event is bounded by 
\[(2^5+1)O\left(\frac{\alpha2^j+(\alpha2^j)^2}{(\frac{\delta}2\alpha2^j)^4}\right)
=O\left(\frac{\alpha2^i+(\alpha2^i)^2}{(\delta\alpha2^i)^4}\right).\]
This completes the proof of \req{eq:fourth-C} when $\delta\geq 1$. 

Now consider the case where $\delta\leq 1$. If
$\alpha2^i\leq 1$, \req{eq:fourth-C} does not give a probability bound below
$1$, so we can assume $\alpha2^i>1$. 
Then \req{eq:fourth-C} simplifies to 
\begin{equation}\label{eq:fourth-C-small-delta}
\Pr[\cC_{i,\delta,p}]=
O(1/(\delta^4(\alpha2^i)^2)
\end{equation}
This time
we will apply Claim \ref{claim:small-delta}. Recall that a $j$-bin is dangerous
for level $i$ if its absolute deviation 
is $\Delta_{j,i} ~=~ \tfrac{\delta\alpha2^i}{24} / 2^{(i-j)/5}$. 
We defined $j_0$ be the smallest non-negative integer satisfying 
$\Delta_{j_0,i} \le \alpha 2^{j_0}$. If
$\cC_{i,\delta,p}$ happens, then for some $j\in\{j_0,...,i\}$,
one of the $2^{i-j+2}+1$ relevant $j$-bins is
dangerous for level $i$. By \req{eq:fourth-base}, the probability
of this event is bounded by $\sum_{j=j_0}^{i} O(P_j)$ where
\[P_j=2^{i-j} \frac{\alpha2^j+(\alpha2^j)^2}{\Delta_{i,j}^4}=O\left(2^{i-j} \frac{\alpha2^j+(\alpha2^j)^2}{\left(\delta\alpha2^i / 2^{(i-j)/5}\right)^4}\right)\]
Let $j_1=\lceil \log_2(1/\alpha)\rceil$. Note that $j_1\leq i$. For $j\geq j_1$,
we have $\alpha2^j+(\alpha2^j)^2=O((\alpha2^j)^2)$, so
\[P_j=O\left(2^{i-j} \frac{(\alpha2^j)^2}{\left(\delta\alpha2^i / 2^{(i-j)/5}\right)^4}\right)=O\left(\frac1{\delta^4(\alpha2^i)^22^{\frac15(i-j)}}\right).\]
We see that for $j\geq j_1$, the bound decreases exponentially with
$j$, so 
\begin{equation}\label{eq:top}
\sum_{j=j_1}^i P_j=O\left(1/(\delta^4(\alpha2^i)^2)\right).
\end{equation}
This is the desired bound from \req{eq:fourth-C-small-delta} for 
$\Pr[\cC_{i,\delta,p}]$, so we
are done if $j_1\leq j_0$.  However, suppose $j_0<j_1$. By
definition, we have $\alpha 2^{j_1}\leq 1$ and $\Delta_{i,j_1}\leq 1$,
so 
\[P_{j_1}=2^{i-j_1} \frac{\alpha2^{j_1}+(\alpha2^{j_1})^2}{\Delta_{i,{j_1}}^4}=\Omega(1).\]
This means that there is nothing
to prove, for with \req{eq:top}, we conclude that 
$\left(1/(\delta^4(\alpha2^i)^2)\right)=\Omega(1)$. Therefore
\req{eq:fourth-C-small-delta} does not promise any probability below 1.
This completes the proof of \req{eq:fourth-C}. As in Theorem \ref{thm:linprobe},
we can consider the more general event $\cD_{\ell,\delta,p}$ that there
exists an interval $I$ containing $p$ and of length at least $\ell$
such that the number of keys $X_I$ in $I$ deviates at least $\delta$ 
from the mean. Then, as a perfect generalization of \req{eq:fourth-C}, we
get
\begin{equation}\label{eq:fourth-lp}
\Pr[\cD_{\ell,\delta,p}]=\sum_{i\geq\log_2 \ell}\cC_{i,\delta,p}=
\sum_{i\geq\log_2 \ell}O\left(\frac{\alpha2^i+(\alpha2^i)^2}{(\delta\alpha2^i)^4}\right)
=O\left(\frac{\alpha\ell+(\alpha\ell)^2}{(\delta\alpha\ell)^4}\right)
\end{equation}
In the case of linear probing with fill $\alpha=1-\eps$, we 
worry about filled intervals. 
Let $L$ be the
length of the longest full interval containing the hash of a query
key. For $\eps\leq 1/2$ and $\alpha\geq 1/2$, we use $\delta=\eps$ and 
\[\Pr[\cD_{\ell,\eps,p}]=O(1/(\ell^2\eps^4))\]
so
\[\E[L]\leq \sum_{\ell=1}^m \Pr[\cD_{\ell,\eps,p}]
=\sum_{\ell=1}^m \min\left\{1,
O(1/(\ell^2\eps^4))\right\}
=O(1/\eps^2)\textnormal,\]
improving the $O(1/\eps^{\frac{13}6})$ bound from \cite{pagh07linprobe}.
However, contrasting the bounds with simple tabulation, the concentration
from \req{eq:fourth-lp} does not work well for higher moments, e.g.,
the variance bound we get with $\eps=1/2$ is $O(\log n)$,
and for larger moment $p\geq 3$, we only get a bound of 
$O(n^p/(n^2))=O(n^{p-2})$.

Now consider $\alpha\leq 1/2$. We use $\delta=1/(2\alpha)$ noting
that $(1+\delta)\alpha|I| =(\alpha+1/2)|I|<|I|$. Then
\[\Pr[\cD_{\ell,1/(2\alpha),p}]=O\left(\frac{\alpha\ell+(\alpha\ell)^2}{\ell^4}\right)\]
In particular, as promised in \req{eq:non-empty}, we have 
\[\Pr[L>0]=\Pr[\cD_{1,1/(2\alpha),p}]=O(\alpha+\alpha^2)=O(\alpha)\] 
 More generally for the mean,  
\begin{eqnarray*}
\E[L]&\leq &\sum_{\ell=1}^m \Pr[\cD_{\ell,1/(2\alpha),p}]\\
&=&\sum_{\ell=1}^m O\left(\frac{\alpha\ell+(\alpha\ell)^2}{\ell^4}\right)\\
&=&O(\alpha).
\end{eqnarray*}
This reproves the bound from Thorup \cite{thorup09linprobe}.

\appendix

\section{Experimental Evaluation}
\label{sec:exp-hash}
In this section, we make some simple experiments comparing simple
tabulation with other hashing schemes, both on their own, and in
applications.  Most of our experiments are the same as those in
\cite{thorup10kwise} except that we here include simple tabulation
whose relevance was not realized in \cite{thorup10kwise}. We will also
consider Cuckoo hashing which was not considered in \cite{thorup10kwise}.

Recall the basic message of our paper that simple tabulation in
applications shares many of the strong mathematical properties
normally associated with an independence of at least 5. For example,
when used in linear probing, the expected number of probes is constant
for any set of input keys. With sufficiently random input, this
expected constant is obtained by any universal hashing scheme
\cite{mitzenmacher08hash}, but other simple schemes fail on simple
structured inputs like dense intervals or arithmetic progressions,
which could easily occur in practice \cite{thorup10kwise}.

Our experiments consider two issues:
\begin{itemize*}
\item How fast is simple tabulation compared with other realistic
  hashing schemes on random input? In this case, the quality of the
  hash function doesn't matter, and we are only comparing their speed.

\item What happens to the quality on structured input. We consider the
  case of dense intervals, and also the hypercube which we believe
  should be the worst input for simple tabulation since it involves
  the least amount of randomness.
\end{itemize*}

We will now briefly review the hashing schemes considered in our
experiments. The focus will be on the common cases of 32 and 64 bit
keys.  If the initial keys are much bigger, we can typically first
apply universal hashing to get down to a smaller domain, e.g.,
collision free down to a domain of size $n^2$. To achieve
expected $O(1)$ time for linear probing, it suffices to map
universally to a domain of just $O(n)$ \cite{thorup09linprobe}.

\subsection{Multiplication-shift Hashing}
The fastest known hashing schemes are based on a multiplication followed
by a shift.

\paragraph{Univ-mult-shift} 
If we are satisfied with plain universal hashing, then as
shown in \cite{dietzfel97closest}, we pick a random odd number $a$ from 
the same $\ell$-bit domain as the keys. If the desired output
is $\ell_{out}$-bit keys, we compute the universal hash function:
\[h_{a}(x)= (a\texttt{*}x)\texttt{>>} (\ell-\ell_{out}).\]
This expression should be interpreted according to the C programming
language.  In particular, \texttt{*} denotes standard computer
multiplication where the result is truncated to the same size as that
of its largest operand. Here this means multiplication modulo
$2^\ell$.  Also, \texttt{>>} is a right shift taking out least
significant bits.  Mathematically, this is integer division by
$2^{\ell-\ell_{out}}$. Note that this scheme is far from
2-independent, e.g., if two keys differ in only their least
significant bit, then so does their hash values. However, the scheme
is universal which suffices, say, for expected constant times in
chaining.

\paragraph{2-indep-mult-shift}
For 2-independent hashing, we use the scheme from
\cite{dietzfel96universal}. We pick a random $2\ell$-bit multiplier
$a$ (which does not need to be odd), and a $2\ell$ bit number $b$.
Now we compute:
\[h_{a,b}(x)=(a\texttt* x\texttt+ b)\texttt{>>} (2\ell-\ell_{out}).\]
This works fine with a single 64-bit multiplication when
$\ell=32$. For $\ell=64$, we would need to simulate $128$-bit
multiplication. In this case, we have a faster alternative used for
string hashing \cite{thorup09linprobe}, viewing the key $x$ as
consisting of two $32$-bit keys $x_1$ and $x_2$. For a 2-independent
$32$-bit output, we pick three random 64-bit numbers $a_1$ and $a_2$
and $b$, and compute
\[h_{a_1,a_2,b}(x_1x_2)=((a_1\texttt +x_2)\texttt * (a_2\texttt +x_1)\texttt+ b)
\texttt{>>}32.\]
Concatenating two such values, we get a 64-bit 2-independent hash value using
just two 64-bit multiplications. 

\subsection{Polynomial Hashing}

For general $k$-independent hashing, we have the classic
implementation of Carter and Wegman \cite{wegman81kwise} by a degree
$k-1$ polynomial over some prime field:
\begin{equation}\label{eq:mod-p}
h(x)= \left( \sum_{i=0}^{k-1} a_i x^i \bmod p \right) \bmod 2^{\ell_{out}}
\end{equation}
for some prime $p\gg 2^{\ell_{out}}$ with each $a_i$ picked randomly
from $[p]$. If $p$ is an arbitrary prime, this method is fairly slow
because `$\bmod\;p$' is slow. However, Carter and Wegman
\cite{carter77universal} pointed out that we can get a fast
implementation using shifts and bitwise Boolean operations if $p$ is
a so-called Mersenne prime of the form $2^i-1$.

\paragraph{5-indep-Mersenne-prime} We use the above scheme for
5-independent hashing. For $32$-bit keys, we use $p=2^{61}-1$, and for
$64$-bit keys, we use $p=2^{89}-1$.

For the practical implementation, recall that standard 64-bit
multiplication on computers discards overflow beyond the 64 bits. For
example, this implies that we may need four 64-bit multiplications
just to implement a full multiplication of two numbers from
$[2^{61}-1]$. This is why specialized 2-independent schemes are much
faster. Unfortunately, we do not know a practical generalization for
higher independence.

\subsection{Tabulation-Based Hashing}

The basic idea in tabulation based schemes is to replace multiplications
with lookups in tables that are small enough to fit in fast memory.

\paragraph{simple-table} Simple tabulation is the basic example of
tabulation based hashing.  A key $x=x_1\cdots x_c$ is divided into $c$
characters. For $i=1 \twodots c$, we have a table $T_i$ providing a
random value $T_i[x_i]$ with a random value, and then we just return
the xor of all the $T_i[x_i]$.  Since the tables are small, it is easy
to populate them with random data (e.g.~based on atmospheric noise
\url{http://random.org}).  Simple tabulation is only 3-independent.

We are free to chose the size of the character domain,
e.g., we could use 16-bit characters instead of 8-bit characters, but
then the tables would not fit in the fast L1 cache.  The experiments
from \cite{thorup10kwise} indicate that 8-bit characters give much
better performance, and that is what we use here.

\paragraph{5-indep-TZ-table} 
To get higher independence, we can compute some additional ``derived
characters'' and use them to index into new tables, like the regular
characters.  Thorup and Zhang \cite{thorup04kwise, thorup10kwise}
presented a fast such scheme for 5-independent hashing. With $c=2$
characters, they simply use the derived character $x_1+x_2$. For
$c>2$, this generalizes with $c-1$ derived characters and a total of
$2c-1$ lookups for 5-independent hashing. The scheme is rather
complicated to implement, but runs well.

\subsection{Hashing in Isolation}
Our first goal is to time the different hashing schemes when run in
isolation. We want to know how simple tabulation compares in speed to
the fast multiplication-shift schemes and to the 5-independent schemes
whose qualities it shares. We compile and run the same C code on two
different computers:
\begin{description}
\item[32-bit computer:] Single-core Intel Xeon 3.2~GHz $32$-bit
  processor with 2048KB cache, 32-bit addresses and libraries.
\item[64-bit computer:] Dual-core 
  Intel Xeon 2.6~GHz $64$-bit processor with 4096KB cache, 64-bit
addresses and libraries.
\end{description}
Table \ref{results} presents the timings for the different
hashing schemes, first mapping 32-bit keys to 32-bit values, second mapping
64-bit keys to 64-bit values. 
\begin{table}[tb]
\begin{center}
\begin{tabular}{|c|l|r|r|}\hline
\multicolumn{2}{|c|}{Hashing random keys} & 32-bit computer & 64-bit computer\\ \hline
bits & hashing scheme &\multicolumn{2}{|c|}{hashing time (ns)}\\\hline
32 & univ-mult-shift  & 1.87 & 2.33\\
32 & 2-indep-mult-shift & 5.78 & 2.88\\
32 & 5-indep-Mersenne-prime &99.70& 45.06\\
32 & 5-indep-TZ-table & 10.12&12.66\\
32 & simple-table & 4.98 & 4.61\\\hline
64 & univ-mult-shift  & 7.05 & 3.14\\
64 & 2-indep-mult-shift & 22.91 & 5.90 \\
64 & 5-indep-Mersenne-prime &241.99& 68.67 \\
64 & 5-indep-TZ-table & 75.81 & 59.84 \\
64 & simple-table & 15.54 & 11.40\\\hline
\end{tabular}
\end{center}
\caption{Average time per hash computation for 10 million hash
  computations.}
\label{results}
\end{table}
Not surprisingly, we
see that the 64-bit computer benefits more than the 32-bit computer when 64-bit multiplications is critical; namely in univ-mult-shift for 
64 bits, 2-indep-mult-shift, and 5-indep-Mersenne-prime. 

As mentioned, the essential difference between our experiments and
those in \cite{thorup10kwise} is that simple tabulation is included,
and our interest here is how it performs relative to the other
schemes. In the case of 32-bits keys, we see that in both computers,
the performance of simple tabulation is similar to
2-indep-mult-shift. Also, not surprisingly, we see that it is more
than twice as fast as the much more complicated 5-indep-TZ-table.

When we go to 64-bits, it may be a bit surprising that simple
tabulation becomes more than twice as slow, for we do exactly twice as
many look-ups.  However, the space is quadrupled with twice as many
tables, each with twice as big entries, moving up from 1KB to 8KB, so
the number of cache misses may increase.

Comparing simple tabulation with the 2-indep-mult-shift, we see that
it is faster on the 32-bit computer and less than twice as slow on the
64-bit computer. We thus view it as competitive in speed. 

The competitiveness of simple tabulation compared with
multiplication-shift based methods agrees with the 
experiments of Thorup \cite{thorup00universal} from more than 10 years
ago on older computer architectures. The experiments from
 \cite{thorup00universal}  did not include schemes of higher independence.

The competitiveness of our cache based simple tabulation with
multiplication-shift based methods is to be expected both now and in
the future. One can always imagine that multiplication becomes faster
than multiplication, and vice versa. However, most data processing
involves frequent cache and memory access. Therefore, even if it was
technically possible, it would normally wasteful to
configure a computer with much faster multiplication than
cache. Conversely, however, there is lot of data processing that
does not use multiplication, so it is easier to imagine
real computers configured with faster cache than multiplication.

Concerning hardware, we note that simple tabulation is ideally
suited for parallel lookups of the characters of a key.
Also, the random data in the character tables
are only changed rarely in connection with a rehash. Otherwise
we only read the tables, which means that we could
potentially have them stored in simpler and faster EEPROM or flash
memory. This would also avoid conflicts with other applications in
cache.

\subsection{Linear Probing}
We now consider what happens when we use the different hashing schemes
with linear probing. In this case, the hash function needs good random
properties are need for good performance on worst-case input. We
consider $2^{20}$ 32-bit keys in a table with $2^{21}$ entries. The
table therefore uses 8MB space, which does not fit in the cache of
either computer, so there will be competition for the cache. Each
experiment averaged the update time over 10 million insert/delete
cycles. For each input type, we ran 100 such experiments on the same
input but with different random seeds for the hash functions.

\paragraph{Random input}
First we consider the case of a random input, where the randomization
properties of the hash function are irrelevant. This means that the
focus is on speed just like when we ran the experiments in
isolation. Essentially, the cost per update should be that of
computing the hash value, as in Table \ref{results}, plus a common
additive cost: a random access to look up the hash entry plus a number
of sequential probes. The average number of probes per update was
tightly concentrated around 3.28 for all schemes, deviating by less than
0.02 over the 100 experiments.

An interesting new issue is that the different schemes 
now have to compete with the linear probing table for the cache. 
In particular, this could hurt the tabulation based schemes. Another
issue is that when schemes are integrated with an application, the
optimizing compiler may have many more opportunities for pipelining etc.
The results for random input are presented in Table \ref{lin-probe-results}.
Within the 100 experiments, the deviation for each data point was
less than 1\%, and here we just present the mean.
\begin{table}[tb]
\begin{center}
\begin{tabular}{|l|r|r|}\hline
Linear probing with random keys & 32-bit computer & 64-bit computer\\ \hline
hashing scheme&\multicolumn{2}{|c|}{update time (nanoseconds)}\\\hline
univ-mult-shift  & 141 & 149\\
2-indep-mult-shift & 151 & 157\\
5-indep-Mersenne-prime &289& 245\\
5-indep-TZ-table & 177 &211\\
simple-table & 149 & 166\\\hline
\end{tabular}
\end{center}
\caption{Linear probing with random 32-bit keys. The time
  is averaged over 10 million updates to set with 1 million keys in linear
  probing table with 2 million entries.}
\label{lin-probe-results}
\end{table}

Compared with Table \ref{results}, we see that our 32-bit computer
performs better than the 64-bit computer on linear probing.  In Table
\ref{results} we had that the 64-bit processor was twice as fast at
the hash computations based on 64-bit multiplication, but in Table
\ref{lin-probe-results}, when combined with linear probing, we see
that it is only faster in the most extreme case of
5-indep-Mersenne-prime. One of the more surprising outcomes is that
5-indep-Mersenne-prime is so slow compared with the tabulation based
schemes on the 64-bit computer. We had expected the tabulation based
schemes to take a hit from cache competition, but the effect appears
to be minor.

The basic outcome is that simple tabulation in linear probing with random
input is competitive with the fast multiplication-shift based scheme and
about 20\% faster than the fastest 5-independent scheme (which is much
more complicated to implement). We note that we cannot hope for a 
big multiplicative gain in this case, since the cost is dominated by
the common additive cost from working the linear probing table itself.

\paragraph{Structured input}
We now consider the case where the input keys are structured in the
sense of being drawn in random order from a dense interval: a commonly
occurring case in practice which is known to cause unreliable
performance for most simple hashing schemes \cite{pagh07linprobe,
  patrascu10kwise-lb,thorup10kwise}. The results are shown in Figure
\ref{fig:perm}. For each hashing scheme, we present the average number
of probes for each of the 100 experiments as a cumulative distribution
function. We see that simple tabulation and the 5-independent schemes
remain tightly concentrated while the multiplication-shift schemes
have significant variance, as observed also in
\cite{thorup10kwise}. This behavior is repeated in the timings on the
two computers, but shifted due to difference in speed for the hash
computations.
\begin{figure}[hptb]
\centering  
\subfigure[Probe]{
  \includegraphics[width=\figurewidthA]{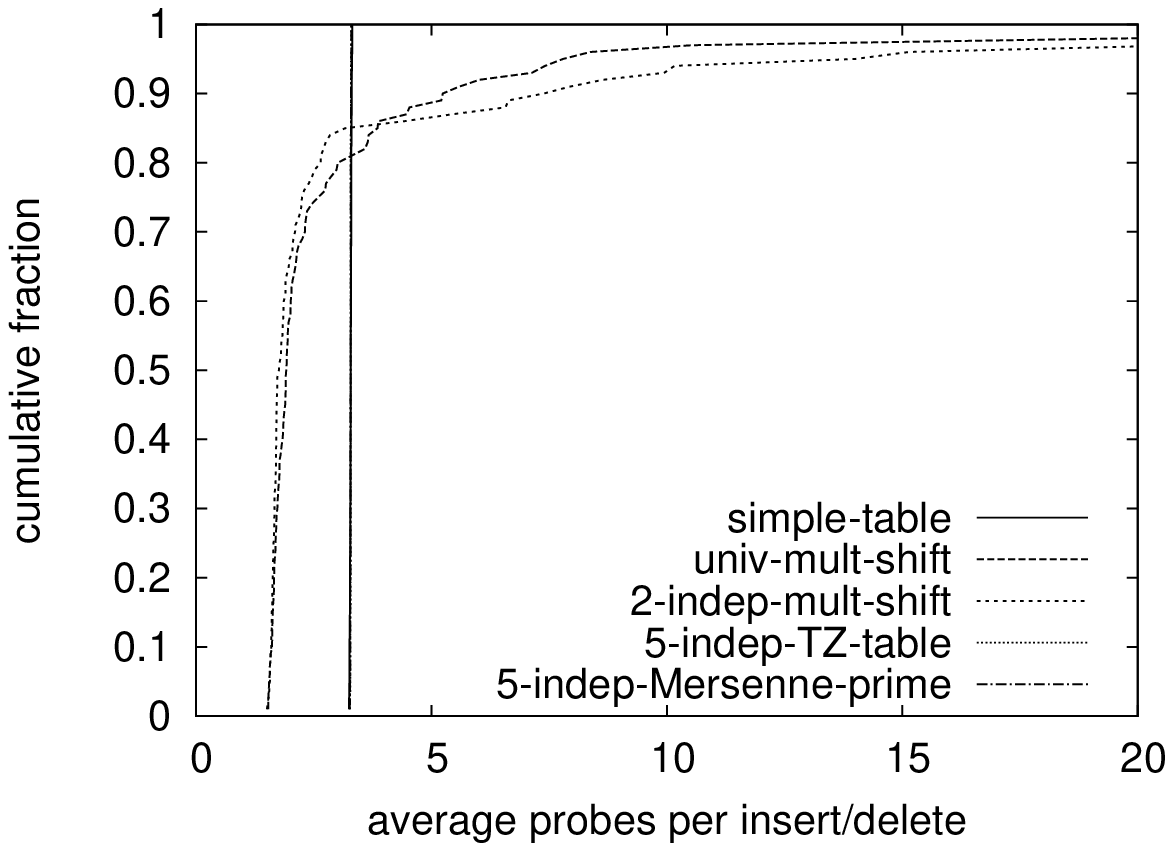}}\\
\subfigure[Time 32-bit computer]{
  \includegraphics[width=\figurewidthA]{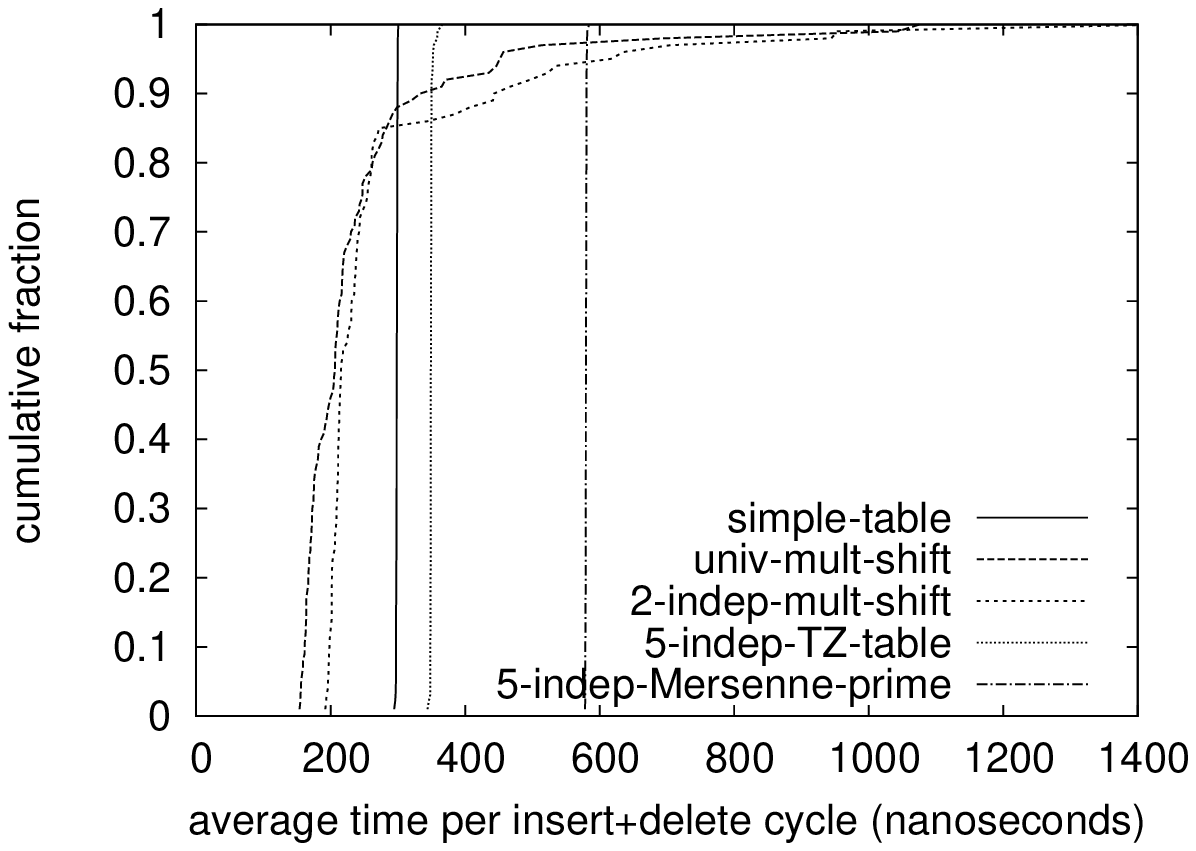}}
\subfigure[Time 64-bit computer]{
  \includegraphics[width=\figurewidthA]{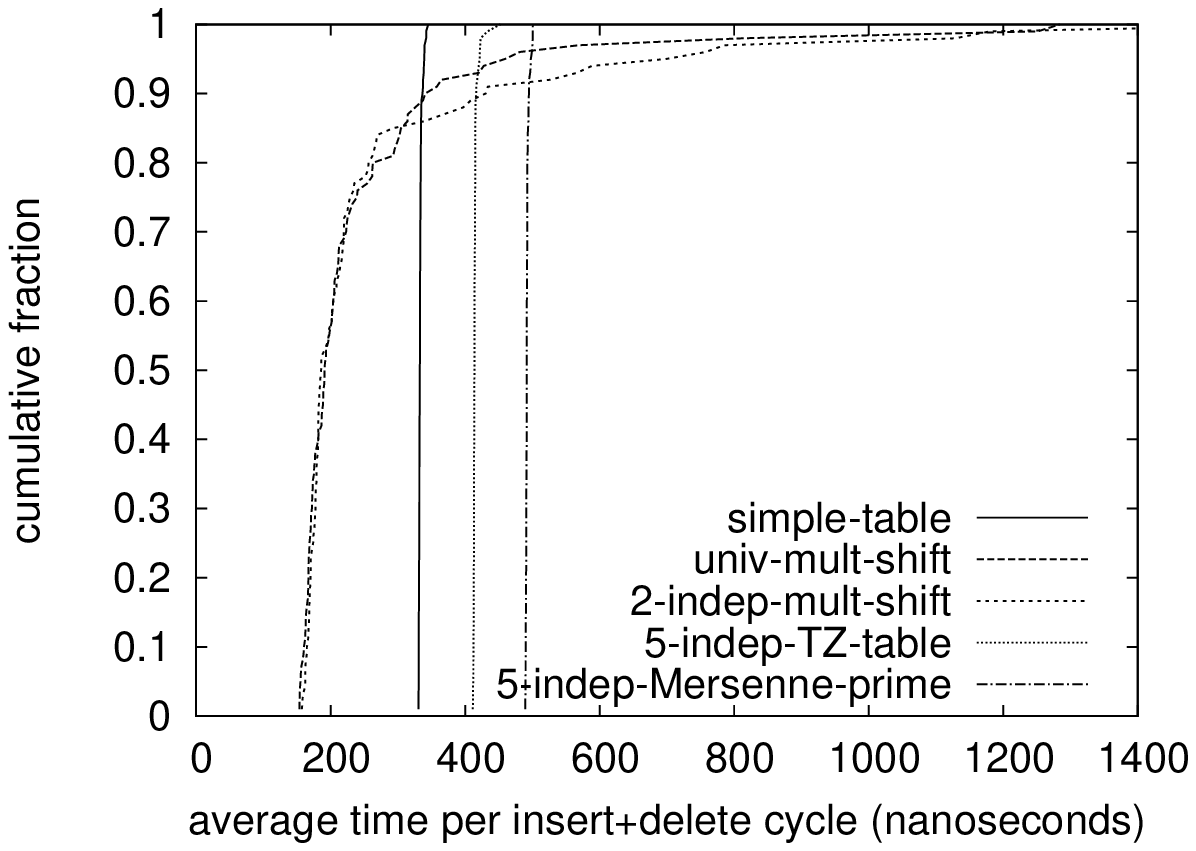}}
\caption{Keys from dense interval. The multiplication-shift schemes
sometimes use far more probes, which also shows in much longer running
times.} \label{fig:perm}
\end{figure}

Thus, among simple fast hashing schemes, simple tabulation stands out
in not failing on a dense interval. Of course, it might be that simple
tabulation had a different worst-case input. A plausible guess is that
the worst instance of simple tabulation is the hypercube, which
minimizes the amount of random table entries used. In our case, for
$2^{20}$ keys, we experimented with the set $[32]^4$, i.e., we only
use $32$ values for each of the $4$ characters. The results for the
number of probes are presented in Figure \ref{fig:hcube}.
\begin{figure}[hptb]
\centering
  \includegraphics[width=\figurewidthA]{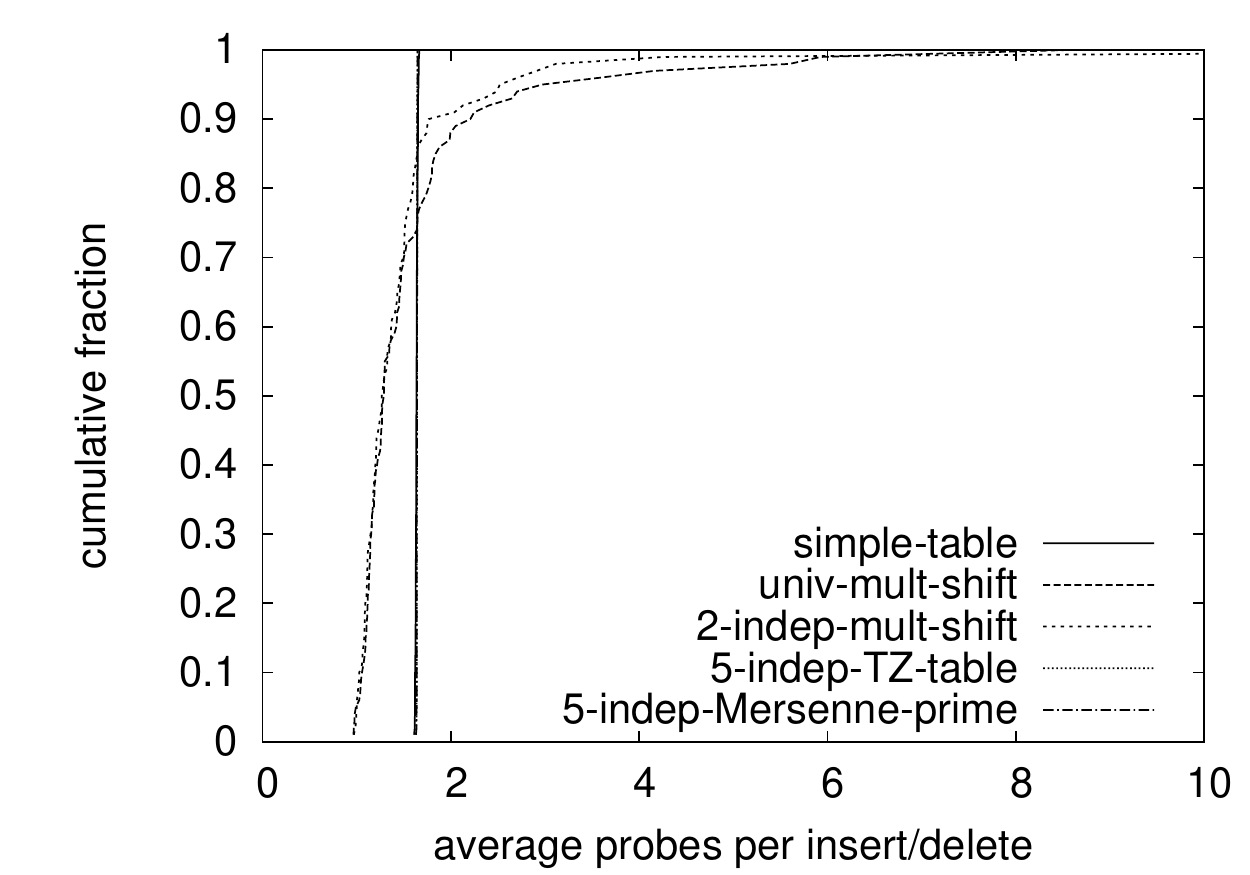}
\caption{Keys from hyper cube.}\label{fig:hcube}
\end{figure}

Thus, simple tabulation remains extremely robust and tightly
concentrated, but once again the multiplication-shift schemes fail
(this time more often but less badly). The theoretical explanation
from \cite{patrascu10kwise-lb} is that multiplication-shift fails on
arithmetic sequences, and in the hypercube we have many different but
shorter arithmetic sequences. It should be said that although it
cannot be seen in the plots, the structured inputs did lead to more
deviation in probes for simple tabulation: the deviation from 3.28
over 100 independent runs grew from below 0.5\% with random input to
almost 1\% with any of the structured input.

Obviously, no experiment can confirm that simple tabulation is robust
for all possible inputs. Our theoretical analysis implies strong
concentration, e.g., in the sense of constant variance, yet the hidden
constants are large.  Our experiments suggest that the true constants
are very reasonable.

\paragraph{Cuckoo hashing}
Our results show that the failure probability in constructing a cuckoo
hashing table is $O(n^{-1/3})$. A pertinent question is whether the
constants hidden by the $O$-notation are too high from a practical
point of view. Experiments cannot conclusively prove that this
constant is always small, since we do not know the worst
instance. However, as for linear probing, a plausible guess that the instance eliciting
the worst behavior is a hypercube: $S = A^c$, for $A \subset
\Sigma$. We made $10^5$ independent runs with the following input
instances:
\begin{description*}
\item[32-bit keys:] Tabulation uses $c=4$ characters. We set $A =
  [32]$, giving $32^4 = 2^{20}$ keys in $S$. The empirical success
  probability was $99.4\%$.

\item[64-bit keys:] Tabulation uses $c=8$ characters. We set $A=8$,
  giving $8^8 = 2^{24}$ keys in $S$. The empirical success probability
  was $97.1\%$.
\end{description*}

\noindent
These experiments justify the belief that our scheme is effective in
practice. 

It has already been shown conclusively that weaker multiplication
schemes do not perform well. Dietzfelbinger and Schellbach
\cite{dietzfel09cuckoo} show analytically that, when $S$ is chosen
uniformly at random from the universe $[n^{12/11}]$ or smaller, cuckoo
hashing with 2-independent multiplicative hashing fails with
probability $1-o(1)$. This is borne out in the experiments of
\cite{dietzfel09cuckoo}, which give failure probability close to 1 for
random sets that are dense in the universe. On the other hand, the
more complicated tabulation hashing of Thorup and Zhang
\cite{thorup04kwise} will perform at least as well as simple
tabulation (that algorithm is a superset of simple tabulation).

A notable competitor to simple tabulation is a tailor-made tabulation
hash function analyzed by Dietzfelbinger and Woelfel
\cite{dietzfel03tabhash}. This function uses two arrays of size $r$
and four $d$-independent hash functions to obtain failure probability
$n/r^{d/2}$.

Let us analyze the parameters needed in a practical implementation.
If we want the same space as in simple tabulation, we can set
$r=2^{10}$ (this is larger than $\Sigma=256$, because fewer tables are
needed). For a nontrivial failure probability with sets of $2^{20}$
keys, this would require 6-independence. In principle, the
tabulation-based scheme of \cite{thorup04kwise} can support
6-independence with $5c-4$ tables (and lookups). This scheme has not
been implemented yet, but based on combinatorial complexity is
expected to be at least twice as slow as the 5-independent scheme
tested in Table~\ref{results} (i.e.~4-8 times slower than simple
tabulation). Alternatively, we can compute four 6-independent hash
functions using two polynomials of degree 5 on 64-bit values
(e.g.~modulo $2^{61}-1$). Based on Table~\ref{results}, this would be
two orders of magnitude slower than simple tabulation. With any of the
alternatives, the tailor-made scheme is much more complicated to
implement.

\section{Chernoff Bounds with Fixed Means}\label{sec:chernoff-appendix}

We will here formally establish that the standard Chernoff bounds
hold if when each variable have a fixed mean even if the 
of the variables are not independent.
Below shall use the notation that if we have variables $x_1,x_2,...$, then
$x_{<i}=\{x_j\}_{j<i}$. In particular, $\sum x_{<i}=\sum_{j<i} x_j$.

\begin{proposition}\label{prop:chernoff-fix-mean}
Consider $n$ possibly dependent random variables
$X_1, X_2, \dots, X_n\in [0,1]$. Suppose for
each $i$ that $\E[X_i]=\mu_i$ is fixed no matter
the values of $X_1,...,X_{i-1}$, that is, for any values
$x_1,...,x_{i-1}$, $\E[X_i|X_{<i}=x_{<i}]=\mu_i$.
Let $X=\sum_i X_i$ and $\mu=\E[X]=\sum_i \mu_i$. Then
for any $\delta>0$, the bounds are:
\[\Pr[X\ge (1+\delta)\mu] \le \left(
   \frac{e^\delta}{(1+\delta)^{(1+\delta)}} \right)^{\mu}
\qquad\qquad
\Pr[X\le (1-\delta)\mu] \le \left(
   \frac{e^{-\delta}} {(1-\delta)^{(1-\delta)}} \right)^{\mu}
\]
\end{proposition}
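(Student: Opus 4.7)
The plan is to mimic the standard Chernoff--Hoeffding argument via the exponential moment method, replacing the use of independence by iterated conditional expectations. The key observation is that the textbook proof only uses independence to factor $\E[e^{tX}]$ into $\prod_i \E[e^{tX_i}]$; with fixed conditional means we can recover the same factorized upper bound by conditioning one variable at a time.

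First I would fix $t>0$ and apply Markov's inequality to $e^{tX}$ to get $\Pr[X \ge (1+\delta)\mu] \le \E[e^{tX}] / e^{t(1+\delta)\mu}$. The core step is then to show
\[
\E[e^{tX}] \;\le\; \prod_{i=1}^n e^{\mu_i(e^t-1)} \;=\; e^{\mu(e^t-1)}.
\]
To prove this, I use the pointwise inequality $e^{tx} \le 1 + x(e^t-1)$ valid for all $x \in [0,1]$ (by convexity of $e^{tx}$ on $[0,1]$, since the right-hand side is its chord between $x=0$ and $x=1$). Conditioning on $X_{<n}$ and using the hypothesis $\E[X_n \mid X_{<n}]=\mu_n$, this yields
\[
\E\!\bigl[e^{tX_n} \,\big|\, X_{<n}\bigr] \;\le\; 1 + \mu_n(e^t-1) \;\le\; e^{\mu_n(e^t-1)},
\]
and this bound is a constant (not a random variable), so pulling the factor $e^{tX_{<n}} := e^{t\sum_{j<n}X_j}$ out of the conditional expectation gives
\[
\E\!\bigl[e^{tX}\bigr] \;=\; \E\!\bigl[e^{tX_{<n}} \cdot \E[e^{tX_n}\mid X_{<n}]\bigr] \;\le\; e^{\mu_n(e^t-1)} \cdot \E\!\bigl[e^{tX_{<n}}\bigr].
\]
Iterating this step (or doing a short induction on $n$) peels off one variable at a time and delivers the promised product bound.

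Once the bound $\E[e^{tX}] \le e^{\mu(e^t-1)}$ is established, the rest is identical to the textbook derivation: for the upper tail, substitute $t = \ln(1+\delta)$ to obtain $\Pr[X \ge (1+\delta)\mu] \le \bigl(e^\delta/(1+\delta)^{1+\delta}\bigr)^\mu$; for the lower tail, apply Markov to $e^{-tX}$ with $t = -\ln(1-\delta)$ and use the analogous inequality $\E[e^{-tX_i}\mid X_{<i}] \le e^{\mu_i(e^{-t}-1)}$ to obtain $\Pr[X \le (1-\delta)\mu] \le \bigl(e^{-\delta}/(1-\delta)^{1-\delta}\bigr)^\mu$.

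The main obstacle (really the only subtle point) is ensuring that the inequality $\E[e^{tX_i}\mid X_{<i}] \le e^{\mu_i(e^t-1)}$ holds with a \emph{deterministic} right-hand side, so that it can be factored out of the outer expectation. This is precisely where the hypothesis that $\mu_i$ does not depend on $X_{<i}$ is used; without it, one would be left with a random upper bound and the telescoping argument would break down.
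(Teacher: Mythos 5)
Your proposal is correct and follows essentially the same route as the paper's proof: the paper works directly with $\E[(1+\delta)^X]$, which is exactly your $\E[e^{tX}]$ at the optimizing choice $t=\ln(1+\delta)$, and it uses the same convexity-chord bound on $[0,1]$, the same one-variable-at-a-time conditioning made possible by the deterministic conditional means, and the same final Markov step. No gaps.
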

\begin{proof}
The proof is a simple generalization over the standard proof when
the $X_i$ are independent.
We wish to bound the probability of $X\ge (1+\delta)\mu$.
To do this we will prove that
\[\E[(1+\delta)^X]\leq e^\mu.\]
The proof will be by induction on $n$. Let 
\begin{eqnarray*}
\E[(1+\delta)^X]&=&\sum_{ x_{<n}}\left\langle
\Pr[ X_{<n}= x_{<n}]\times\E\left[(1+\delta)^X\,|\,
X_{<n}=x_{<n}\right]\right\rangle\\
&=&\sum_{x_{<n}}\left\langle
\Pr[X_{<n}=x_{<n}]\times (1+\delta)^{\sum x_{<n}}\times\E\left[(1+\delta)^{X_n}\,|\,
 X_{<n}= x_{<n}\right]\right\rangle.\\
\end{eqnarray*}
Now, for any random variable $Y\in[0,1]$, by convexity,
\[\E\left[(1+\delta)^Y\right]\leq \E[Y](1+\delta)+1-\E[Y]=
1+\delta\E[Y]\leq e^{\delta \E[Y]}\]
Therefore, since $E[X_n |  X_{<n}= x_{<n}]=\mu_n$ for any value $ x_{<n}$ of $ X_{<n}$,
\[\E\left[(1+\delta)^{X_n}\,|\, X_{<n}= x_{<n}\right]\leq
e^{\delta \mu_n}.\]
Thus
\begin{eqnarray*}
\E[(1+\delta)^X]&=&\sum_{x_{<n}}\left\langle
\Pr[ X_{<n}= x_{<n}]\times (1+\delta)^{\sum x_{<n}}\times\E\left[(1+\delta)^{X_n}\,|\,
 X_{<n}= x_{<n}\right]\right\rangle.
\\
&\leq &\sum_{x_{<n}}\left\langle
\Pr[X_{<n}=x_{<n}]\times (1+\delta)^{\sum x_{<n}}\times e^{\delta \mu_n}\right\rangle\\
&=&\E\left[(1+\delta)^{\sum X_{<n}}\right]\times e^{\delta \mu_n}\\
&\leq&e^{\delta\sum \mu_{<n}}\times e^{\delta \mu_n}=e^{\delta\mu}.
\end{eqnarray*}
The last inequality followed by induction. Finally, by Markov's inequality,
we conclude that
\[\Pr[X\ge (1+\delta)\mu] \leq \frac{\E\left[(1+\delta)^X\right]}{(1+\delta)^{(1+\delta)\mu}} \leq \frac{e^{\delta\mu}}{(1+\delta)^{(1+\delta)\mu}}
=\left(
   \frac{e^\delta}{(1+\delta)^{(1+\delta)}} \right)^{\mu}.\]
The case $X\leq (1-\delta)\mu$ follows by a symmetric argument.
\end{proof}

{
\bibliographystyle{alpha} 
\bibliography{../../general}
}

\end{document}